\newcommand{\np}{{\em NP}\xspace} 
\newcommand{\nphard}{\np-hard\xspace}
\newcommand{\npcomplete}{\np-complete\xspace} 
\newcommand{\apx}{{\em APX}\xspace}
\newtheorem{theorem}{Theorem}[section] 
\newtheorem{lemma}[theorem]{Lemma}
\newtheorem{claim}[theorem]{Claim}
\newtheorem{corollary}[theorem]{Corollary}
\newtheorem{fact}[theorem]{Fact}
\newtheorem{algorithm}{Algorithm}
\newtheorem{definition}[theorem]{Definition}}
\def\blksquare{\rule{2mm}{2mm}} 
\def\qedsymbol{\blksquare}
\newcommand{\bg}[1]{\medskip\noindent{\bf #1}}
\newcommand{\ed}{{\hfill\qedsymbol}\medskip} 
\newenvironment{proof}{\bg{Proof : }}{\ed}
\newenvironment{proofof}[1]{\bg{Proof of #1 : }}{\ed}
\newenvironment{labellist}[1][A]
{\begin{list}{{#1}\arabic{enumi}.}{\usecounter{enumi}\addtolength{\leftmargin}{-1.5ex}}}
{\end{list}}
\newcommand{\Z}{\ensuremath{\mathbb Z}}
\newcommand{\C}{\ensuremath{\mathcal{C}}}
\newcommand{\Oc}{\ensuremath{\mathcal O}}
\newcommand{\Sc}{\ensuremath{\mathcal S}} 
\newcommand{\Pc}{\ensuremath{\mathcal P}}
\newcommand{\Qc}{\ensuremath{\mathcal Q}} 
\newcommand{\OPT}{\ensuremath{\mathit{OPT}}}
\newcommand{\OPTR}{\ensuremath{\mathit{OPT}_R}}
\newcommand{\lp}{\ensuremath{\text{LP}}}
\newcommand{\optrmax}{\ensuremath{R_{\lp}}}
\newcommand{\frall}{\ensuremath{\text{ for all }}} 
\newcommand{\sm}{\ensuremath{\setminus}} 
\newcommand{\es}{\ensuremath{\emptyset}}
\newcommand{\ceil}[1]{\ensuremath{\left\lceil#1\right\rceil}}
\newcommand{\floor}[1]{\ensuremath{\left\lfloor#1\right\rfloor}}
\newcommand{\poly}{\operatorname{\mathsf{poly}}}
\newcommand{\e}{\ensuremath{\epsilon}} 
\newcommand{\gm}{\ensuremath{\gamma}}
\newcommand{\sse}{\subseteq}
\newcommand{\mc}{\mathcal} 
\newcommand{\supp}{{\rm supp}}
\def\br#1{{{(#1)}}}
\newcommand{\nvrp}{\textsf{VRP}\xspace}
\newcommand{\vrp}{{\small\textsf{VRP}}\xspace}
\newcommand{\ndvrp}{\textsf{DVRP}\xspace}
\newcommand{\fdvrp}{{\footnotesize\textsf{DVRP}}\xspace}
\newcommand{\dvrp}{{\small\textsf{DVRP}}\xspace}
\newcommand{\nrvrp}{\textsf{RVRP}\xspace}
\newcommand{\frvrp}{{\footnotesize\textsf{RVRP}}\xspace}
\newcommand{\rvrp}{{\small\textsf{RVRP}}\xspace}
\newcommand{\mrvrp}{\ensuremath{\mathsf{RVRP}}\xspace}
\newcommand{\nkvrp}{$k$\textsf{RVRP}\xspace}
\newcommand{\kvrp}{$k${\small\textsf{RVRP}}\xspace}
\newcommand{\ktsp}{$k${\small\textsf{TSP}}\xspace}
\newcommand{\tsp}{{\small\textsf{TSP}}\xspace} 
\newcommand{\atsp}{{\small\textsf{ATSP}}\xspace} 
\newcommand{\matsp}{\ensuremath{\mathsf{ATSP}}}
\newcommand{\katspp}{$k${\small\textsf{ATSPP}}\xspace} 
\newcommand{\orient}{orienteering\xspace}
\newcommand{\morient}{\ensuremath{\mathsf{orient}}}
\newcommand{\mep}{{\small\textsf{MEP}}\xspace} 
\newcommand{\mmep}{\ensuremath{\mathsf{MEP}}}
\newcommand{\multi}{{multi-pair}\xspace}
\newcommand{\iopt}{\ensuremath{O^*}} 
\newcommand{\ioptr}{\ensuremath{R^*}} 
\newcommand{\into}{\ensuremath{\mathrm{in}}}
\newcommand{\comp}{\ensuremath{\Sc}}
\newcommand{\red}{\ensuremath{\mathsf{red}}}
\newcommand{\dist}{\ensuremath{D}}
\newcommand{\reg}{\ensuremath{\mathsf{reg}}}
\newcommand{\al}{\ensuremath{\alpha}}
\newcommand{\dt}{\ensuremath{\delta}}
\newcommand{\w}{\ensuremath{\omega}}
\newcommand{\tz}{\ensuremath{\tilde z}} 
\newcommand{\tP}{\ensuremath{\tilde P}} 
\newcommand{\hP}{\ensuremath{\hat P}}
\newcommand{\hz}{\ensuremath{\hat z}}
\newcommand{\Time}{\ensuremath{\mathsf{T}}}
\newcommand{\TS}{\ensuremath{\mathsf{TS}}}
\newcommand{\optdvlp}{\ensuremath{\OPT_{\text{\ref{dvrplp}}}}}
\title{Approximation Algorithms for Regret-Bounded Vehicle Routing and Applications to
Distance-Constrained Vehicle Routing}
\author{ 
    Zachary Friggstad\thanks{{\tt \{zfriggstad,cswamy\}@math.uwaterloo.ca}.  
    Dept. of Combinatorics and
    Optimization, Univ. Waterloo, Waterloo, ON N2L 3G1.  Supported in part by NSERC
    grant 327620-09.  The second author is also supported by an NSERC Discovery
    Accelerator Supplement Award and an Ontario Early Researcher Award.}  
\and
\addtocounter{footnote}{-1} 
    Chaitanya Swamy\footnotemark 
}
\date{}
\begin{document}

\maketitle 

\begin{abstract}
We consider vehicle-routing problems ({\footnotesize \nvrp}s) that incorporate the notion 
of {\em regret} of a client, which is a measure of the waiting time of a client relative
to its shortest-path distance from the depot.
Formally, we consider both the additive and multiplicative versions of, what we call, the 
{\em regret-bounded vehicle routing problem} (\frvrp).  
In these problems, we are given an undirected complete graph $G=(\{r\}\cup V,E)$ on $n$
nodes with a distinguished root (depot) node $r$, edge costs $\{c_{uv}\}$ that form a
metric, and a regret bound $R$. 
Given a path $P$ rooted at $r$ and a node $v\in P$, let $c_P(v)$ be the distance from $r$
to $v$ along $P$.
The goal is to find the fewest number of paths rooted at $r$ that cover all the nodes so
that for every node $v$ covered by (say) path $P$: 
(i) its additive regret $c_P(v)-c_{rv}$, with respect to $P$ is at most $R$ in 
{\em additive-\frvrp}; or  
(ii) its multiplicative regret, $c_P(c)/c_{rv}$, with respect to $P$ is at most $R$ in
{\em multiplicative-\frvrp}.

Our main result is the {\em first} constant-factor approximation algorithm for
additive-\frvrp. This is a substantial improvement over the previous-best 
$O(\log n)$-approximation.  
Additive-\frvrp turns out be a rather central vehicle-routing problem, whose 
study reveals insights into a variety of other regret-related problems as well as the 
classical {\em distance-constrained {\footnotesize \nvrp}} ({\fdvrp}),
enabling us to obtain guarantees for these various problems by leveraging our algorithm 
for additive-\frvrp and the underlying techniques. 
We obtain approximation ratios of $O\bigl(\log(\frac{R}{R-1})\bigr)$ for 
multiplicative-\frvrp, and 
$O\bigl(\min\bigl\{\OPT,\frac{\log D}{\log\log D}\bigr\}\bigr)$ for 
\fdvrp with distance bound $D$ via reductions to additive-\frvrp; the
latter improves upon the previous-best approximation for \fdvrp.

A noteworthy aspect of our results is that they are obtained by devising rounding
techniques for a natural {\em configuration-style LP}. This furthers 
our understanding of LP-relaxations for {\footnotesize \nvrp{}}s and enriches the toolkit 
of techniques that have been utilized for configuration LPs.   
\end{abstract}

\section{Introduction} \label{intro}
Vehicle-routing problems (\vrp{}s) constitute a broad class of combinatorial-optimization
problems that find a wide range of applications and have been widely studied in the
Operations Research and Computer Science communities (see, 
e.g.,~\cite{HaimovichK85,LiLD92,TothV02,CharikarR98,BlumCKLMM07,BansalBCM04,NagarajanR12,ChekuriKP12}
and the references therein). These problems are
typically described as follows. There are one or more vehicles 
that start at some depot and provide service to an underlying set of of clients, and the
goal is to design routes for the vehicles that visit the clients as quickly as possible.
The most common way of formalizing the objective of minimizing client delays is to seek
a route of minimum length, or equivalently, a route that minimizes the maximum client
delay, which gives rise to (the path variant of) the celebrated traveling salesman problem
(\tsp). However, this objective does not differentiate between clients located at
different distances from the depot, and 
a client closer to the depot
may end up incurring a larger delay than a client that is further away,
which can be considered a source of unfairness and hence, client dissatisfaction.
Adopting a client-centric approach, we consider an alternate objective that addresses
this unfairness and seeks to design routes that promote customer satisfaction. 

Noting that the delay of a client is inevitably at least the shortest-path
distance from the depot to the client location, following~\cite{SpadaBL05,ParkK10}, we
seek to ensure that the {\em regret} of a client, which is a measure of its waiting time
{\em relative to its shortest-path distance from the depot}, is bounded.  
More precisely, we consider the following genre of vehicle-routing problems. We
are given an undirected complete graph $G=(\{r\}\cup V,E)$ on $n$ nodes with a
distinguished root (depot) node $r$, and metric edge costs or distances $\{c_{uv}\}$. 
Given a path $P$ rooted at $r$ and a node $v\in P$, let $c_P(v)$ be the distance from $r$
to $v$ along $P$ (i.e., the length of the $r$-$v$ subpath of $P$).
There are two natural ways of comparing $c_P(v)$ and $c_{rv}$ to define the regret of a
node $v$ on path $P$. We define the {\em additive regret} of $v$ with respect to $P$ to be
$c_P(v)-c_{rv}$,% 
\footnote{The distinction between the delay and additive regret of a client is akin to the
distinction between the completion time and flow time of a job in scheduling problems.} 
and the {\em multiplicative regret} of $v$ with respect to $P$ to be $c_P(v)/c_{rv}$.
We are also given a regret bound $R$. Fixing a regret measure, 
a feasible solution is a collection of paths rooted at
$r$ that cover all the nodes in $G$ such that the regret of every node with
respect to the path covering it is at most $R$.  
Thus, a feasible solution to: (i) the additive-regret problem yields the satisfaction
guarantee that every client $v$ is visited by time $c_{rv}+R$; and 
(ii) the multiplicative-regret problem ensures that every client $v$ is visited by time
$c_{rv}\cdot R$. 
The goal is to find a feasible solution that uses the fewest number of paths.
We refer to these two problems 
as {\em additive-regret-bounded \vrp} (additive-\rvrp) and 
{\em multiplicative-regret-bounded \vrp} (multiplicative-\rvrp) respectively.

Additive-\rvrp has been sometimes referred to as the {\em schoolbus problem} in the 
literature~\cite{SpadaBL05,ParkK10,BockGKS11}. However, this term is used to refer to an
umbrella of vehicle-routing problems, some of which do not involve regret, so we use the
more descriptive name of additive-\rvrp. 
Both versions of \rvrp 
are \apx-hard via simple reductions from \tsp and \tsp-path (Theorem~\ref{symkhard}), so
we focus on approximation algorithms.    

%\vspace{-1.5ex}
\paragraph{Our results.}
We undertake a systematic study of regret-related vehicle-routing problems 
from the perspective of approximation algorithms. 
As we illustrate below, additive-\rvrp turns out to be the more fundamental of the above 
two problems and a rather useful problem to investigate, and our study 
yields insights and techniques that can be applied, often in a black-box fashion, to
derive algorithms for various vehicle-routing problems, which include both regret-related
problems, and classical problems such as distance-constrained vehicle routing. 
We therefore focus on additive regret;
unless otherwise stated, regret refers to additive regret, and a regret-related problem
refers to the problem under the additive-regret measure.  

Our main result is the {\em first constant-factor} approximation algorithm for (additive) 
\rvrp (Theorem~\ref{minkapx}). 
This is a substantial improvement over the previous-best $O(\log n)$-approximation ratio 
for \rvrp obtained in~\cite{BockGKS11} via the standard set-cover greedy algorithm and
analysis. 

\begin{table}[t!]
\begin{center}
\small
\hspace*{-4ex}
\begin{tabular}{|ccccc||cc|} \hline
\multicolumn{5}{|c||}{Symmetric metrics} & \multicolumn{2}{c|}{Asymmetric metrics} 
\\ \hline
{\bf\footnotesize \nrvrp} & {\footnotesize \nkvrp} 
& Multiplicative-{\footnotesize\nrvrp} & Multiplicative-{\footnotesize\nkvrp} 
& {\footnotesize \ndvrp} 
& {\footnotesize \nrvrp} & {\footnotesize \nkvrp} 
\\ %\hline
$\mathbf{31}$ %($O(\log n)$~\cite{BockGKS11}) 
& $O(k^2)$ & $O\bigl(\log(\frac{R}{R-1})\bigr)$ & $O(1)$  
& $O\bigl(\min\bigl\{\OPT,\frac{\log D}{\log\log D}\bigr\}\bigr)$ %($O(\log D)$~\cite{NagarajanR08}) 
& $O(\log n)$ & $O(k^2\log n)$ 
\\ \hline
\end{tabular}
\caption{\small Summary of our results. Our main result, for {\footnotesize \nrvrp},
yields guarantees for other symmetric-metric problems.} 
%The previous-best guarantee has been included in parentheses where applicable.} 
\end{center}
\vspace{-3ex}
\end{table}

A noteworthy aspect of our result is that 
we develop linear-programming (LP) based techniques for the problem.
While LP-relaxations have been exploited with striking
success in the design and analysis of approximation algorithms, our understanding of
LP-relaxations for \vrp{}s is quite limited 
(with \tsp, and the minimum-latency problem to a lesser extent, being the exceptions), 
and this has been a stumbling block in the design of approximation algorithms for many of
these problems.   
Notably, we develop LP-rounding techniques for 
a natural {\em configuration-style LP-relaxation} for \rvrp, 
which is an example of the set-partitioning model for vehicle routing with
time windows (see~\cite{TothV02}). While it is not difficult to come up with such
(approximately-solvable) configuration LPs for vehicle-routing problems, and they have
been observed computationally to provide excellent lower bounds on the optimal
value~\cite{DesrochersDS92}, there are few theoretical bounds on the effectiveness of
these LPs. Moreover, the limited known guarantees (for general metrics) typically only
establish logarithmic bounds~\cite{NagarajanR08,BockGKS11}, which follow from the
observation that the configuration LP can be viewed as a standard set-cover LP.   
In contrast, we exploit the structure of our configuration LP for \rvrp using
novel methods and prove a {\em constant} integrality gap for the relaxation, which
serves to better justify the good empirical performance of these LPs. 
Although configuration LPs are often believed to be powerful,  
they have been leveraged only sporadically in the design of approximation algorithms; some 
notable exceptions
are~\cite{KarmarkarK82,BansalS06,Svensson12,Rothvoss13,DobzinskiNS10,Feige09}.
Our work contributes to the toolkit of techniques that have been utilized for
configuration LPs, and our techniques may find wider applicability. 

We use our algorithm for additive-\rvrp 
to obtain an $O\bigl(\log(\frac{R}{R-1})\bigr)$-approximation for multiplicative-\rvrp  
with regret-bound $R$ (Theorem~\ref{multrvrp}). 
\nolinebreak
\mbox{Thus, we obtain a constant-factor approximation 
for any fixed $R$.} 

Interestingly, our algorithm for \rvrp yields {\em improved guarantees} for
(the path-variant of) the classical {\em distance-constrained  
vehicle-routing problem}
(\dvrp)~\cite{LaporteDN84,LiLD92,NagarajanR08,NagarajanR12}---find the fewest   
number of rooted paths of {\em length} at most $D$ that cover all the nodes---%
via a reduction to \rvrp. 
({\dvrp} usually refers to the version where we seek tours containing the
root; \cite{NagarajanR08} shows that the path- and tour-versions are within a factor of 2
in terms of approximability.) 
We obtain an $O\bigl(\frac{\log R_{\max}}{\log\log R_{\max}}\bigr)$-approximation for
\dvrp (Theorem~\ref{dvrpthm}), where $R_{\max}\leq D$ is the maximum regret of a node in
an optimal solution, 
which improves upon the previous-best $O(\log D)$-guarantee for \dvrp~\cite{NagarajanR08}.  
We believe that this reduction is of independent interest.
Exploiting our LP-based guarantee for \rvrp, we obtain the same integrality-gap bound for
the natural configuration LP for \dvrp. 
We also show that the integrality gap of the configuration LP is $O(\OPT_\lp)$, where
$\OPT_\lp$ is the optimal value of the LP. This is interesting because for the standard
set-cover LP, there are $O(\log n)$-integrality-gap examples even when the optimal
LP-value is a constant; although the \dvrp-LP is also a  set-cover LP, our result
precludes such an integrality-gap construction for this LP  
and raises the enticing possibility that the additional structure in \dvrp can be
further exploited, perhaps by refining our methods,
to derive improved guarantees.

We leverage our techniques to obtain guarantees for various variants and generalizations
of \rvrp (Section~\ref{extn}), including, most notably, 
(i) the variants where we fix the number $k$ of rooted paths (used to cover the nodes) and
seek to minimize the maximum additive/multiplicative regret of a node, which we refer to
as {\em additive/multiplicative- \kvrp}; 
and 
(ii) (additive) \rvrp and \kvrp in {\em asymmetric metrics}. 

We obtain an $O(k^2)$-approximation for additive-\kvrp 
(Theorem~\ref{symregapx}), which is the {\em first} approximation guarantee for \kvrp. 
Previously, the only approximation results known for \kvrp were for the special
cases where we have a tree metric~\cite{BockGKS11} (note that the $O(\log n)$-distortion
embedding of general metrics into tree metrics does not approximate regret), and when 
$k=1$~\cite{BlumCKLMM07}. 
In particular, {\em no} approximation guarantees were known previously 
{\em even when $k=2$}; in contrast, we achieve a constant-factor approximation for any
fixed $k$. Partially complementing this result, we show that the integrality gap of the
configuration LP for \kvrp is $\Omega(k)$ (Theorem~\ref{intgap}).
Multiplicative-\kvrp turns out to be an easier problem, and the LP-rounding
ideas in~\cite{ChakrabartyS11} yield an $O(1)$-approximation for this problem
(Theorem~\ref{multregapx}).  

For asymmetric metrics, we exploit the simple but key observation that regret can be
captured via a suitable  asymmetric metric that we call the {\em regret metric} $c^{\reg}$
(see Fact~\ref{obs:regret}). This alternative view of regret yields surprising dividends,
since we can directly plug in results for asymmetric metrics to obtain results for regret
problems. In particular, results for $k$-person asymmetric $s$-$t$
TSP-path~\cite{FriggstadSS10,Friggstad11} translate to results for asymmetric \rvrp and
\kvrp, and we achieve approximation ratios of $O(\log n)$ and $O(k^2\log n)$ respectively
for these two problems. 
Although regret metrics form a strict subclass of asymmetric metrics, 
we uncover an interesting connection between the approximability
of asymmetric \rvrp and \atsp. We show that an $\al$-approximation for asymmetric \rvrp
implies a $2\al$-approximation for \atsp (Theorem~\ref{asymkhard});  
thus an $\w(\log\log n)$-improvement to the approximation we achieve for asymmetric
{\rvrp would improve the current best 
$O\bigl(\frac{\log n}{\log\log n}\bigr)$-approximation for \atsp~\cite{AsadpourGMOS10}.}

%\vspace{-1.5ex}
\paragraph{Our techniques.}
Our algorithm for additive-\rvrp (see Section~\ref{sec:symmetric}) is based on rounding a
fractional solution to a natural configuration LP \eqref{minklp}, where we have a variable
for every path of regret at most $R$ and we enforce that every node is covered to an
extent of 1 by such paths. 
Although this LP has an exponential number of variables, we can obtain a near-optimal
solution $x^*$ by using an approximation algorithm for 
{\em orienteering}~\cite{BlumCKLMM07,ChekuriKP12} (see ``Related work'') 
to provide an approximate separation oracle for the dual LP. 

Let $k^*=\sum_P x^*_P$.
To round $x^*$, we first observe that it suffices
to obtain $O(k^*)$ paths of {\em total regret} $O(k^*R)$ (see Lemma~\ref{avg2max}).  
At a high level, we would ideally like to ensure that directing the paths in the 
support of $x^*$ away from the root yields a directed acyclic graph $H$. If we have
this, then by viewing $x^*$ as the path decomposition of a flow in $H$, and by the
integrality property of flows, we can round $x^*$ to an integral flow that covers all the
nodes, has value at most $\ceil{k^*}$, and whose cost in the regret metric is at most the
$c^{\reg}$-cost of $x^*$, which is at most $k^*R$. This integral flow decomposes into a
collection of $\ceil{k^*}$ paths that cover $V$ (since $H$ is acyclic), which yields the  
desired rounding. 

Of course, in general, we will not be in this ideal situation. Our goal will be to
identify a subset $W$ of ``witness nodes'' such that:
(a) $x^*$ can be converted into a fractional solution that covers $W$ and has the above
acyclicity-property without blowing up the $c^{\reg}$-cost by much; and    
(b) nodes in $V\sm W$ can be attached to $W$ incurring only an $O(k^*R)$ cost.   
The new fractional solution can then be rounded to obtain integral paths that cover $W$,
which in turn can be extended so that they cover $V$. 
In achieving this goal, we gain significant leverage from the fact that the
configuration LP yields a collection of fractional {\em simple} paths that cover all the
nodes, which is a stronger property than having a flow where every node has at least one
unit of incoming flow. 
We build a forest $F$ of cost $O(k^*R)$ and select one node from each component of $F$ as
a witness node; this immediately satisfies (b). The construction ensures that:
first, every witness node $w$ has an associated collection of ``witness paths'' that cover
it to a large extent, say, $\frac{1}{2}$; and second, for every path $P$, the witness nodes
that use $P$ as a witness path have strictly increasing distances from the root $r$ and
occur on $P$ in order of their distance from $r$. It follows that by shortcutting each
path to only contain the witness nodes that use the path as a witness path, and blowing up
the $x^*$ values by 2, we achieve property (a). 

\medskip

Our algorithms for multiplicative-\rvrp and \dvrp 
capitalize on the following insight. 
Suppose there exist $k$ paths covering a given set $S$ of nodes and incurring
additive regret at most $\rho$ for these nodes. Then, for any $\e>0$, one can use our
algorithm for \rvrp to find $O\bigl(\frac{k}{\e}\bigr)$ paths covering $S$ such that the
nodes in $S$ have additive regret at most $\e\rho$ (Lemma~\ref{covering}). For
multiplicative-\rvrp with regret bound $R$, 
we apply this observation to every ``ring'' $V_i:=\{v: c_{rv}\in[2^{i-1},2^i)\}$ to obtain
$O(\OPT)$ paths covering $V_i$ such that the $V_i$-nodes face at most $(R-1)\cdot 2^{i-2}$
additive regret. This follows since the optimal solution
covers $V_i$ inducing additive regret at most $(R-1)\cdot 2^i$ for these nodes.
Concatenating the paths obtained for the $V_i$s whose indices are $O\bigl(\log(\frac{R}{R-1})\bigr)$
apart yields the $O\bigl(\log(\frac{R}{R-1})\bigr)$-approximation.  

For \dvrp, we build upon the above insight. Rather than fixing beforehand, as above, the
regret bounds and the corresponding node-sets to cover via paths ensuring that regret
bound, we use a dynamic-programming approach.
Crucially, in the analysis, we bound the number of paths needed to cover a set of nodes
with a given regret bound by suitably {\em modifying} the paths of a 
{\em structured near-optimal solution} $\Oc$.  
We argue that a specific choice (depending on $\Oc$) of regret bounds and node-sets
yields an $O\bigl(\frac{\log R_{\max}}{\log\log R_{\max}}\bigr)$-approximation. 
In doing so, we argue that each choice of regret-bound is such that we make progress by
decreasing substantially either the regret-bound {\em or} number of paths needed.
Since our \rvrp-algorithm is in fact LP-based, this also yields a bound on the integrality
gap of the natural configuration LP for \dvrp.

For the $O(\OPT_\lp)$ integrality-gap result for \dvrp, 
we show that one can partition the nodes so 
that for each part $S$,
there is a distinct node $t_S$ such that the paths ending at $t_S$
cover the $S$-nodes to an extent of $\Omega\bigl(\frac{1}{\OPT_\lp}\bigr)$.
Multiplying the LP-solution by $O(\OPT_\lp)$ then yields a fractional solution that
covers the $S$-nodes  
{incurring regret at most $D-c_{rt_S}$, which we can round using our \rvrp-algorithm.}  

%\vspace{-1ex}
\paragraph{Related work.}
There is a wealth of literature on vehicle routing problems (see, e.g.,~\cite{TothV02}),
and the survey~\cite{ParkK10} discusses a variety of problems under the umbrella of
schoolbus-routing problems; 
we limit ourselves to the work that is relevant to our problems. 
The use of regret as a vehicle-routing objective seems to have been first considered
in~\cite{SpadaBL05}, who present various heuristics and empirical results.

Bock et al.~\cite{BockGKS11} developed the first approximation algorithms for \rvrp, 
but focus mainly on tree metrics, for which they achieve a 3-approximation. For general
metrics, they observe that \rvrp can be cast as a covering problem,
and finding a minimum-density set is an {\em orienteering}
problem~\cite{GoldenLV87,BlumCKLMM07}:  
given node rewards, end points $s$, $t$, and a length bound $B$, find an $s$-$t$
path of length at most $B$ that gathers maximum total node-reward. 
Thus, the greedy set-cover algorithm combined with a suitable $O(1)$-approximation for 
orienteering~\cite{BlumCKLMM07,ChekuriKP12} immediately yields an $O(\ln n)$-approximation
for \rvrp. Previously, this was the best approximation algorithm for \rvrp in general
metrics.  
For \kvrp, {\em no} previous results were known for general metrics, even when $k=2$.  
(Note that we obtain a constant approximation for \kvrp for any fixed $k$.) 
\cite{BockGKS11} obtain a 12.5-approximation for \kvrp in tree metrics. 
When $k=1$, \kvrp becomes as a special case of the {\em min-excess path} problem
\nolinebreak
\mbox{introduced by~\cite{BlumCKLMM07}, who devised a $(2+\e)$-approximation for this
problem.}

To the best of our knowledge, multiplicative regret, and the asymmetric
versions of \rvrp and \kvrp have not been considered previously. 
Our algorithm for multiplicative-\kvrp uses the LP-based techniques developed
by~\cite{ChakrabartyS11} for the minimum latency problem.  
The set-cover greedy algorithm can also be applied to asymmetric 
\rvrp. This yields approximation ratios of $O\bigl(\frac{\log^3 n}{\log\log n}\bigr)$ in
polytime, and $O(\log^2 n)$ in {\em quasi-polytime} using the 
$O\bigl(\frac{\log^2 n}{\log\log n}\bigr)$- and $O(\log\OPT)$- approximation algorithms
for directed orienteering in~\cite{NagarajanR11} and~\cite{ChekuriP05} respectively.
Both factors are significantly worse than the $O(\log n)$-approximation that we obtain via
an easy reduction to \katspp (find $k$ $s$-$t$ paths of minimum total cost that cover all 
nodes).  
Friggstad et al.~\cite{FriggstadSS10} obtained the first results for \katspp 
which were later improved by~\cite{Friggstad11} to an $O(k \log n)$-approximation
\nolinebreak
\mbox{and a bicriteria result that achieves $O(\log n)$-approximation using at most $2k$
paths.} 

Replacing the notion of client-regret in our problems with client-delay 
gives rise to some well-known vehicle-routing and TSP problems. The client-delay version
of \rvrp corresponds to (path-) \dvrp. 
Nagarajan and Ravi~\cite{NagarajanR08} give 
an $O(\log\min\{D,n\})$-approximation for general metrics, 
and a 2-approximation for trees.
Obtaining a constant-factor approximation for \dvrp in general metrics has been a long-standing
open problem. 
As noted earlier, regret can be captured by the asymmetric regret metric and
thus \rvrp is {\em precisely} (path-) \dvrp in the regret metric.    
Thus, our work yields an $O(1)$-approximation for \dvrp in this specific
asymmetric metric.
We find this to be quite interesting and surprising since one would normally expect that 
\dvrp would become {\em harder} in an asymmetric metric. 

The client-delay version of \kvrp 
yields the \ktsp problem of finding $k$ rooted paths of minimum maximum cost that
cover all nodes, which admits a constant-factor approximation via a reduction to \tsp.  

The orienteering problem plays a key role in vehicle-routing problems, including our
algorithm for \rvrp where it yields an approximate separation oracle for the
dual LP. Blum et al.~\cite{BlumCKLMM07} obtained the first constant-factor approximation
algorithm for orienteering, 
and the current best approximation is $2+\e$ due to Chekuri et al.~\cite{ChekuriKP12}.  
\cite{NagarajanR11,ChekuriKP12} study (among other problems) directed orienteering 
and obtain approximation ratios of $O\bigl(\frac{\log^2 n}{\log\log n}\bigr)$ and
$O(\log^2\OPT)$ respectively. The backbone of all of these algorithms is the min-regret
$K$-path problem (called the min-excess path problem in~\cite{BlumCKLMM07})---choose a
min-regret path covering at least $K$ nodes---which captures \kvrp when $k=1$.%  
\footnote{Viewed from the perspective of the regret metric, the min-regret $K$-path 
problem {\em trivially reduces} to the min-cost $K$-path problem (choose a min-cost path
covering at least $K$ nodes) in asymmetric metrics. This allows one to slightly improve
Theorem 8 in~\cite{NagarajanR11} and Lemma 2.4 in~\cite{ChekuriKP12}.}
\cite{ChekuriP05} used a different approach and gave a quasi-polytime
$O(\log\OPT)$-approximation for directed orienteering. 
Finally, Bansal et al.~\cite{BansalBCM04} and Chekuri et al.~\cite{ChekuriKP12} consider
orienteering with time windows, 
where nodes have time windows and we seek to maximize the number of nodes that are visited
in their time windows, and its special case where nodes have deadlines, both of which
generalize orienteering. They obtain polylogarithmic approximation ratios for these problems.

%\vspace{-0.5ex}
\section{Preliminaries} \label{prelim}
Recall that an instance of \rvrp is specified by a complete undirected graph 
$G=(\{r\}\cup V,E)$, where $r$ is a distinguished root node, with metric edge costs
$\{c_{uv}\}$, and a regret-bound $R$. Let $n=|V|+1$. 
We call a path in $G$ rooted if it begins at $r$. 
Unless otherwise stated, we think of the nodes on $P$ as being ordered in
increasing order of their distance along $P$ from $r$, 
and directing $P$ away from $r$ means that we direct each edge $(u,v)\in P$ from $u$ to
$v$ if $u$ precedes $v$ (under this ordering). 
We use $\dist_v$ to denote $c_{rv}$ for all $v\in V\cup\{r\}$. 
For a set $S$ of edges, we sometimes use $c(S)$ to denote $\sum_{e\in S}c_e$. 
By scaling, and merging all nodes at distance 0 from each other, we may assume that
$c_{uv}$ is a positive integer for all $u,v\in V\cup\{r\}$.
Thus, $\dist_v\geq 1$ for all $v\in V$. 
Unless otherwise qualified, regret refers to additive regret in the sequel.

It will be convenient to assume
that $R>0$: if $R=0$ then we can determine 
whether an edge $(u,v)$ lies on a shortest rooted path, and if so direct $(u,v)$ as
$u\rightarrow v$ if $\dist_v=\dist_u+c_{uv}$, to obtain a directed acyclic graph (DAG)
$H$. Our problem then reduces to finding the minimum number of directed rooted paths in
$H$ to cover all the nodes, which can be solved efficiently using network-flow techniques.  

The following equivalent way of viewing regret will be convenient. For every ordered pair
of nodes $u,v \in V \cup \{r\}$, define the {\em regret distance} (with respect to $r$) to
be $c^{\reg}_{uv}:=\dist_u+c_{uv}-\dist_v$. 

%\vspace{-0.5ex}
\begin{fact} \label{obs:regret}
(i) The regret distances $c^{\reg}_{uv}$ are nonnegative and satisfy the triangle inequality:
$c^{\reg}_{uv} \leq c^{\reg}_{uw}+c^{\reg}_{wv}$ for all $u,v,w \in V \cup \{r\}$. Hence,
$\{c^{\reg}_{uv}\}$ forms an asymmetric metric that we call the {\em regret metric}.

\noindent
(ii) For a $u\leadsto v$ path $P$, we have 
$c^{\reg}(P):=\sum_{e\in P}c^{\reg}_e=\dist_u+c(P)-\dist_v$, and for a cycle $Z$, we have
$c^{\reg}(Z)=c(Z)$.
Properties (i) and (ii) hold even when the underlying $\{c_{uv}\}$ metric is asymmetric.
\end{fact}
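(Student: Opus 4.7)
The plan is to derive both parts from the definition $c^{\reg}_{uv} := \dist_u + c_{uv} - \dist_v$ by direct manipulation, using only (a) the triangle inequality of the underlying metric $c$ (which is assumed even when $c$ is asymmetric) and (b) a telescoping observation for the $\dist$-terms along a walk. Throughout, symmetry of $c$ will never be invoked, which is what lets the same argument cover the asymmetric statement at the end of the fact.

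For part (i), nonnegativity is immediate: by the (directed) triangle inequality for $c$ applied to the ordered triple $(r,u,v)$, $\dist_v = c_{rv} \leq c_{ru} + c_{uv} = \dist_u + c_{uv}$, and rearranging gives $c^{\reg}_{uv}\geq 0$. For the triangle inequality on $c^{\reg}$, I expand
\[
c^{\reg}_{uw}+c^{\reg}_{wv} \;=\; (\dist_u+c_{uw}-\dist_w) + (\dist_w+c_{wv}-\dist_v) \;=\; \dist_u + (c_{uw}+c_{wv}) - \dist_v,
\]
and then use $c_{uw}+c_{wv}\geq c_{uv}$ to conclude that this is $\geq c^{\reg}_{uv}$. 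The key structural observation is that the $\dist_w$ terms cancel; indeed $c^{\reg}$ is exactly the reweighting of $c$ by the node potentials $-\dist_v$, so triangle inequality is automatically preserved, and nonnegativity is equivalent to the $\dist_v$'s being valid shortest-path potentials from $r$.

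For part (ii), write $P=(u=v_0,v_1,\ldots,v_k=v)$ and sum edge-by-edge:
\[
c^{\reg}(P) \;=\; \sum_{i=0}^{k-1}\bigl(\dist_{v_i}+c_{v_iv_{i+1}}-\dist_{v_{i+1}}\bigr) \;=\; \dist_u-\dist_v+c(P),
\]
the $\dist$-terms telescoping. For a cycle $Z$ we have $v_0=v_k$, so the telescoped contribution is $0$ and $c^{\reg}(Z)=c(Z)$ follows.

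I do not expect a substantive obstacle here; the only thing to verify carefully is that no step implicitly invokes symmetry of $c$. The two triangle inequalities used for $c$ are directed (applied to the ordered triples $(r,u,v)$ and $(u,w,v)$), and the telescoping is purely algebraic, so the entire argument goes through verbatim when $c$ is merely an asymmetric metric, which justifies the final sentence of the fact.
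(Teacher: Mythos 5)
Your proof is correct and is exactly the routine verification (reduced costs with respect to the shortest-path potentials $\dist_v$, plus telescoping) that the paper omits by stating this as a Fact without proof. Your care in checking that only directed triangle inequalities are used is precisely what justifies the final sentence about asymmetric metrics.
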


We infer from Fact~\ref{obs:regret} that if $P$ is a rooted path and $v\in P$, then the
regret of $v$ with respect to $P$ is simply the $c^{\reg}$-distance to $v$ along $P$,
which we denote by $c^{\reg}_P(v)$, and the regret of nodes on $P$ cannot decrease as one 
moves away from the root (since $c^{\reg}\geq 0$). 
We define the regret of $P$ to be the regret of the end-node of $P$, 
which by part (ii) of Fact~\ref{obs:regret} is given by $c^{\reg}(P)=\sum_{e\in P}c^{\reg}_e$. 

Lemma~\ref{avg2max} makes the key observation that one can always convert a collection of
paths with {\em average regret} at most $\al R$ into one where every path has regret at
most $R$ by blowing up the number of paths by an $(\al+1)$ factor, and hence, it
suffices to obtain a near-optimal solution with average regret $O(R)$.   

%\vspace{-0.5ex}
\begin{lemma} \label{cor:average} \label{avg2max}
Given rooted paths $P_1, \ldots, P_k$ with total regret $\alpha k R$,
we can efficiently find at most $(\alpha+1) \cdot k$ rooted paths, each 
regret at most $R$, that cover $\bigcup_{i=1}^k P_i$.
\end{lemma}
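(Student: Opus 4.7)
The plan is to process each given path $P_i$ independently, cutting it into rooted subpaths each of regret at most $R$, and controlling the count via the regret budget $\rho_i := c^{\reg}(P_i)$; summing over $i$ will then yield the bound, using $\sum_i \rho_i = \alpha k R$.

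First I would establish the enabling formula: if $Q$ is a subpath of $P_i$ running from a node $u$ to a later node $v$, then the rooted path $Q' := (r,u,\dots,v)$ obtained by prepending the edge $(r,u)$ has regret
\[ c^{\reg}(Q') \;=\; \dist_u + c(Q) - \dist_v \;=\; \bigl[c_{P_i}(v) - \dist_v\bigr] - \bigl[c_{P_i}(u) - \dist_u\bigr] \;=\; c^{\reg}_{P_i}(v) - c^{\reg}_{P_i}(u), \]
using $\dist_u = c_{ru}$ and Fact~\ref{obs:regret}(ii). In words, reattaching a subpath to $r$ produces a rooted path whose regret equals exactly the span of $P_i$-regret values between the subpath's two endpoints.

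Given this, my splitting rule is to assign each node $u$ of $P_i$ (including $r$, with $c^{\reg}_{P_i}(r)=0$) a band index $b(u) := \lfloor c^{\reg}_{P_i}(u)/R \rfloor$, and cut $P_i$ between every pair of consecutive nodes whose band indices differ. The first piece already starts at $r$; to every subsequent piece I would prepend $r$. Since all nodes within a single piece lie in a common band $[\ell R,(\ell+1)R)$, the displayed formula guarantees that each resulting rooted path has regret strictly less than $R$. The number of pieces produced from $P_i$ equals the number of distinct $b$-values realized along $P_i$, which is at most $1+\lfloor \rho_i/R \rfloor$. Summing over $i$,
\[ \sum_{i=1}^{k}\bigl(1 + \lfloor \rho_i/R \rfloor\bigr) \;\leq\; k + \tfrac{1}{R}\sum_{i=1}^{k} \rho_i \;=\; k + \alpha k \;=\; (\alpha+1)k, \]
and the construction is clearly linear-time.

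The only mild subtlety to watch for, which I expect to be the trickiest bookkeeping point, is that a single edge $(u,u')$ on $P_i$ with $c^{\reg}_{uu'}>R$ can cause $b$ to jump by more than one, leaving intermediate bands unoccupied. This is ultimately harmless, since empty bands contribute no pieces and the count by the number of distinct $b$-values on $P_i$ remains bounded by $1+\lfloor \rho_i/R \rfloor$; but it is the one place where a naive ``one piece per band'' count would overshoot and needs to be guarded against.
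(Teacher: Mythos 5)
Your proof is correct and follows essentially the same approach as the paper's: both cut each $P_i$ at regret thresholds that are multiples of $R$, reattach each piece to $r$ using the telescoping identity $c^{\reg}(Q')=c^{\reg}_{P_i}(v)-c^{\reg}_{P_i}(u)$, and sum the per-path counts against the total regret budget. If anything, your bookkeeping via the number of \emph{realized} band values is slightly more careful than the paper's enumeration of thresholds $\ell R$, since it cleanly handles edges whose regret exceeds $R$ (a case the paper's description of the cut points glosses over).
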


%\vspace{-0.5ex}
\begin{proof}
Let $\al_1 R,\ldots,\al_k R$ be the regrets of $P_1,\ldots,P_k$ respectively.
We show that for each path $P_i$, we can obtain $\max\{\ceil{\al_i},1\}$ rooted paths of
regret at most $R$ that cover the nodes of $P_i$. 
Applying this to each path $P_i$, we obtain
at most $\sum_{i=1}^k(\alpha_i+1)=(\alpha+1)\cdot k$ rooted
paths with regret at most $R$ that cover $\bigcup_{i=1}^k P_i$.

Fix a path $P_i$. If $\al_i\leq 1$, there is nothing to be done, so assume
otherwise. The idea is to simply break $P_i$ at each point where the regret exceeds a
multiple of $R$, and connect the starting point of each such segment directly to $r$. 
More formally, 
for $\ell=1,\ldots,\beta_i:=\ceil{\al_i}-1$, let $v_\ell$ be the first node on $P$
with $c^{\reg}_P(v)>\ell R$, and let $u_{\ell-1}$ be its (immediate) predecessor on $P$.  
Let $v_0=r$ and $u_{\beta_i}$ be the end point of $P_i$.
We create the $\ceil{\al_i}$ paths given by $r,v_\ell\leadsto u_\ell$ for
$\ell=0,\ldots,\beta_i$, which clearly together cover the nodes of $P_i$. 
The regret of each such path is
$c^{\reg}_{rv_\ell}+c^{\reg}_P(u_\ell)-c^{\reg}_P(v_\ell)=c^{\reg}_P(u_\ell)-c^{\reg}_P(v_\ell)\leq
(\ell+1)R-\ell R=R$, where the last inequality follows from the definitions of $v_\ell$,
$v_{\ell+1}$ and $u_\ell$ (which precedes $v_{\ell+1}$).
\end{proof}

Approximation algorithms for symmetric \tsp variants often exploit the fact that edges
may be traversed in any direction, to convert a connected subgraph 
into an Eulerian tour while losing a factor of 2 in the cost.
This does not work for \rvrp since $c^{\reg}$ is an asymmetric metric.
Instead, we exploit a key observation of Blum et al.~\cite{BlumCKLMM07}, who identify
portions of a rooted path $P$ whose total $c$-cost can be charged to $c^{\reg}(P)$. 

%\vspace{-0.5ex}
\begin{definition} \label{redblue}
Let $P$ be a rooted path ending at $w$. Consider an edge $(u,v)$ of $P$, where $u$ 
precedes $v$ on $P$. We call this a {\em red} edge of $P$ if there exist nodes $x$ and $y$  
on the $r$-$u$ portion and $v$-$w$ portion of $P$ respectively such that
$\dist_x\geq\dist_y$; otherwise, we call this a {\em blue} edge of $P$. 
For a node $x\in P$, let $\red(x,P)$ denote the maximal subpath $Q$ of $P$ containing $x$ 
consisting of only red edges (which might be the trivial path $\{x\}$).
\end{definition}

We call a maximal blue/red subpath of a rooted path $P$ a blue/red interval of $P$.
The blue and red intervals of $P$ correspond roughly 
to the type-1 and type-2 segments of $P$, as defined in~\cite{BlumCKLMM07}. Distinguishing
the edges on $P$ as red or blue serves two main purposes. 
First, the total cost of the red edges is proportional to the regret of $P$
(Lemma~\ref{lem:redcost}). Second, if we shortcut $P$ so that it contains only one 
node from each red interval, then the resulting edges must all be distance
increasing (Lemma~\ref{dinc}). Consequently, if we perform this operation on a collection 
of paths and direct edges away from the root, then we obtain a DAG. 

%\vspace{-0.5ex}
\begin{lemma}[Blum et al.~\cite{BlumCKLMM07}]
\label{lem:redcost} 
For any rooted path $P$, we have
$\sum_{\text{$e$ red on $P$}} c_e \leq \frac{3}{2} c^{\reg}(P)$.
\end{lemma}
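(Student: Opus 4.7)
My plan is to establish the bound interval-by-interval: I will show $c(R) \le \tfrac{3}{2}c^{\reg}(R)$ for each maximal red interval $R$ of $P$, and then sum. Fix such an $R$ with first node $u$ and last node $v$, and write $a = \dist_u$, $b = \dist_v$. Since $R$ is a $u \leadsto v$ path, Fact~\ref{obs:regret}(ii) gives $c(R) = c^{\reg}(R) + (b - a)$, so the desired inequality is equivalent to $c(R) \ge 3(b - a)$, which is trivial when $b \le a$; I therefore assume $b > a$.

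The first step extracts structural witnesses from the red-edge condition applied at the boundary of $R$, using the maximality of $R$ (which forces the edges of $P$ immediately outside $R$, if any, to be blue). The trailing blue edge ensures every node of $P$ past $R$ has $\dist$-value strictly greater than $\max_{w \in R}\dist_w$, so when the red condition is applied to the last edge of $R$ the ``further'' witness must be $v$ itself; consequently the ``nearer'' witness must lie in $R \setminus \{v\}$ and have $\dist$-value at least $b$. Call such a node $v_1^*$. A symmetric argument applied to the first edge of $R$ produces a node $v_2^* \in R \setminus \{u\}$ with $\dist_{v_2^*} \le a$.

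Now I case-split on the order of $v_1^*$ and $v_2^*$ on $R$. In the easy case $v_1^*$ precedes $v_2^*$: applying the triangle inequality through $r$ to the three sub-paths $u \leadsto v_1^*$, $v_1^* \leadsto v_2^*$, $v_2^* \leadsto v$ of $R$ yields
\[
c(R) \ge (\dist_{v_1^*} - a) + (\dist_{v_1^*} - \dist_{v_2^*}) + (b - \dist_{v_2^*}) = 2(\dist_{v_1^*} - \dist_{v_2^*}) + (b - a) \ge 3(b - a)
\]
on plugging in $\dist_{v_1^*} \ge b$ and $\dist_{v_2^*} \le a$.

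The hard case is when $v_2^*$ precedes $v_1^*$ on $R$; the triangle-only bound then yields only $c(R) \ge b - a$, so I need to recover the missing $2(b-a)$ from the fact that \emph{every} intermediate edge of $R$ is red. The red condition at the edge entering $v_1^*$ compares a prefix-maximum with a suffix-minimum of $\dist$ on $R$, and since all $R$-nodes before $v_1^*$ have $\dist < b$ while all $R$-nodes past $v_1^*$ have $\dist > a$, this forces a middle-distance node $m_{\mathrm{pre}}$ in $R$ strictly before $v_1^*$ and a middle-distance node $m_{\mathrm{post}}$ strictly after $v_1^*$ satisfying $\dist_{m_{\mathrm{pre}}} \ge \dist_{m_{\mathrm{post}}}$. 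Either $m_{\mathrm{pre}}$ sits before $v_2^*$ on $R$---in which case the detour via $m_{\mathrm{pre}}$ on the sub-path $u \leadsto v_2^*$ contributes $\ge 2(\dist_{m_{\mathrm{pre}}} - a)$ of extra cost and the detour via $m_{\mathrm{post}}$ on $v_1^* \leadsto v$ contributes $\ge 2(b - \dist_{m_{\mathrm{post}}})$, which sum to the missing $2(b - a)$ via $\dist_{m_{\mathrm{pre}}} \ge \dist_{m_{\mathrm{post}}}$---or $m_{\mathrm{pre}}$ sits between $v_2^*$ and $v_1^*$, in which case I reapply the argument with $m_{\mathrm{pre}}$ playing the role of a refined high witness. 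The main obstacle will be making this bookkeeping water-tight: each invocation of the red condition can spawn a new auxiliary node, so the cleanest formalization is likely by induction on the nesting depth of the auxiliary witnesses.
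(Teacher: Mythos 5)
Your reduction to a per-interval statement is sound and, it is worth noting, already departs from the paper, whose entire proof is a citation to Corollary~3.2 of Blum et al.\ (red edges sit inside ``type-2 segments'', whose total length is bounded there). Your setup is correct: summing $c(R)\le\frac{3}{2}c^{\reg}(R)$ over the edge-disjoint maximal red intervals does give the lemma since $c^{\reg}\ge 0$ on every edge, the identity $c(R)=c^{\reg}(R)+(b-a)$ correctly converts the target to $c(R)\ge 3(b-a)$, the extraction of $v_1^*$ and $v_2^*$ from the boundary blue edges is valid (one should also note that the first edge of $P$ is always blue because $\dist_y\ge 1>\dist_r$ for $y\ne r$, and handle the case where $v$ is the endpoint of $P$, but both are routine), and the easy case is complete. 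The problem is the hard case, which you do not actually prove. Two concrete issues: (a) the assertions ``all $R$-nodes before $v_1^*$ have $\dist<b$'' and ``all $R$-nodes past $v_1^*$ have $\dist>a$'' do not follow from your definition of $v_1^*$ as an arbitrary witness with $\dist_{v_1^*}\ge b$; they require $v_1^*$ to be the \emph{first} node of $R$ with $\dist\ge b$ \emph{and} the case hypothesis to be that every such node follows every node with $\dist\le a$, neither of which your case split ($v_2^*$ precedes $v_1^*$ for the two particular extracted witnesses) guarantees. (b) The second subcase (``$m_{\mathrm{pre}}$ sits between $v_2^*$ and $v_1^*$, \dots\ reapply the argument'') is an unspecified recursion: you do not say what quantity decreases, why it terminates, or how the partial bounds from different levels of the recursion add up to exactly $3(b-a)$ without double-counting subpath costs. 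As written this is an acknowledged gap, not a proof.

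The good news is that your target inequality $c(R)\ge 3(b-a)$ is true, and there is a short argument that closes the gap and makes the case analysis (including your easy case) unnecessary. Let $d_0=a,d_1,\dots,d_k=b$ be the $\dist$-values along $R$. Because the edges bounding $R$ in $P$ are blue (or absent), every node of $P$ before $u$ has $\dist<\min_j d_j$ and every node after $v$ has $\dist>\max_j d_j$; hence the redness of the $i$-th edge of $R$ says exactly $\max(d_0,\dots,d_i)\ge\min(d_{i+1},\dots,d_k)$. Now fix any level $\theta\in(a,b)$ not equal to any $d_j$. If the sequence crossed level $\theta$ only once, say at edge $i$, then $d_0,\dots,d_i<\theta<d_{i+1},\dots,d_k$, contradicting redness of that edge; since the number of crossings is odd (the sequence starts below $\theta$ and ends above it), it is at least $3$. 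Integrating the crossing count over $\theta\in(a,b)$ gives $\sum_i|d_{i+1}-d_i|\ge 3(b-a)$, and $c(R)\ge\sum_i|d_{i+1}-d_i|$ by the triangle inequality through $r$. This is essentially the annulus-crossing argument underlying the cited Corollary~3.2, localized to a single red interval; I recommend you replace your witness-chasing hard case with it.
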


\begin{proof} 
Each red edge is contained in a ``type-2 segment'', as defined in~\cite{BlumCKLMM07},
and Corollary 3.2 in~\cite{BlumCKLMM07} proves that the total length of type-2 segments is
at most $\frac{3}{2}\cdot c^{\reg}(P)$.  
\end{proof}

%\vspace{-1ex}
\begin{lemma} \label{lem:cross} \label{dinc}
(i) Suppose $u, v$ are nodes on a rooted path $P$ such that $u$ precedes $v$ on $P$ and
$\red(u,P) \neq \red(v,P)$, then $\dist_{u} < \dist_{v}$.
(ii) Hence, if $P'$ is obtained by shortcutting $P$ so that it contains at most one node from
each red interval of $P$, then for every edge $(x,y)$ of $P'$ with $x$ preceding $y$ on
$P'$, we have $\dist_x < \dist_y$.
\end{lemma}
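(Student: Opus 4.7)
The plan is to prove part (i) by locating an explicit blue edge that separates $u$ from $v$ along $P$, and then invoking the defining property of blue edges with $x=u$ and $y=v$; part (ii) will then follow as a one-line corollary.

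For part (i), I would first observe that since $\red(u,P)$ and $\red(v,P)$ are distinct maximal red subpaths, and $u$ precedes $v$ on $P$, the $u\leadsto v$ subpath of $P$ must contain at least one blue edge: traveling from $u$ to $v$ one eventually leaves the red interval $\red(u,P)$, and the edge leaving a maximal red interval is, by maximality, blue. Pick any such blue edge $(a,b)$ on the $u\leadsto v$ subpath, with $a$ preceding $b$ on $P$. Because $(a,b)$ lies between $u$ and $v$, the node $u$ belongs to the $r\leadsto a$ portion of $P$ and $v$ belongs to the $b\leadsto w$ portion (where $w$ is the endpoint of $P$). Now applying Definition~\ref{redblue} directly to the blue edge $(a,b)$ with $x\assign u$ and $y\assign v$ gives $\dist_u<\dist_v$, which is exactly the desired conclusion. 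Two small edge cases to check are $u=a$ and $v=b$, but both are explicitly allowed by the definition's quantification over nodes on the respective portions of $P$, so no special treatment is needed.

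For part (ii), let $(x,y)$ be any edge of the shortcut path $P'$ with $x$ preceding $y$ on $P'$. Since $P'$ was obtained by shortcutting $P$ so that at most one node is kept from each red interval of $P$, the nodes $x$ and $y$ must lie in \emph{different} red intervals of $P$, i.e.\ $\red(x,P)\neq\red(y,P)$. Moreover, the order on $P'$ agrees with the order on $P$, so $x$ still precedes $y$ on $P$. Thus the hypotheses of part (i) are satisfied, giving $\dist_x<\dist_y$.

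There is no real obstacle here; the only point that requires a moment of care is ensuring that a blue edge genuinely exists strictly between $u$ and $v$ (which uses maximality of the red intervals together with $\red(u,P)\neq\red(v,P)$) and that the definition of blue applies with $x=u$, $y=v$ even when $u=a$ or $v=b$. Once that is nailed down, the argument is essentially a single application of the definition of a blue edge.
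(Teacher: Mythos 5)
Your proposal is correct and follows essentially the same route as the paper: both arguments locate a blue edge $(a,b)$ on the $u$--$v$ portion of $P$ (which must exist since $\red(u,P)\neq\red(v,P)$) and then apply the definition of a blue edge with the pair $(u,v)$ playing the role of the witnesses, the paper phrasing this contrapositively ("if $\dist_u\geq\dist_v$ then $(a,b)$ would be red") and you phrasing it directly. Your treatment of part (ii) likewise matches the paper's one-line deduction from part (i).
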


%\vspace{-0.5ex}
\begin{proof}
Since $u$ precedes $v$ on $P$ and $\red(u,P)\neq \red(v,P)$, there must be some edge 
$(a,b)\in P$ such that $(a,b)$ is blue on $P$, and $a$, $b$ lie on the $u$-$v$ portion of
$P$ (note that it could be that $a=u$ and/or $b=v$). So if $\dist_u\geq\dist_v$ then 
$(a,b)$ would be classified as red.
Part (ii) follows immediately from part (i).
\end{proof}

%\vspace{-1.5ex}
\paragraph{Orienteering.}
Our algorithms are based on rounding the solution to an exponential-size LP-relaxation of
the problem. A near-optimal solution to this LP can be obtained by solving the dual LP
approximately. The separation oracle for the dual LP corresponds to a 
{\em point-to-point orienteering} problem, which is defined as follows. We are
given an undirected complete graph with nonnegative node-rewards, edge lengths that form a
metric, origin and destination nodes $s$, $t$, and a length bound $B$. The goal is to find
an $s$-$t$ path $P$ of total length at most $B$ that gathers maximum total reward. 
In the {\em rooted orienteering} problem, we only specify the origin $s$, and a path
rooted at $s$. Unless otherwise stated, we use orienteering to mean point-to-point
orienteering. Clearly, an algorithm for \orient can also be used for rooted
orienteering. 
A related problem is the {\em min-excess path} (\mep) problem defined
by~\cite{BlumCKLMM07}, where we are given $s$, $t$, and a target reward $\Pi$, and we seek
to find an $s$-$t$ path of minimum regret that gathers reward at least $\Pi$. 

In the unweighted version of these problems, all node rewards are 0 or 1. Observe that
the weighted versions of these problems can be reduced to their unweighted version in
pseudopolynomial time by making co-located copies of a node. For \orient, by suitably
scaling and rounding the node-rewards, one can obtain a 
$\poly\bigl(\text{input size},\frac{1}{\e}\bigr)$-time reduction where we lose a 
$(1+\e)$-factor in approximation. For \mep, this data rounding yields a bicriteria
approximation where we obtain an $s$-$t$ path with reward at least $\Pi/(1+\e)$. 
Both the unweighted and weighted versions of \orient and \mep are \nphard. The current
best approximation factors for these problems are $(2+\e)$ for \orient due to Chekuri et 
al.~\cite{ChekuriKP12}, and $(2+\e)$ for {\em unweighted} \mep due to Blum et
al.~\cite{BlumCKLMM07}, for any positive constant $\e$.

%\vspace{-0.5ex}
\section{An LP-rounding constant-factor approximation for (additive) {\large \nrvrp}} 
\label{sec:symmetric} 
%\vspace{-1ex}
We consider the following configuration-style LP-relaxation for \rvrp, which was also
mentioned in~\cite{BockGKS11}.
Let $\C_R$ denote the collection of all rooted paths with regret at most $R$.  
We introduce a variable $x_P$ for each path $P \in \C_R$ to denote if path $P$ is chosen.  
Throughout, we use $P$ to index paths in $\C_R$. 
\begin{equation}
\min \quad \sum_{P} x_P \qquad 
\text{s.t.} \qquad \sum_{P: v \in P} x_P \geq 1 \quad \frall v\in V, \qquad 
x_P \geq 0 \quad \frall P. \tag{P} \label{minklp}
\end{equation}
Let $\OPT$ denote the optimal value of \eqref{minklp}. 
Note that $\OPT\geq 1$. 
It is easy to give a reduction from \tsp showing that it is \npcomplete to decide if there 
is a feasible solution that uses only 1 path; hence, it is \nphard to achieve an
approximation factor better than 2 (Theorem~\ref{symkhard}). 
Complementing this, we devise an algorithm for \rvrp based on LP-rounding 
that achieves a constant approximation ratio (and thus yields a corresponding
integrality-gap bound), which is a significant improvement over the previous-best 
$O(\log n)$-approximation ratio obtained by~\cite{BockGKS11}. 
Although \eqref{minklp} has an exponential number of variables, one can obtain a
near-optimal solution $x^*$ by solving the dual LP (which has an exponential number of
constraints) to near-optimality, which can be achieved by using an approximation 
algorithm for orienteering to obtain an approximate separation oracle for the dual. We 
prove the following lemma in Section~\ref{lpsolve}.

\begin{lemma}\label{lem:lp} \label{klplem}
We can use a $\gamma_{\morient}$-approximation algorithm for orienteering to efficiently
compute a feasible solution $x^*$ to \eqref{minklp} of value at most
$\gamma_{\morient}\cdot\OPT$. 
\end{lemma}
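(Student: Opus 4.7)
The plan is to pass to the dual of \eqref{minklp} and invoke the standard Ellipsoid-with-approximate-separation paradigm, using the $\gamma_\morient$-approximation for orienteering as the approximate oracle.  The dual LP has a variable $y_v\ge 0$ for each $v\in V$ and reads
\[
\max\ \sum_{v\in V}y_v\qquad\text{s.t.}\qquad \sum_{v\in P}y_v\leq 1 \text{ for every }P\in\configs,\qquad y\geq 0.
\]
Given $y\ge 0$, exact separation asks whether some $P\in\configs$ satisfies $\sum_{v\in P}y_v>1$.  Since a rooted path $P$ ending at a node $w$ belongs to $\configs$ iff $c(P)\leq \dist_w+R$, by enumerating the candidate endpoint $w\in V$ the separation problem reduces to $O(n)$ instances of rooted point-to-point orienteering with source $r$, sink $w$, length bound $\dist_w+R$, and node-rewards $\{y_v\}$.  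A standard $(1+\e)$-factor scaling-and-rounding of the rewards further reduces the weighted version to the unweighted one in polynomial time, with a loss that can be absorbed into $\gamma_\morient$.

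Next I would build the approximate oracle.  Given $y$, for each endpoint $w$ run the $\gamma_\morient$-approximation, obtaining a path $P_w$ whose reward is within a factor $\gamma_\morient$ of the best regret-$R$, $r$--$w$ path's reward.  If some $P_w$ has $\sum_{v\in P_w}y_v>1$, return it as a separating hyperplane; otherwise no $P\in\configs$ can have reward exceeding $\gamma_\morient$, so the rescaled vector $y/\gamma_\morient$ is dual feasible.  Crucially, every ``violated'' answer of the oracle is witnessed by an explicit path in $\configs$.

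Finally, I would invoke the standard Carr--Vempala/Jain-style analysis of Ellipsoid with approximate separation.  Running the Ellipsoid method on the dual with the above oracle produces, in polynomial time, a dual vector $y^*$ of value at least $\OPT$ such that $y^*/\gamma_\morient$ is dual feasible, along with the polynomial-size subfamily $\configs'\subseteq\configs$ consisting of the paths returned as violations during the run.  Solving \eqref{minklp} restricted to variables $\{x_P : P\in\configs'\}$ exactly then gives $x^*$; its dual is obtained from the dual above by keeping only the constraints indexed by $\configs'$, and $y^*/\gamma_\morient$ remains feasible for this restricted dual with objective at least $\OPT/\gamma_\morient$.  LP duality applied to this restricted primal/dual pair yields $\sum_P x^*_P\leq \gamma_\morient\cdot\OPT$, and $x^*$ is feasible for \eqref{minklp} since $\configs'\subseteq\configs$.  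The main obstacle is bookkeeping---carefully stating the approximate-oracle variant of Ellipsoid so that both the scaled-feasible dual $y^*$ and the witnessing polynomial-size constraint set emerge from a single run---but the underlying technique is entirely standard.
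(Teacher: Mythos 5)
Your overall route---passing to the dual, enumerating the endpoint $w$ so that separation becomes point-to-point orienteering with budget $\dist_w+R$, and running the ellipsoid method with the resulting approximate oracle---is exactly the paper's proof. The problem is your final step. Exhibiting $y^*/\gamma_{\morient}$ as a feasible point of the restricted dual with objective at least $\OPT/\gamma_{\morient}$ only shows, by weak duality, that the optimum of the primal restricted to $\configs'$ is \emph{at least} $\OPT/\gamma_{\morient}$; it gives no upper bound on $\sum_P x^*_P$, which is what the lemma asserts. To bound $\sum_P x^*_P$ from above you need an \emph{upper} bound on the optimum of the restricted dual $\max\bigl\{\sum_v y_v \,:\, \sum_{v\in P}y_v\le 1\ \forall P\in\configs',\ y\ge 0\bigr\}$, and a feasible point cannot supply that.

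The correct closing move uses the ellipsoid run that \emph{fails}, not the one that succeeds. Binary-search for the largest $\nu^*$ at which the oracle-driven run returns a point $y$ with $y/\gamma_{\morient}$ dual feasible and $\sum_v y_v\ge \gamma_{\morient}\nu^*$ (so $\nu^*\le\OPT$). At threshold $\nu^*+\e$ the ellipsoid method declares the set of dual vectors of value at least $\gamma_{\morient}(\nu^*+\e)$ empty, and its certificate of emptiness consists solely of the polynomially many violated path-inequalities it collected, i.e., constraints indexed by $\configs'$. Hence the restricted dual optimum is below $\gamma_{\morient}(\nu^*+\e)$, and LP duality (equivalently, Farkas' lemma) applied to this polynomial-size pair yields a feasible primal solution supported on $\configs'$ of value at most $\gamma_{\morient}(\nu^*+\e)\le\gamma_{\morient}\cdot\OPT+\gamma_{\morient}\e$, with $\e$ taken exponentially small so the error is absorbed. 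This is precisely how the paper finishes; as written, your duality step runs in the wrong direction and the bound does not follow.
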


Let $k^*=\sum_P x^*_P$ (so $k^*\leq\gamma_{\morient} \cdot \OPT$).  
Our goal is to round $x^*$ to a solution using at most $O(k^*)$ paths that have average 
regret $O(R)$. We can then apply Lemma~\ref{avg2max} to obtain $O(k^*)$ paths, each having
regret at most $R$, and thereby obtain an $O(1)$-approximate solution. 
We prove the following theorem.

\begin{theorem}\label{thm:schoolbus} \label{minkapx}
We can efficiently round $x^*$ to a solution using at most
$(8 + 4\sqrt 3)k^*+1$ rooted paths. 
This yields $(8+4\sqrt{3})\gm_{\morient}\OPT+1\leq 30.86\cdot\OPT$ rooted paths by taking 
$\gm_{\morient}=2+\e$~\cite{ChekuriKP12}, and shows that the integrality gap of
\eqref{minklp} is at most $9+4\sqrt{3}\leq 15.93$. 
\end{theorem}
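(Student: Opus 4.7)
The plan is to round $x^*$ following the witness-node approach outlined in the introduction: select a witness set $W\subseteq V$ together with a forest $F$ on $V$ that attaches every $v\in V\setminus W$ to some witness cheaply; round a shortcut of $x^*$ supported on $W$ via flow integrality to get $O(k^*)$ integer rooted paths covering $W$ with bounded total regret; and extend these paths along $F$ to cover $V$. Lemma~\ref{avg2max} then converts a bound on total regret into the claimed bound on the number of regret-$\leq R$ paths.

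For the witness-and-forest construction I would exploit the red/blue classification of Definition~\ref{redblue}. By Lemma~\ref{lem:redcost}, each $P\in\supp(x^*)$ has red edges of total $c$-cost at most $\tfrac{3}{2}c^{\reg}(P)\leq\tfrac{3}{2}R$, so the fractional red $c$-cost is at most $\tfrac{3}{2}k^*R$. I would grow components greedily along red intervals, absorbing nodes into a component anchored at a designated witness $w$ until the total $x^*$-mass of paths that pass through a specific red interval containing $w$ reaches some threshold $\lambda$; these paths form the \emph{witness paths} $\mathcal{W}_w$ of $w$. By construction (a) each $w\in W$ is covered by $\mathcal{W}_w$ to extent $\geq\lambda$; (b) for each path $P$, the witnesses $\{w:P\in\mathcal{W}_w\}$ come from distinct red intervals of $P$, so Lemma~\ref{dinc}(ii) places them in strictly increasing order of $\dist$ along $P$; and (c) the absorbed red $c$-mass bounds $c(F)$ by $O(k^*R/\lambda)$, while every $v\in V\setminus W$ attaches to its witness through $F$.

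Given $W$, $F$, and the $\mathcal{W}_w$'s, scale $x^*$ by $1/\lambda$ and replace each $P$ by its shortcut $P'$ retaining only the witnesses $\{w:P\in\mathcal{W}_w\}$ (so $P'$ still begins at $r$). The resulting fractional $\tilde{x}$ covers each $w\in W$ to extent at least $1$, has $c^{\reg}$-cost at most $k^*R/\lambda$, and, when each $P'$ is oriented away from $r$, is supported on a DAG by Lemma~\ref{dinc}(ii). Viewing $\tilde{x}$ as a fractional $r$-rooted flow with at least one unit of in-flow at every $w$, flow-integrality on the DAG rounds it to an integer flow of value $\leq\lceil k^*/\lambda\rceil$ and $c^{\reg}$-cost $\leq k^*R/\lambda$, whose path decomposition yields $\leq\lceil k^*/\lambda\rceil$ integer rooted paths covering $W$ of total regret $\leq k^*R/\lambda$. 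I then extend each path by attaching, at each witness $w$ it contains, a DFS traversal of the subtree of $F$ rooted at $w$, so the extended paths cover $V\setminus W$; using Fact~\ref{obs:regret}(ii) (a tree traversal has equal $c$- and $c^{\reg}$-cost) and charging the extra regret introduced per subsequent node against $c(F)$, the total regret grows by $O(c(F))=O(k^*R/\lambda)$. The upshot is $\leq\lceil k^*/\lambda\rceil$ rooted paths covering $V$ of total regret $\beta(\lambda)k^*R$, and Lemma~\ref{avg2max} gives $\leq\beta(\lambda)k^*+\lceil k^*/\lambda\rceil$ paths each of regret $\leq R$.

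The main obstacle is the joint design of $W$ and the $\mathcal{W}_w$'s so that conditions (a)--(c) hold simultaneously with explicit constants. The parameter $\lambda$ governs the core trade-off: smaller $\lambda$ reduces the blow-up in the DAG-rounded flow but increases the cost of $F$; optimizing the resulting bound of the form $A\lambda k^*+Bk^*/\lambda+O(1)$ over $\lambda$ yields the constant $8+4\sqrt{3}=2\sqrt{AB}$ for the right constants $A,B$ (with the $+1$ arising from the ceiling). A secondary subtlety is the extension step: walking along $F$ must be arranged so that each vertex is visited at most once and each piece remains a rooted simple path, which requires splitting the DFS traversals into a controlled number of subpaths whose regret can be charged via Fact~\ref{obs:regret}(ii).
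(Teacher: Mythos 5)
Your high-level architecture coincides with the paper's: witness nodes, a low-cost forest, shortcutting to a DAG, flow integrality, grafting, and Lemma~\ref{avg2max}. The genuine gap is precisely the step you flag as ``the main obstacle'': the joint construction of $W$, $F$, and the witness paths with provable bounds. The paper does not grow components greedily; it defines $\tau(v,S)=\sum_{P:\red(v,P)\sse S}x^*_P$ and the downwards-monotone cut-requirement function $f(S)=1$ iff $\tau(v,S)<\dt$ for all $v\in S$, and invokes the Goemans--Williamson $2$-approximation for $\{0,1\}$ downwards-monotone functions. Feasibility of $z_e=\sum_{P:\,e\text{ red on }P}x^*_P/(1-\dt)$ for the cut-covering LP (via Lemma~\ref{lem:redcost}) gives $c(F)\leq\frac{3}{1-\dt}k^*R$, and every root-free component $Z$ of $F$ automatically contains a witness $w$ with $\tau(w,Z)\geq\dt$ simply because $f(Z)=0$. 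Your greedy sketch establishes neither that the process terminates with every node in a component possessing a valid witness, nor any cost bound; asserting properties (a)--(c) ``by construction'' is exactly the content that needs proof.

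There is also a concrete error that would derail the optimization even if the construction were filled in: the forest-cost dependence on the threshold is not $O(k^*R/\lambda)$ but $O\bigl(k^*R/(1-\lambda)\bigr)$. The reason is that while $\tau(v,S)<\lambda$ for all $v\in S$, the red intervals through $v$ \emph{not} contained in $S$ carry $x^*$-mass more than $1-\lambda$ and must cross $\dt(S)$, which is what forces the $\frac{1}{1-\lambda}$ factor. With your claimed dependence, the final count $\beta(\lambda)k^*+\ceil{k^*/\lambda}$ would be monotone decreasing as $\lambda\to 1$ and there would be no trade-off at all; the correct quantity to minimize is $\bigl(\frac{2}{\dt}+\frac{6}{1-\dt}\bigr)k^*+1$, whose minimum at $\dt=\frac{\sqrt{3}-1}{2}$ gives $8+4\sqrt{3}$ (this is of the form $\frac{A}{\dt}+\frac{B}{1-\dt}$, not $A\lambda+B/\lambda$). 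Your grafting step is fine in spirit but overcomplicated: the paper doubles each component $Z$ into a tour $h(Z)$ of cost at most $2c(Z)$ and splices $h(Z_w)$ in at each witness $w$; by Fact~\ref{obs:regret}(ii) the added regret equals the $c$-cost of the cycle, and no simplicity of the resulting walks is required.
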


We first give an overview on the rounding procedure 
that obtains a slightly worse approximation ratio. We show in 
Section~\ref{improve} how to refine this to obtain the guarantee stated above.
Let $\supp(x^*):=\{P: x^*_P > 0\}$ be the paths in the support of $x^*$.
To gain some intuition, suppose
first that it happens that when we direct every path $P\in\supp(x^*)$ away from $r$, we
obtain a directed graph $H$ that is acyclic. We can then set up a network-flow 
problem to find a minimum $c^{\reg}$-cost flow in $H$ of value at most $\ceil{k^*}$ such
that every node has at least one unit of flow entering it. Since $x^*$ can be viewed as a
path decomposition of a feasible
flow of $c^{\reg}$-cost at most $k^* R$, by the integrality property of flows, there is an
integral flow of $c^{\reg}$-cost at most $k^* R$. Since $H$ is acyclic, this flow may be
decomposed into at most $\ceil{k^*}$ paths that cover all the nodes, and the average
regret of this path collection is at most $R$, so we obtain the desired rounding.

Of course, in general $H$ will not be acyclic and rounding $x^*$ as above may yield an
integral flow that does not decompose into a collection of only paths. So we seek to identify a
subset $W\sse V$ of ``witness'' nodes and a collection of $O(k^*)$ fractional paths from
$\C_R$ covering $W$ such that: 
(a) directing each path in this collection away from $r$ yields a DAG; and
(b) given any collection of integral paths covering $W$, one can graft the nodes of $V\sm W$
into these paths (to obtain new paths covering $V$) incurring an additional $c^{\reg}$-cost of
$O(k^* R)$. 
Property (a) allows one to use the aforementioned network-flow argument to 
obtain $O(k^*)$ paths covering $W$ with total regret $O(k^*R)$, and property (b) enables
one to modify this to obtain $O(k^*)$ (integral) paths covering $V$ while keeping the
total regret to $O(k^*R)$ (so that one can then apply Lemma~\ref{avg2max}).  
 
To obtain $W$, we carefully construct a forest $F$ of cost $O(k^* R)$ (step A1 below) with
the property that for every component $Z$ of $F$, we can associate a single node $w\in Z$,
which we include in $W$, such that there is a 
total $x^*$-weight of at least $0.5$ in paths $P$ containing $w$ for which 
$\red(w,P)\sse Z$.  
Notably, we achieve this in a rather clean and simple way by defining a 
{\em downwards-monotone cut-requirement function} based on the fractional solution $x^*$ 
that encodes the above requirement, an idea that we believe has wider applicability, especially 
for network-design problems.%
\footnote{We point out that the LP-rounding algorithms of~\cite{GuptaRS07,GuptaK09}
for {\em stochastic Steiner tree} problems use the LP-solution to guide a primal-dual 
process for constructing a suitable forest, which
is in fact precisely the primal-dual process of~\cite{AgrawalKR95,GoemansW95} applied to a
suitable cut-requirement function. 
By making this function explicit, we obtain a more illuminating explanation for the
algorithm and a simpler, cleaner description and analysis.} 

Once we have such a forest, property (b) holds by construction since the total cost of
$F$ is $O(k^* R)$ (Lemma~\ref{lem:forest}). Moreover (step A2), if we shortcut each path
$P\in\supp(x^*)$ so that it only contains nodes $w\in W$ for which $\red(w,P)$ is
contained in some component of $F$, then the resulting paths cover each node in $W$ to an
extent of at least $0.5$ and satisfy the conditions of part (ii) of Lemma~\ref{dinc} 
(see Lemma~\ref{witpaths}). So by doubling the fractional values of the resulting paths,
we obtain a fractional-path collection satisfying property (a). Hence, we can obtain
$O(k^*)$ integral paths covering $W$ (step A3) and attach the nodes of $V\sm W$ to these
paths (step A4) while ensuring that the total regret remains $O(k^* R)$
(Lemma~\ref{lem:patching}), and then apply Lemma~\ref{avg2max}. 
We prove in Theorem~\ref{apx1} that the resulting solution uses at most $16k^*+1\leq
16\gm_{\morient}\cdot\OPT+1$ paths. 
In Section~\ref{improve}, we show how to obtain the improved guarantee stated in
Theorem~\ref{minkapx} by fine-tuning the threshold used to form the forest $F$.
We now describe the algorithm in detail and proceed to analyze it.

{\small \vspace{5pt}
\begin{algorithm} \label{mainalg}
\vspace{-5pt} \hrule \vspace{5pt}

\noindent
Input: A fractional solution $x^*$ to \eqref{minklp} obtained via Lemma~\ref{klplem};
$k^*=\sum_P x^*_P$.

\noindent 
Output: $O(k^*)$ paths, each having regret at most $R$, covering all the nodes.
\end{algorithm}
\begin{labellist}
\item {\bf Finding a low-cost forest \boldmath $F$.\ } For a subset 
$S\sse V\cup\{r\}$ and a node $v$, define $\tau(v,S):=\sum_{P:\red(v,P)\sse S}x^*_P$; 
define $f(S)=1$ if $\tau(v,S)<\frac{1}{2}$ for all $v\in S$, and 0 otherwise. Note that
$f$ is a {\em downwards-monotone} cut-requirement function: if $\es\neq A\sse B$ then 
$f(A)\geq f(B)$. We call a set $S$ with $f(S)=1$, an {\em active} set.
\begin{list}{A\arabic{enumi}.\arabic{enumii}}{\topsep=0.5ex \itemsep=0ex
    \usecounter{enumii}}
\item Use the 2-approximation algorithm for $\{0,1\}$ downwards-monotone functions
in~\cite{GoemansW94} to obtain a forest $F$ such that $|\dt(S)\cap F|\geq f(S)$ for every
set $S\sse V\cup\{r\}$. 

\item For every component $Z$ of $F$ with $r\notin Z$, choose a node $w\in Z$ such 
that $\tau(w,Z)\geq\frac{1}{2}$ (which exists since $f(Z)=0$). 
Call $w$ the {\em witness node} for $Z$, and denote $Z$ by $Z_w$. 
Obtain a tour $h(Z)$ traversing all nodes of $Z$ by doubling the edges of $Z$ and
shortcutting. Let $W\sse V$ be the set of all witness nodes. 
\end{list}

\item {\bf Obtaining a fractional acyclic flow covering \boldmath $W$.\ }
\begin{list}{A\arabic{enumi}.\arabic{enumii}}{\topsep=0.5ex \itemsep=0ex
    \usecounter{enumii}}
\item For every path $P\in\supp(x^*)$ we do the following. 
Let $P_W\sse P\cap W$ be the set of witness nodes $w\in P$ such that $\red(w,P)$ is contained
in $Z_w$. 
We shortcut $P$ past the nodes in $P\sm(P_W\cup\{r\})$ to obtain a
rooted path $\phi(P)$ spanning the nodes in $P_W$. Note that 
shortcutting does not increase the $c^{\reg}$-cost. 
Let $\C'\sse\C_R=\{\phi(P): P\in\supp(x^*), \phi(P)\neq\{r\}\}$ denote this new
collection of non-trivial paths. 

\item Let $H=(\{r\}\cup V,A_H)$ be the directed graph obtained by directing each path in
$\C'$ away from $r$. Let $z=(z_a)_{a\in A_H}$ be the flow that sends 
$\sum_{P:\phi(P)=P'}x^*_P$ flow along each path $P'\in\C'$. 
We prove in Lemma~\ref{witpaths} that $H$ is acyclic, and that
$z^{\into}(w):=\sum_{a\in\dt_H^{\into}(w)}z_a\geq\frac{1}{2}$ for every $w\in W$. 
\end{list}

\item Use the integrality property of flows to round $2z$ to an integer flow $\hz$ of no
greater $c^{\reg}$-cost and value $k\leq\ceil{2k^*}$ such that $\hz^{\into}(w)\geq 1$
for every $w\in W$. 
Since $H$ is acyclic, we may decompose $\hz$ into $k$ rooted paths
$\hP_1,\ldots,\hP_k$ so that (possibly after some shortcutting) every node of $W$ lies
on exactly one $\hP_i$ path.

\item {\bf Grafting in the nodes of \boldmath $V\sm W$.\ }
If there is a component $Z$ of $F$ containing $r$, we pick an arbitrary path, say $\hP_1$,
and modify $\hP_1$ by traversing $h(Z)$ first and then visiting the nodes of
$\hP_1\sm\{r\}$ (in the same order as $\hP_i$). Next, for every path
$\hP_i,\ i=1,\ldots,k$, we walk along $\hP_i$ and each time we visit a new node $w\in W$
on $\hP_i$ we traverse $h(Z_w)$ before moving on to the next node on $\hP_i$. Let $\tP_i$
denote the resulting new path. 

\item Apply Lemma~\ref{avg2max} to $\tP_1,\ldots,\tP_k$ to obtain the final set of paths
(having maximum regret $R$). 
\end{labellist}
\hrule}

\paragraph{Analysis.} Let $\comp(F)$ denote the set of components of $F$. Note that
$V\sse\bigcup_{Z\in\comp(F)} Z$.

\begin{lemma}\label{lem:forest}
The forest $F$ computed in step A1 has cost at most $6\cdot k^*\cdot R$. Thus,
$\sum_{Z\in\comp(F)}c(h(Z))\leq 12k^* R$.
\end{lemma}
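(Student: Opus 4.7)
The plan is to lean on the Goemans--Williamson $2$-approximation for $\{0,1\}$ downwards-monotone cut-covering functions underlying step A1.1, which gives $c(F) \leq 2 \cdot \min\{c^{T}z : z\geq 0,\ \sum_{e \in \dt(S)}z_e \geq f(S)\ \forall S\sse V\cup\{r\}\}$. So it suffices to exhibit one fractional feasible $z$ of cost at most $3k^*R$.

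The natural choice is $z_e := 2\sum_{P\,:\,e\text{ red on }P} x^*_P$ for every edge $e$. Bounding $c^{T}z$ is then immediate by swapping the order of summation and invoking Lemma~\ref{lem:redcost} together with the fact that every $P\in\C_R$ has $c^{\reg}(P)\leq R$:
\[
\sum_e c_e z_e \;=\; 2 \sum_P x^*_P \sum_{e\text{ red on }P} c_e \;\leq\; 2 \sum_P x^*_P \cdot \tfrac{3}{2}c^{\reg}(P) \;\leq\; 3 k^* R.
\]

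The real work is verifying feasibility of $z$. Fix an active set $S$ and any $v\in S$. Since $\sum_{P:\,v\in P} x^*_P \geq 1$ (the LP covering constraint when $v\neq r$, and equal to $k^*\geq 1$ when $v=r$) and $\tau(v,S)<\tfrac{1}{2}$, the $x^*$-mass of paths $P$ with $v\in P$ for which $\red(v,P)\not\sse S$ is strictly more than $\tfrac{1}{2}$. For every such $P$, the red interval $\red(v,P)$ is a connected subpath of $P$ containing both $v\in S$ and a node outside $S$, and therefore contains an edge in $\dt(S)$ that is red on $P$. Hence
\[
\sum_{e\in\dt(S)} z_e \;=\; 2\sum_P x^*_P \cdot \bigl|\{e\in\dt(S):e\text{ red on }P\}\bigr| \;\geq\; 2\cdot\tfrac{1}{2} \;=\; 1 \;=\; f(S).
\]
Combining the two bounds yields $c(F) \leq 6k^*R$, and the tour bound follows because each $h(Z)$ is obtained by doubling and shortcutting, so $\sum_{Z\in\comp(F)} c(h(Z)) \leq 2c(F) \leq 12k^*R$.

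The step to get right is the feasibility argument, and its elegance rests on two calibrations: the threshold $\tfrac{1}{2}$ in the definition of $f$ is chosen precisely so that the unit covering at $v$ minus the ``internally contained'' $\tau(v,S)$-mass still leaves at least $\tfrac{1}{2}$ units of paths that must escape $S$ through $\red(v,P)$; and the focus on \emph{red} edges (rather than all edges of $P$) is what allows Lemma~\ref{lem:redcost} to charge the total cost back to $\sum_P x^*_P c^{\reg}(P)$, giving only an $O(1)$ overhead rather than a polylogarithmic one.
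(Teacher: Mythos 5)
Your proposal is correct and follows essentially the same route as the paper: the same fractional solution $z_e = 2\sum_{P:\,e\text{ red on }P}x^*_P$ to the cut-covering LP, the same cost bound via Lemma~\ref{lem:redcost} and $c^{\reg}(P)\leq R$, and the same feasibility argument (your observation that each path with $\red(v,P)\not\sse S$ contributes a red edge crossing $\dt(S)$ is exactly the paper's inequality $\sum_{P:\red(v,P)\not\sse S}x^*_P\leq\sum_{e\in\dt(S)}\sum_{P:e\in\red(v,P)}x^*_P$), followed by the Goemans--Williamson factor of $2$ and the doubling bound $c(h(Z))\leq 2c(Z)$.
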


\begin{proof}
Consider the following LP for covering the cuts $\dt(S)$ corresponding to active sets $S$.
\begin{equation}
\min \quad \sum_{e} c_ez_e \qquad 
\text{s.t.} \qquad z(\delta(S)) \geq 1 \quad \forall\es\subsetneq S \sse V\cup\{r\},
\qquad z \geq 0. \tag{C-P} \label{cutlp}
\end{equation}
Define $z$ by setting $z_e=\sum_{P: e\text{ is red on }P} 2\cdot x^*_P$ for all $e$.
This is a feasible solution to \eqref{cutlp} since for every active set $S$ and every
node $v\in S$, we have that 
\[\frac{1}{2}<1-\tau(v,S)\leq\sum_{P:\red(v,P)\not\sse S}x^*_P
\leq\sum_{e\in\dt(S)}\Bigl(\sum_{P:e\in\red(v,P)}x^*_P\Bigr)\leq z(\dt(S))/2.\] 
Also, $\sum_e c_ez_e=2\sum_Px^*_P\bigl(\sum_{\text{$e$ red on $P$}} c_e\bigr)\leq 
3\sum_P c^{\reg}(P)x^*_P\leq 3k^*R$. The penultimate inequality follows from 
Lemma~\ref{lem:redcost}, and the last inequality follows because 
$\supp(x^*)\sse\C_R$ and $\sum_P x^*_P=k^*$.
The 2-approximation algorithm of \cite{GoemansW94} then guarantees that 
$c(F)\leq 2\cdot\OPT_{\eqref{cutlp}}\leq 6k^*R$,
Since $c(h(Z))\leq 2c(Z)$ for component $Z$ of $F$, we have 
$\sum_{Z\in\comp(F)} c(h(Z))\leq 12k^*R$. 
\end{proof}

\begin{lemma} \label{lem:summary} \label{witpaths}
(i) For every path $P\in\C_R$, every red interval of $P$ contains at most one node of
$P_W$. Therefore, $\phi(P)$ visits nodes $v$ in strictly increasing order of $\dist_v$;
(ii) $\sum_{P:w\in\phi(P)}x^*_P\geq\frac{1}{2}$ for every $w\in W$;
(iii) Hence, the digraph $H$ constructed in step A2 is acyclic, and
$z^{\into}(w)\geq\frac{1}{2}$ for every $w\in W$. 
\end{lemma}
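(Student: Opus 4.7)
The plan is to prove the three parts in order, with part (i) carrying the real structural content and parts (ii) and (iii) reducing to short definition-chasing arguments.

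For part (i), I would argue by contradiction. Suppose some red interval $I$ of $P$ contains two distinct nodes $w_1,w_2\in P_W$. Since $\red(w,P)$ is by definition the maximal all-red subpath of $P$ containing $w$, and $I$ is itself a maximal red subpath, we get $\red(w_1,P)=I=\red(w_2,P)$. The defining condition for $P_W$ then yields $I\sse Z_{w_1}$ and $I\sse Z_{w_2}$. But the components of $F$ are pairwise disjoint, and $w_1\in I\sse Z_{w_2}$ forces $Z_{w_1}=Z_{w_2}$; since step A1.2 picks exactly one witness per component, this means $w_1=w_2$, a contradiction. With at most one $P_W$-node per red interval of $P$, Lemma~\ref{dinc}(ii) applied to $\phi(P)$ immediately delivers strictly increasing $\dist$-values along $\phi(P)$.

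Part (ii) is mostly unpacking of definitions. The witness $w$ was selected in step A1.2 precisely so that $\tau(w,Z_w)\geq\tfrac12$. Recall $\tau(w,Z_w)=\sum_{P:\red(w,P)\sse Z_w}x^*_P$, and $\red(w,P)$ is only meaningful when $w\in P$, so this sum ranges exactly over those $P\in\supp(x^*)$ with $w\in P_W$, equivalently $w\in\phi(P)$. Hence $\sum_{P:w\in\phi(P)}x^*_P=\tau(w,Z_w)\geq\tfrac12$. For part (iii), part (i) tells us every $\phi(P)\in\C'$ visits nodes in strictly increasing $\dist$-order, so after directing away from $r$ each arc $(u,v)$ of $H$ satisfies $\dist_u<\dist_v$, and therefore $H$ cannot contain a directed cycle. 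Moreover, $r\notin W$ because witnesses are drawn only from components $Z$ of $F$ with $r\notin Z$, so in each directed $\phi(P)$ with $w\in\phi(P)$ there is a unique arc entering $w$; summing then gives $z^{\into}(w)=\sum_{P:w\in\phi(P)}x^*_P\geq\tfrac12$ by part (ii).

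The main obstacle is part (i): it is the one place where the argument genuinely uses both the disjointness of the components of $F$ and the fact that the $\red(\cdot,P)$ operator collapses within a single red interval. Parts (ii) and (iii) essentially unwind the definitions of $\tau$, $P_W$, $\phi$, $H$, and $z$, relying on (i) only through the acyclicity that the strictly-increasing $\dist$-order of $\phi(P)$ provides.
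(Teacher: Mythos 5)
Your proposal is correct and follows essentially the same route as the paper: part (i) via the observation that two $P_W$-nodes in one red interval would force their (distinct) components $Z_{w_1},Z_{w_2}$ to intersect, part (ii) by unwinding $\tau(w,Z_w)$, and part (iii) as an immediate consequence. The extra care you take in (iii) (noting $r\notin W$ and the in-arc count) is a harmless elaboration of what the paper states as "follows immediately."
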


\begin{proof}
Part (iii) follows immediately from parts (i) and (ii).
For part (i), recall that $P_W=\{w\in P\cap W: \red(w,P)\sse Z_w\}$. 
If there are two nodes $u$, $w$ of $P_W$ contained in some red interval of $P$ then
$Z_u\cap Z_w\neq\es$, but this contradicts the fact that we add at most one node to $W$ 
from each component of $F$. 
It follows that $\phi(P)$ contains at most one node from each red interval of $P$, and
by Lemma~\ref{lem:cross}, we have that $\phi(P)$ visits nodes $v$ in strictly increasing
order of distance $\dist_v$. 
For part (ii), we note that for a node $w\in W$, by definition, we have that $w\in\phi(P)$ 
iff $\red(w,P)\sse Z_w$. So 
$\sum_{P:w\in\phi(P)}x^*_P=\sum_{P:\red(w,P)\sse Z_w}x^*_P=\tau(w,Z_w)\geq\frac{1}{2}$,
where the last inequality follows from the definition of $Z_w$.
\end{proof}

\begin{lemma}\label{lem:patching}
The total regret of the paths $\tP_1,\ldots,\tP_k$ obtained in step A4 is at most 
$14\cdot k^* \cdot R$. 
\end{lemma}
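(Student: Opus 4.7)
The plan is to decompose the total regret of $\tP_1, \ldots, \tP_k$ into two contributions: first, the total regret of the rounded paths $\hP_1, \ldots, \hP_k$ produced in step A3, before any grafting; and second, the additional regret incurred in step A4 when the component tours $h(Z)$ are spliced in. I expect the bound to come out as $2 k^* R + 12 k^* R = 14 k^* R$.

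For the first contribution, I would first argue that the fractional flow $z$ constructed in step A2 has $c^{\reg}$-cost at most $k^* R$. Indeed, $z$ routes $\sum_{P : \phi(P) = P'} x^*_P$ units of flow along each $P' \in \C'$, and since shortcutting cannot increase the $c^{\reg}$-cost of a path, the cost of $z$ is at most $\sum_P x^*_P \cdot c^{\reg}(P) \leq k^* R$ (using $\supp(x^*) \subseteq \C_R$). Doubling, $2z$ has $c^{\reg}$-cost at most $2 k^* R$, and by the integrality property invoked in step A3 the integer flow $\hz$ inherits this bound. The path decomposition of $\hz$ therefore satisfies $\sum_i c^{\reg}(\hP_i) \leq 2 k^* R$.

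For the second contribution, the key observation is that by Fact~\ref{obs:regret}(ii), the regret of a rooted $r$-to-$v$ path $P$ equals $c(P) - \dist_v$; consequently, splicing a closed tour into $P$ at an internal node changes $c(P)$ by the tour length while leaving the endpoint (and hence $\dist_v$) unchanged, so the regret of the new path increases by exactly the tour length. Applied to step A4, each component $Z \in \comp(F)$ contributes $c(h(Z))$ to the total regret exactly once: for a non-root component $Z_w$, the splice happens in the unique $\hP_i$ covering the witness $w$ (recall each $w \in W$ lies on exactly one $\hP_i$), and for the root component (if any), the splice happens in $\hP_1$. Summing over all components of $F$ and invoking Lemma~\ref{lem:forest} bounds the total splice contribution by $\sum_{Z \in \comp(F)} c(h(Z)) \leq 12 k^* R$.

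Adding the two contributions yields $\sum_i c^{\reg}(\tP_i) \leq 14 k^* R$. I do not anticipate a real obstacle: the argument is largely careful bookkeeping, and the one subtle ingredient is leveraging the identity $c^{\reg}(P) = c(P) - \dist_v$ for rooted paths to see that splicing a closed tour changes regret by exactly the tour length, rather than by something more complicated that would depend on where the splice occurs along $P$.
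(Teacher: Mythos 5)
Your proposal is correct and follows essentially the same route as the paper's proof: bound $\sum_i c^{\reg}(\hP_i)$ by the $c^{\reg}$-cost of $2z$, which is at most $2k^*R$ since shortcutting does not increase regret-cost and $\supp(x^*)\sse\C_R$, and then charge each spliced tour $h(Z)$ exactly once at cost $c^{\reg}(h(Z))=c(h(Z))$ (Fact~\ref{obs:regret}(ii)), summing to at most $12k^*R$ by Lemma~\ref{lem:forest}. Your observation that splicing a closed tour at an internal node increases the regret by exactly the tour length is precisely the identity the paper invokes.
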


\begin{proof} 
Let $Z_r$ denote the component of $Z$ containing $r$; let $Z_r=\es$ and $h(Z_r)=0$ if
there is no such component. 
The regret of path $\tP_1$ is 
$c^{\reg}(\hP_1)+\sum_{w\in\hP_1} c^{\reg}\bigl(h(Z_w)\bigr)$ and the regret of $\tP_i$
for $i\neq 1$ is $c^{\reg}(\hP_i)+\sum_{w\in\hP_i:w\neq r} c^{\reg}\bigl(h(Z_w)\bigr)$.  

The paths $\hP_1,\ldots,\hP_k$ 
obtained after step A3 have $c^{\reg}$-cost at most the $c^{\reg}$-cost of $2z$, which is
\linebreak 
$2\sum_{P'\in\C'}c^\reg(P')\sum_{P:\phi(P)=P'}x^*_P\leq\sum_Pc^\reg(P)x^*_P\leq 2k^*R$.
Since the $c^\reg$- and $c$-costs of a cycle are identical (by Fact~\ref{obs:regret}), the
total regret of $\tP_1,\ldots,\tP_k$ is at most 
$\sum_{i=1}^k c^\reg(\hP_i)+\sum_{Z\in\comp(F)} c\bigl(h(Z)\bigr)\leq
2k^*R+12k^*R=14k^*R$, where we use Lemma~\ref{lem:forest} for the last
inequality. 
\end{proof}

\begin{theorem} \label{apx1}
Algorithm~\ref{mainalg} returns a feasible solution with at most 
$16k^*+1\leq 16\gm_{\morient}\cdot\OPT+1$ paths.
\end{theorem}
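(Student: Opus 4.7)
The plan is to combine the preceding lemmas---which have already done the substantive work---with a final application of Lemma~\ref{avg2max}, leaving only feasibility verification and arithmetic bookkeeping.

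For feasibility, I would argue that every node of $V$ lies on some output path. By construction $V \sse \bigcup_{Z \in \comp(F)} Z$, so each $v \in V$ belongs to some component $Z$ of $F$. If $Z = Z_w$ for a witness $w \in W$, then Lemma~\ref{witpaths}(iii) gives $\hz^{\into}(w) \geq 1$, so some $\hP_i$ visits $w$; the grafting in step A4 then splices the tour $h(Z_w)$ in at that visit, covering $v$. If instead $Z$ is the root component, the prepending of $h(Z)$ to $\hP_1$ in step A4 covers $v$. Since Lemma~\ref{avg2max} only breaks paths while preserving the union of nodes visited, every $v \in V$ remains covered in the final output of step A5.

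For the count, I would track path counts through both rounding stages. The flow $z$ has total value $\sum_{P' \in \C'} \sum_{P : \phi(P)=P'} x^*_P \leq k^*$, so the integral flow $\hz$ obtained by rounding $2z$ has value $k \leq \lceil 2k^* \rceil \leq 2k^* + 1$; acyclicity of $H$ (Lemma~\ref{witpaths}(iii)) lets us decompose $\hz$ into exactly $k$ rooted paths, and the grafting step preserves this number (one $\tP_i$ per $\hP_i$). Lemma~\ref{lem:patching} bounds the total regret of $\tP_1,\ldots,\tP_k$ by $14k^*R$; writing this as $\alpha k R$ with $\alpha = 14k^*/k$ and invoking Lemma~\ref{avg2max} produces at most $(\alpha+1)k = 14k^* + k \leq 14k^* + 2k^* + 1 = 16k^* + 1$ paths, each with regret at most $R$. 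Finally, Lemma~\ref{klplem} gives $k^* \leq \gm_{\morient}\cdot\OPT$, so the path count is at most $16\gm_{\morient}\cdot\OPT + 1$.

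There is no real obstacle at this stage; all the delicate arguments---the cut-requirement encoding that drives the construction of $F$, the acyclicity of $H$ that validates the flow rounding, and the regret accounting for the grafted tours---have already been carried out in the earlier lemmas. The only point meriting brief care is confirming that the flow we round in step A3 has value $\leq \lceil 2k^* \rceil$, which follows since the original fractional flow $z$ only retains mass corresponding to paths with nonempty $\phi(P)$, so $\sum_a z_a \leq k^*$ and thus $\sum_a (2z)_a \leq 2k^*$.
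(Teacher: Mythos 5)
Your proof is correct and follows essentially the same route as the paper: apply Lemma~\ref{avg2max} to $\tP_1,\ldots,\tP_k$ using the $14k^*R$ total-regret bound from Lemma~\ref{lem:patching} and the bound $k\leq\ceil{2k^*}$ from the flow rounding, giving $14k^*+k\leq 16k^*+1$ paths. The additional feasibility verification you include is a welcome (if routine) supplement to the paper's terser argument.
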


\begin{proof}
Applying Lemma~\ref{avg2max} to the paths $\tP_1,\ldots,\tP_k$, which have total regret at
most $14k^* R$ (by Lemma~\ref{lem:patching}), we obtain a collection of $k'$ rooted paths of
maximum regret $R$ whose union covers all nodes, where 
$k'\leq\bigl(\frac{14k^*}{k}+1\bigr)k\leq 14k^*+k\leq 14k^*+\ceil{2k^*}\leq 16k^*+1$.
\end{proof}

\subsection{Improvement to the guarantee stated in Theorem~\ref{minkapx}} 
\label{sec:optimize} \label{improve}
We now describe the improvement that yields Theorem~\ref{minkapx}.
Let $\dt\in(0,1)$ be a parameter that we will fix later. 
The only change is that we now define the cut-requirement function in step A1 as $f(S)=1$
if $\tau(v,S)<\delta$ for all $v\in S$, and 0 otherwise.
This results in a corresponding change to the integer flow $\hz$ obtained in step A2.  

Mimicking the proof of Lemma~\ref{lem:forest}, we see that setting 
$z_e=\sum_{P:\text{$e$ is red on $P$}}x^*_P/(1-\dt)$ yields a feasible solution to
\eqref{cutlp}, and therefore we have, $c(F)\leq\frac{3}{1-\delta}\cdot k^*R$, and
$\sum_{Z\in\comp(F)}c(h(Z))\leq\frac{6}{1-\delta} \cdot k^*R$.
Step A2 is unchanged, but parts (ii) and (iii) of Lemma~\ref{witpaths} need to be suitably
modified: we now have that the flow $z$ satisfies $z^{\into}(w)\geq\dt$ for
every $w\in W$. Correspondingly, we round $z/\dt$ to an integer flow in step A3, and
obtain at most $\ceil{k^*/\dt}$ paths. 
Proceeding as in the proof of Lemma~\ref{lem:patching}, we infer that the total regret of
the paths obtained after grafting in the nodes of $V\sm W$ is at most
$\bigl(\frac{1}{\dt}+\frac{6}{1-\dt}\bigr)k^*R$. 

Applying Lemma~\ref{avg2max}, this yields at most 
$\bigl(\frac{1}{\dt}+\frac{6}{1-\dt}\bigr)k^*+\left\lceil\frac{k^*}{\dt}\right\rceil
\leq\bigl(\frac{2}{\dt}+\frac{6}{1-\dt}\bigr)k^*+1$ paths having regret at most $R$. 
Taking $\dt=\frac{\sqrt{3}-1}{2}$ to minimize the coefficient of $k^*$, we obtain the
guarantee stated in Theorem~\ref{minkapx}.

\section{Multiplicative-{\large \nrvrp}} \label{multreg}
Recall that in multiplicative-\rvrp, we are given a regret-bound $R$, and we want to find
the minimum number of paths covering all nodes so that each node $v$ is visited by time
$R\cdot\dist_v$. When $R=1$, the problem can be solved in polytime (as this is simply
additive-\rvrp with regret-bound 0), so we assume that $R>1$.
We show that multiplicative-\rvrp reduces to \rvrp incurring an
$O\bigl(\log(\frac{R}{R-1}\bigr)$-factor loss. 
The following observation, which falls out of Lemma~\ref{avg2max} will be quite useful. 

\begin{lemma} \label{covering}
Let $\gm_{\mrvrp}$ be the approximation ratio of our \rvrp-algorithm.
Suppose there are $k$ paths covering a given set $S$ of nodes ensuring that every node in
$S$ has additive regret at most $\rho$. For any $\e>0$, one can efficiently
obtain at most $\floor{\gm_{\mrvrp}k\ceil{\frac{1}{\e}}}$ 
\nolinebreak
\mbox{paths covering $S$ such that each node in $S$ has regret at most $\e\rho$.}
\end{lemma}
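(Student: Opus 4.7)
The plan is to derive the result by a single invocation of our \rvrp-algorithm on an appropriately tailored instance: the hypothesized $k$ paths with regret $\leq\rho$ already witness that a \emph{good} solution exists for an \rvrp-instance whose regret bound is $\e\rho$, so it suffices to bound the optimum of this instance and then appeal to $\gm_{\mrvrp}$-approximability.

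First, I would use Lemma~\ref{avg2max} to turn the given paths $P_1,\ldots,P_k$ into a collection whose per-path regret is at most $\e\rho$. Specifically, each $P_i$ has additive regret at most $\rho=(1/\e)\cdot(\e\rho)$. The proof of Lemma~\ref{avg2max}, applied to $P_i$ individually with ``target'' regret $\e\rho$, chops $P_i$ at the first nodes where the regret exceeds $\ell\cdot\e\rho$ and reattaches the resulting tails directly to $r$, producing at most $\max\{\ceil{1/\e},1\}=\ceil{1/\e}$ rooted sub-paths of regret at most $\e\rho$ whose union covers the nodes of $P_i$. Summing over $i$, we obtain a (feasible) collection of at most $\ceil{1/\e}\cdot k$ rooted paths, each of regret at most $\e\rho$, that together cover $S$.

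Second, I would run the $\gm_{\mrvrp}$-approximation algorithm for \rvrp on the instance with node set $S\cup\{r\}$, inherited metric, and regret bound $\e\rho$. Shortcutting each of the paths constructed in the previous step past nodes of $V\setminus S$ does not increase the additive regret of any surviving node (by the triangle inequality on $c$, equivalently by Fact~\ref{obs:regret} for $c^{\reg}$), so that shortcutted collection is a feasible solution to this restricted \rvrp-instance. Hence its optimum is at most $\ceil{1/\e}\cdot k$, and the algorithm returns a solution using at most $\gm_{\mrvrp}\cdot\ceil{1/\e}\cdot k$ paths covering $S$ with each $S$-node having regret at most $\e\rho$. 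Since the number of paths is an integer, this bound may be written as $\floor{\gm_{\mrvrp}\,k\,\ceil{1/\e}}$, as required.

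There is no real obstacle in this argument; the content of the lemma is essentially that Lemma~\ref{avg2max} converts a ``few paths of loose regret'' solution into a ``slightly more paths of tight regret'' witness for a fresh \rvrp-instance, whose \rvrp-optimum we can then control. The only mildly subtle point is observing that the sub-paths produced by Lemma~\ref{avg2max} may traverse nodes of $V\setminus S$, but shortcutting them past such nodes preserves the $\e\rho$ regret bound on the $S$-nodes, which is exactly what is needed to invoke the \rvrp-algorithm on the restricted node set $S\cup\{r\}$.
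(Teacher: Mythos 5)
Your proposal is correct and follows essentially the same route as the paper: break the given paths into at most $\ceil{1/\e}$ pieces each of regret at most $\e\rho$ via Lemma~\ref{avg2max}, use the resulting $k\ceil{1/\e}$ paths as a witness bounding the optimum of the restricted \rvrp instance on $S\cup\{r\}$ with regret bound $\e\rho$, and invoke the $\gm_{\mrvrp}$-approximation. The only (immaterial) difference is that the paper shortcuts the paths to $S$ \emph{before} breaking them up rather than after.
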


\begin{proof} 
We shortcut the $k$ paths so that they only contain the nodes in $S$. The regret of each
of these $k$ paths is at most $\rho$, so as in Lemma~\ref{avg2max}, we may
break up each path into at most $\ceil{\frac{1}{\e}}$ paths of regret at most
$\e\rho$.  This creates at most $k\ceil{\frac{1}{\e}}$ paths of regret at most $\e\rho$
that cover $S$. So by using our algorithm for \rvrp with the node-set $r\cup S$ and
regret-bound $\e\rho$, we obtain $\gm_{\mrvrp}k\ceil{\frac{1}{\e}}$ paths of regret at
most $\e\rho$ covering $S$. Since the number of paths is an integer, we
actually have $\floor{\gm_{\mrvrp}k\ceil{\frac{1}{\e}}}$ paths.
\end{proof}

\begin{theorem} \label{multrvrp}
Multiplicative-\rvrp can be reduced to additive-\rvrp incurring an  
$O\bigl(\log(\frac{R}{R-1})\bigr)$-factor loss.
This yields an $O\bigl(\log(\frac{R}{R-1})\bigr)$-approximation for multiplicative-\rvrp.  
\end{theorem}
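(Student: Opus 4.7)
The plan is to reduce multiplicative-\rvrp to additive-\rvrp by partitioning $V$ according to $r$-distance into geometric rings, applying Lemma~\ref{covering} separately inside each ring, and then stitching the resulting paths back together within suitably spaced groups of rings.

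For each integer $i\geq 1$ define the \emph{ring} $V_i=\{v\in V:2^{i-1}\leq d_v<2^i\}$; the rings partition $V$ since $d_v\geq 1$. Any optimal multiplicative-\rvrp solution uses $\OPT$ rooted paths, each of multiplicative regret at most $R$, so on any node $v\in V_i$ it induces additive regret at most $(R-1)d_v\leq(R-1)\cdot 2^i$. Shortcutting each such path to its intersection with $V_i$ (keeping $r$) therefore produces $\OPT$ rooted paths covering $V_i$ whose per-node additive regret is at most $(R-1)\cdot 2^i$. Applying Lemma~\ref{covering} with $\rho=(R-1)\cdot 2^i$ and a small constant $\e\in(0,1/4]$ yields at most $k=O(\OPT)$ rooted paths $P_i^{(1)},\dots,P_i^{(k)}$ covering $V_i$, each with per-node additive regret at most $\e(R-1)2^i$; by the shortcutting step in the proof of Lemma~\ref{covering} I may further assume each $P_i^{(t)}$ visits only nodes in $\{r\}\cup V_i$.

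Now choose a ``spacing'' $L=\max\{1,\lceil\log_2(c_0/(R-1))\rceil\}$ for a sufficiently large absolute constant $c_0$, and partition ring-indices into $L$ residue classes modulo $L$. For each residue $j_0\in\{0,\dots,L-1\}$ and each $t\in\{1,\dots,k\}$, form a single rooted path $Q_{j_0}^{(t)}$ by following $P_{j_0}^{(t)},P_{j_0+L}^{(t)},P_{j_0+2L}^{(t)},\dots$ in order of increasing ring index, keeping the vertex $r$ only at the very start and jumping directly from the end of each $P_j^{(t)}$ to the first non-root vertex of the next. Because distinct rings are disjoint and each $P_i^{(t)}$ lives in $\{r\}\cup V_i$, $Q_{j_0}^{(t)}$ is automatically a simple rooted path; over all $j_0$ and $t$ this yields $Lk=O\bigl(\log(R/(R-1))\cdot\OPT\bigr)$ rooted paths that cover $V$.

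It remains to verify that each $Q_{j_0}^{(t)}$ respects the multiplicative regret bound at every node on it. For $v\in V_i$ lying on $Q_{j_0}^{(t)}$, write $u_{j-L}$ for the endpoint of $P_{j-L}^{(t)}$ and $p_{j,1}$ for the first non-root vertex of $P_j^{(t)}$. The $r$-to-$v$ cost along $Q_{j_0}^{(t)}$ decomposes into (i) the full cost of each earlier $P_j^{(t)}$ for $j=j_0,j_0+L,\dots,i-L$, bounded by $2^j(1+\e(R-1))$; (ii) a transition surcharge $c(u_{j-L},p_{j,1})-c(r,p_{j,1})\leq d_{u_{j-L}}\leq 2^{j-L}$ per group switch, obtained by combining $c(u_{j-L},p_{j,1})\leq d_{u_{j-L}}+d_{p_{j,1}}$ with the fact that $c(r,p_{j,1})=d_{p_{j,1}}$ is already counted inside $\mathrm{cost}(P_j^{(t)})$; and (iii) the prefix cost along $P_i^{(t)}$ up to $v$, bounded by $d_v+\e(R-1)\cdot 2^i$. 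The sums in (i) and (ii) are geometric with ratio $2^{-L}$ and last term $O(2^{i-L})$, and hence total $O(2^i/2^L)=O(d_v/2^L)$ using $d_v\geq 2^{i-1}$. Altogether the $r$-to-$v$ cost is at most $d_v\cdot\bigl[O(1/2^L)+1+O(\e(R-1))\bigr]$, and for $\e$ a sufficiently small constant and $L=\Theta(\log(R/(R-1)))$ this is at most $R\cdot d_v$. The main obstacle is precisely the bound in (ii): the naive triangle-inequality bound $d_{u_{j-L}}+d_{p_{j,1}}$ would contribute $\Theta(2^j)$ per switch, summing to $\Theta(2^i)\approx d_v$ and washing out the entire $(R-1)$-slack regardless of $L$. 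The saving, and the crux of the reduction, is the amortisation against the $c(r,p_{j,1})$-edge already embedded inside $\mathrm{cost}(P_j^{(t)})$, which reduces the per-switch surcharge to $d_{u_{j-L}}\leq 2^{j-L}$ and makes the whole scheme work.
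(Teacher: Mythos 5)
Your proof is correct and follows essentially the same route as the paper's: the same geometric rings $V_i$, the same application of Lemma~\ref{covering} per ring, and the same modular grouping of rings with spacing $\Theta\bigl(\log\frac{R}{R-1}\bigr)$. The only difference is in the stitching step --- the paper returns to the root $r$ between consecutive paths (so the overhead per earlier path is simply twice its length, which is geometrically summable), whereas you jump directly to the next path's first vertex and amortize the transition cost against that path's root-edge; both yield the same $O(d_v/2^L)$ overhead and the same final bound.
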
 

\begin{proof}
Let $R=1+\dt$.
For $i\geq 1$, define $V_i=\{v: 2^{i-1}\leq\dist_v<2^{i}\}$. Note that the $V_i$s partition
the non-root nodes. 
Let $\iopt$ denote the optimal value of the multiplicative-\rvrp instance. 
We apply Lemma~\ref{covering} with $\e=\frac{1}{4}$ to the $V_i$s:
for each $V_i$, there are $\iopt$ paths covering $V_i$ such that each node in $V_i$ has
regret at most $\dt\cdot 2^{i}$, 
so we obtain at most $N=\floor{4\gm_{\mrvrp}\iopt}=O(\iopt)$ paths covering $V_i$ such
that each node in $V_i$ has regret at most $\dt\cdot 2^{i-2}$. Pad these with the trivial
path $\{r\}$ if needed, to obtain exactly $N$ paths $P^i_1,\ldots,P^i_N$. 

Let $M=\ceil{\log_2\bigr(3+\frac{8}{\dt}\bigr)}=O\bigl(\log_2(\frac{R}{R-1})\bigr)$.
Now for every index $i=1,2,\ldots,M$ and every $j=1,\ldots,N$, we concatenate the
paths $P^i_j,P^{i+M}_j,P^{i+2M}_j, \ldots$ by moving from the end-node of $P^{i+aM}_j$ to
$r$ before following $P^{i+(a+1)M}_j$ for each $a\geq 0$. This yields
$MN$ paths that together cover all nodes. 

To finish the proof, we show that every node $v$ is visited by time $R\cdot\dist_v$.  
Suppose $v\in V_{i+aM}$ and is covered by path $P^{i+aM}_j$. It's visiting
time is then at most 
$c_{P^{i+aM}_j}(v)+2\sum_{a'<a} c(P^{i+a'M}_j)
\leq\dist_v+\dt\cdot 2^{i+aM-2}+2\bigl(1+\frac{\dt}{4}\bigr)\sum_{a'<a}2^{i+a'M}
\leq \dist_v\bigl(1+\frac{\dt}{2}+4\cdot\frac{1+\dt/4}{2^M-1}\bigr)
\leq R\cdot\dist_v$.
\end{proof}

\section{Applications to {\large \ndvrp}} \label{dvrp}
Recall that the goal in \dvrp is to find the fewest number of rooted paths of {\em length}
at most $D$ that cover all the nodes. We say that a rooted path $P$ is feasible if
$c(P)\leq D$. The length-$D$ prefix of a rooted path $P$, denoted by
$P(D)$, is the portion of $P$ starting from $r$ and up to (and including) the last node
$v\in P$ such that $c_P(v)\leq D$.
Let $\iopt$ be the optimal value, and $R_{\max}\leq D$ be the maximum regret of a path in
an optimal solution, which we can estimate within a factor of 2.

We apply our algorithm for \rvrp to prove two approximation results for \dvrp. We obtain
an $O\bigl(\frac{\log R_{\max}}{\log\log R_{\max}}\bigr)$-approximation algorithm
(Theorem~\ref{dvrpthm}), which also yields the same upper bound on the integrality gap of
a natural configuration LP \eqref{dvrplp} for \dvrp (Corollary~\ref{dvrpcor}).
This improves upon the $O(\log D)$ approximation factor and integrality-gap bound for
\eqref{dvrplp} proved in~\cite{NagarajanR08}.
Note that for {\em graphical metrics}, that is, when the underlying metric is the
shortest-path metric of an {\em unweighted} graph, we may assume that $D\leq n^2$, and so
this yields an $O\bigl(\frac{\log n}{\log\log n}\bigr)$-approximation.
Next, we show that the integrality gap of \eqref{dvrplp} is also at most $O(\optdvlp)$
(Theorem~\ref{optintgap}).  
This presents an interesting contrast with set cover for which there are 
$O(\log n)$-integrality-gap examples for the standard set-cover LP even when the optimal
LP-value is a constant. The configuration LP \eqref{dvrplp} is also a set-cover LP, but
our result precludes such an integrality-gap construction for this LP. 
Our integrality-gap bounds suggest that the additional structure in \dvrp can be further
exploited, perhaps by refining our methods, to derive improved guarantees.

\subsection{An \boldmath $O\bigl(\frac{\log R_{\max}}{\log\log R_{\max}}\bigr)$-approximation}
\label{dvrpapprox}
As a warm-up, note that a simple
$O(N)$-approximation, where $N=\ceil{\log_2(\frac{R_{\max}}{D-\max_{v:\dist_v<D}\dist_v})}$,
follows by applying Lemma~\ref{covering} with $\e=\frac{1}{2}$ to the node-sets  
$V_0=\bigl\{v: D-\dist_v\geq\frac{R_{\max}}{2}\bigr\}$, 
$V_i=\bigl\{v: D-\dist_v\in[\frac{R_{\max}}{2^{i+1}},\frac{R_{\max}}{2^i})\bigr\}$ for
$i=1,\ldots,N-1$, and $V_N=\{v: \dist_v=D\}$, which partition $V$.  
For $V_N$, we can obtain at most $\iopt$ regret-0 paths covering it.
Each $V_i,\ 0\leq i<N$, the optimal solution uses $\iopt$ to cover $V_i$ causing regret at
most $\frac{R_{\max}}{2^i}$ to these nodes. So, we obtain $O(\iopt)$ paths covering $V_i$ 
causing regret at most $\frac{R_{\max}}{2^{i+1}}$ to the $V_i$ nodes; 
hence, the length-$D$ prefixes of these paths cover $V_i$.

We now describe a more-refined reduction yielding  
an improved $O\bigl(\frac{\log R_{\max}}{\log\log R_{\max}}\bigr)$-approximation.  
The algorithm is again based on choosing suitable pairs of regret bounds and node-sets,
and covering each node-set using paths of the corresponding regret bound. However, instead
of {\em fixing} the regret bounds to be $\frac{R_{\max}}{2^i}$, we now obtain
them by solving a dynamic program (DP). 

Let $S_i=\{v: D-\dist_v<2^i\}$ for $i=0,\ldots,M=\ceil{\log_2 D}$. We use DP to obtain a
set of feasible paths $\Pc(i)$ covering $S_i$ for all $i$.
We use $F(i)$ to denote $|\Pc(i)|$. 
For all $0\leq k<i$, we use our algorithm for \rvrp to find a collection $\Qc(i,k)$ of
paths of regret at most $2^k$ that cover $S_i$, 
Let $\Pc(0)$ be the fewest number of paths of regret 0 (and hence are feasible) that cover
the nodes with $\dist_v=D$, which we can efficiently compute. 
For $i>0$, we set $F(i)=\min_{0\leq k<i}\bigl(|\Qc(i,k)|+F(k)\bigr)$; if $k'$ is the index
that attains the minimum, then we set 
$\Pc(i)=\Pc(k')\cup(\text{length-$D$ prefixes of the paths in $\Qc(i,k')$})$. 
We return the solution $\Pc(M)$, which we show is feasible.

\paragraph{Analysis.}
For ease of comprehension, we prove the approximation guarantee with respect to
an integer optimal solution here, and observe in Section~\ref{intgapbnds} that the
analysis easily extends to yield the same guarantee with respect to the optimum value of
the configuration LP. 

Let $\gm=\gm_{\mrvrp}<31$ be the approximation ratio of our \rvrp-algorithm.
We define a suitable set of indices, that is, regret bounds,
such that using these indices in the DP yields the desired bound on the number of paths. 
In order to establish a bound on $F(i)$ by plugging in a suitable index $k<i$ we need two
things. 
First, we need to bound $|\Qc(k,i)|$. 
This requires a more sophisticated analysis than the one suggested by
Lemma~\ref{covering}. Instead of directly using all the paths from an optimal solution  
to bound the number of paths of certain regret required to cover a given set of nodes, we
proceed as follows. We argue that by suitably preprocessing the paths in an optimal
solution (see Claim~\ref{pmodify}), we can obtain a near-optimal solution $\Oc$ such that
$S_i$ is covered by {\em paths of $\Oc$ of regret at most $2^i$}. We modify these
$\Oc$-paths by breaking them up (as in Lemma~\ref{avg2max}) to obtain paths of regret at
most $2^k$ that cover $S_i$, which yields a bound on $|\Qc(i,k)|$. 
Second, we need to argue that we make suitable progress when moving from index $i$ to index
$k$. In a crucial departure from the previous analysis, we make progress by either
suitably decreasing the {\em number}, or the maximum regret, of the paths, needed
from $\Oc$ to cover the remaining set of nodes. These ingredients yield the following
theorem. 

\begin{theorem} \label{dvrpthm}
$F(M)\leq O\bigl(\frac{\log R_{\max}}{\log\log R_{\max}}\bigr)\cdot\gm_{\mrvrp}N$. So the
above algorithm is an $O\bigl(\frac{\log R_{\max}}{\log\log R_{\max}}\bigr)$-approximation
algorithm for \dvrp (where $R_{\max}\leq D$ is the maximum regret of a path in an optimal
solution). 
\end{theorem}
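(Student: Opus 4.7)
The plan is to establish two things: feasibility of $\Pc(M)$ as a \dvrp solution, and the count bound $F(M)=O\bigl(\frac{\log R_{\max}}{\log\log R_{\max}}\bigr)\gm_{\mrvrp}\iopt$.

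For feasibility, I would induct on $i$, showing $\Pc(i)$ is a collection of feasible rooted paths covering $S_i$. In the base case, any node $v$ with $\dist_v=D$ must have regret $0$ in every feasible solution (since $c_P(v)\geq\dist_v$ by the triangle inequality and $c_P(v)\leq D$), so the optimum yields at most $\iopt$ regret-$0$, hence feasible, paths covering $S_0$. For $i>0$, let $k'$ attain the DP minimum: $\Pc(k')$ covers $S_{k'}$ inductively, and for $v\in S_i\setminus S_{k'}$ we have $D-\dist_v\geq 2^{k'}$, so any $P\in\Qc(i,k')$ through $v$ satisfies $c_P(v)\leq\dist_v+2^{k'}\leq D$, placing $v$ in the length-$D$ prefix $P(D)$. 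Since $S_M=V$, $\Pc(M)$ is feasible.

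To bound $F(M)$, I would invoke Claim~\ref{pmodify} to replace $\Oc$ by a structured near-optimal solution $\Oc^*$ of $O(\iopt)$ paths such that for every level $i$, $S_i$ is covered by $\Oc^*$-paths whose shortcut to the last $S_i$-node has regret at most $2^i$. Applying Lemma~\ref{covering} with $\e=2^{k-i}$ to this cover yields a bound of the form $|\Qc(i,k)|\leq O(\gm_{\mrvrp}\,n_i\,2^{i-k})$, where $n_i\leq O(\iopt)$ counts the $\Oc^*$-paths intersecting $S_i$. Telescoping the DP recursion over any chain $0=i_0<i_1<\cdots<i_L=M$ gives $F(M)\leq F(0)+\sum_\ell|\Qc(i_\ell,i_{\ell-1})|$, and the crux is choosing the chain so that this matches the claimed bound. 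I would construct it by a two-mode dichotomy, processing downward from $i_L=M$: at each step, pick $i_{\ell-1}$ either as the largest $k<i_\ell$ for which $n_k$ drops substantially below $n_{i_\ell}$ (\emph{path-count decrease}), or by taking a default stride of size $s=\Theta(\log\log R_{\max})$ otherwise (\emph{regret-bound decrease}). The path-count mode occurs at most $O(\log\iopt)$ times, and the default-stride mode occurs at most $O\bigl(\frac{\log R_{\max}}{\log\log R_{\max}}\bigr)$ times. Calibrating $s$ so that both modes contribute $O(\gm_{\mrvrp}\iopt)$ per transition yields the target bound.

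The delicate ingredients are (i) Claim~\ref{pmodify} itself, which I would prove by iteratively shortcutting $\Oc$-paths at their last $S_i$-nodes and applying Lemma~\ref{avg2max} to maintain the per-level regret bound $2^i$ without blowing up the path-count beyond $O(\iopt)$; and (ii) the stride-selection combinatorics, where the two-mode dichotomy must be arranged so that \emph{every} downward step makes substantial progress in one coordinate---either the number of $\Oc^*$-paths that matter halves, or the regret ceiling $2^{i_\ell}$ shrinks by a factor tied to $\log\log R_{\max}$---so as to improve the trivial $O(\log R_{\max})$ bound one gets from uniform strides into $O\bigl(\frac{\log R_{\max}}{\log\log R_{\max}}\bigr)$. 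This balancing is where the analysis crucially departs from the warm-up and from prior \dvrp results.
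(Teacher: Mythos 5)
Your skeleton --- the feasibility induction (which matches the paper's Lemma~\ref{feas} exactly), the preprocessing of the optimum via Claim~\ref{pmodify} so that $S_i$ is covered by $\Oc$-paths of regret less than $2^i$, and a telescoping DP analysis driven by a ``progress in one of two coordinates'' dichotomy --- is the right one and is what the paper does. But the quantitative heart of the argument is not correctly calibrated, and as stated it does not give the claimed bound. The key quantity is $|\Qc(i,k)|\leq\gm\bigl[2n(i)+\bigl(n(i)-n(k+1)\bigr)2^{i-k-1}\bigr]$, where $n(j)$ is the number of $\Oc$-paths of regret less than $2^j$: the cost of a stride from $i$ down to $k$ is governed by the number of $\Oc$-paths with regret in the \emph{intermediate} range $[2^{k+1},2^i)$, not by $n(i)$ alone. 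Your coarser bound $|\Qc(i,k)|=O(\gm\,n_i\,2^{i-k})$ discards exactly this refinement, and with it a fixed stride $s$ costs $\Theta(n2^s)$ per step, so summing over $M/s$ steps is minimized at $s=O(1)$ and recovers only the $O(\log R_{\max})$ warm-up. Concretely, in your ``default-stride'' mode with $s=\Theta(\log\log R_{\max})$ the premise is only that $n$ does not drop substantially over the stride, which still permits $n(i)-n(i-s+1)=\Theta(n(i))$ and hence $|\Qc(i,i-s)|=\Omega(\gm\,n(i)\log R_{\max})$ for a single transition; conversely, in your ``path-count'' mode, taking the largest $k$ where $n$ halves gives no control on $i-k$, so the factor $2^{i-k-1}$ multiplying $n(i)-n(k+1)$ can be enormous. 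The accounting ``$O(\log\iopt)$ halving steps times $O(\gm\iopt)$ each'' is also not the target bound, since $\log\iopt$ can exceed $\frac{\log R_{\max}}{\log\log R_{\max}}$.

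The paper resolves this by coupling the two modes through a single adaptive threshold: $k(i)$ is the unique index with $\bigl(n(i)-n(k)\bigr)2^{i-k}\geq n(i)>\bigl(n(i)-n(k+1)\bigr)2^{i-k-1}$. The second inequality forces $|\Qc(i,k(i))|<3\gm n(i)$ for \emph{every} step, and the first forces $n(k(i))\leq n(i)\bigl(1-2^{k(i)-i}\bigr)$, i.e.\ a path-count decrease proportional to $2^{-(i-k(i))}$. The analysis is then a potential-function induction, $G(i)\leq 3\gm n(i)\bigl(g(i)+b(i)\bigr)+O(\gm n(3))$ with $g(x)=x/\log_2 x$: short strides (those with $2^{k(i)-i}\geq\log_2 k(i)/k(i)$) produce a path-count decrease large enough that the $3\gm n(i)$ cost is absorbed by the drop in $n\cdot g$, while long strides increment $b$, and a separate combinatorial lemma (Lemma~\ref{sequence}) shows there are only $O\bigl(\frac{\log R_{\max}}{\log\log R_{\max}}\bigr)$ long strides. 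Neither mode is handled by a ``number of steps times cost per step'' count. Separately, your rendering of Claim~\ref{pmodify} is off: the paper's preprocessing is a single global operation splitting each optimal path into two paths that both end at the path's farthest node (doubling $N$ once), not a per-level shortcutting combined with Lemma~\ref{avg2max}, which would multiply the path count at every level.
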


We start with the following simple, but useful claim.

\begin{claim} \label{pmodify}
Let $P$ be a rooted path. We can obtain at most two paths $P_1$ and $P_2$, both ending at
$v=\arg\max_{v\in P}\dist_v$ and together covering all the nodes on $P$ such that
$c^{\reg}(P_1), c^{\reg}(P_2)\leq c^{\reg}(P)$ and $c(P_1), c(P_2)\leq c(P)$.
\end{claim}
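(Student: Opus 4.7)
The plan is to split $P$ at its distance-maximal node and package the two resulting sub-paths separately. Write $P = r, u_1, u_2, \ldots, u_m$ in the order in which $P$ visits its nodes, let $v^* = u_j$ be a node achieving $\max_i \dist_{u_i}$, and set $L := \sum_{i=j}^{m-1} c_{u_i u_{i+1}}$, the length of the $v^*$-to-$u_m$ sub-path of $P$. I would take $P_1$ to be the $r$-to-$v^*$ prefix of $P$, and $P_2$ to be the path $r, u_m, u_{m-1}, \ldots, u_{j+1}, u_j$ that jumps from $r$ directly to the endpoint $u_m$ and then retraces the $v^*$-to-$u_m$ portion of $P$ in reverse. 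Both paths end at $v^*$, together they visit every node of $P$, and $P_2$ is simple because the $u_i$'s are distinct and none equals $r$ (in the degenerate case $j=m$, $P_2$ collapses to the single edge $r \to v^*$).

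The four inequalities $c(P_i) \leq c(P)$ and $c^{\reg}(P_i) \leq c^{\reg}(P)$ for $i=1,2$ then reduce to routine bookkeeping. I would expand all the regret quantities using Fact~\ref{obs:regret}(ii) --- so that $c^{\reg}(P) = c(P) - \dist_{u_m}$, $c^{\reg}(P_1) = c(P_1) - \dist_{v^*}$, and $c^{\reg}(P_2) = c(P_2) - \dist_{v^*} = \dist_{u_m} + L - \dist_{v^*}$ --- and use the identity $c(P) = c(P_1) + L$. Every inequality then becomes a consequence of two elementary facts: (a) $\dist_{u_m} \leq \dist_{v^*}$ by the choice of $v^*$, and (b) $c(P_1) \geq c_{r v^*} = \dist_{v^*}$ by the triangle inequality applied to the $r$-to-$v^*$ path $P_1$. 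For example, $c(P_1) \leq c(P)$ is immediate from $L \geq 0$; $c(P_2) \leq c(P)$ reduces to $\dist_{u_m} \leq c(P_1)$, immediate from (a)+(b); $c^{\reg}(P_1) \leq c^{\reg}(P)$ reduces to $\dist_{u_m} - \dist_{v^*} \leq L$, which follows from $L \geq c_{v^* u_m} \geq \dist_{u_m} - \dist_{v^*}$; and $c^{\reg}(P_2) \leq c^{\reg}(P)$ amounts to $2\dist_{u_m} \leq c(P_1) + \dist_{v^*}$, again from (a)+(b).

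The construction itself is the only real design step; no serious obstacle arises in the verification. The only place demanding some care is the edge case $v^* = u_m$, where $L = 0$ and $P_2$ is just the single edge $r \to v^*$, but the inequalities all specialize correctly. The key insight worth highlighting is that anchoring $P_2$ at the cheap shortcut edge from $r$ to the endpoint $u_m$ is affordable precisely because the discarded prefix $P_1$ already has $c(P_1) \geq \dist_{v^*} \geq \dist_{u_m}$, so the shortcut cost is absorbed in the comparison with $c(P)$; the regret inequalities then work out because both $P_1$ and $P_2$ end at the distance-maximal node $v^*$, which is at least as far from $r$ as the original endpoint $u_m$ of $P$.
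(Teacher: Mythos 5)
Your construction is identical to the paper's: $P_1$ is the $r$-to-$v$ prefix of $P$, and $P_2$ jumps from $r$ to the original endpoint and retraces the remaining portion back to $v$; the verification uses the same two facts ($\dist_{u_m}\leq\dist_{v^*}$ and $\dist_{v^*}\leq c(P_1)$) that the paper uses. Your write-up is correct and, if anything, slightly more explicit than the paper's (which dispenses with the $P_1$ inequalities as ``clear'' since regret is non-decreasing along a rooted path).
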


\begin{proof} 
Let $t$ be the end-node of $P$. 
Let $P_1$ be the $r\leadsto v$ portion of $P$, and $P_2$ be the path where we move from
$r$ to $t$, and then traverse the $t\leadsto v$ portion of $P$.
Clearly, $c^{\reg}(P_1)\leq c^{\reg}(P)$ and $c(P_1)\leq c(P)$. 
Also, $c(P_2)=\dist_{t}+c(P)-c(P^1)\leq\dist_{v}+c(P)-c(P_1)\leq c(P)$ and
$c^{\reg}(P_2)=c(P_2)-\dist_v\leq c(P)-\dist_v\leq c(P)-\dist_t=c^{\reg}(P_2)$. 
\end{proof}

We preprocess the paths in an optimal solution using Claim~\ref{pmodify} losing a factor
of 2. (Note that this is solely for the purposes of analysis.)
Let $\Oc=\{P^*_1,\ldots,P^*_N\}$ denote the resulting collection of paths, where 
$c^{\reg}(P^*_1)\leq\ldots\leq c^{\reg}(P^*_N)=R_{\max}$, and $N\leq 2\iopt$.
For $i\geq 0$, define $n(i)=|\{P\in\Oc: c^{\reg}(P)<2^i\}|$.
The preprocessing ensures that if a node $v\in S_i$ lies on a path $P\in\Oc$, then 
$c^{\reg}(P)\leq D-\dist_v<2^i$. 
So $S_i$ is covered by the $n(i)$ paths of $\Oc$ of regret less than $2^i$.

\begin{lemma} \label{feas}
For all $i$, $\Pc(i)$ consists of feasible paths that cover $S_i$.
\end{lemma}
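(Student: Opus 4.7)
The plan is to prove the lemma by induction on $i$, handling feasibility and coverage separately, with the coverage argument for nodes in $S_i\setminus S_{k'}$ being the only real content.

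For the base case $i=0$, note that $S_0 = \{v : D - \dist_v < 1\} = \{v : \dist_v = D\}$ (since distances are positive integers). By definition $\Pc(0)$ consists of paths of regret $0$ that cover exactly these nodes. A regret-$0$ path ending at $v$ has length $\dist_v = D$, so each path in $\Pc(0)$ is feasible.

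For the inductive step, suppose the claim holds for all $j < i$, and let $k' < i$ be the index attaining the minimum in the definition of $F(i)$, so $\Pc(i) = \Pc(k') \cup \{P(D) : P \in \Qc(i,k')\}$. Feasibility is essentially free: paths in $\Pc(k')$ are feasible by the inductive hypothesis, and each length-$D$ prefix $P(D)$ is feasible by the definition of a prefix. For coverage, partition $S_i = S_{k'} \cup (S_i \setminus S_{k'})$. Nodes in $S_{k'}$ are covered by $\Pc(k') \subseteq \Pc(i)$ by the inductive hypothesis. The main step is to show that a node $v \in S_i \setminus S_{k'}$ lies on the length-$D$ prefix of the path in $\Qc(i,k')$ that contains it. By definition of $\Qc(i,k')$, the node $v$ lies on some $P \in \Qc(i,k')$ whose regret is at most $2^{k'}$, hence $c_P(v) \leq \dist_v + 2^{k'}$. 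Since $v \notin S_{k'}$, we have $D - \dist_v \geq 2^{k'}$, so $c_P(v) \leq \dist_v + 2^{k'} \leq D$, and therefore $v$ lies on $P(D)$, as needed.

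The only potential obstacle is the inequality $c_P(v) \leq D$, which is just the observation that the regret bound $2^{k'}$ used when constructing $\Qc(i,k')$ is exactly calibrated to the gap between $\dist_v$ and $D$ on the boundary $S_i \setminus S_{k'}$; this is the reason the algorithm uses length-$D$ prefixes (as opposed to the full paths) and still retains coverage of $S_i$. No further properties of $\Qc(i,k')$ beyond the regret guarantee stated in the algorithm description are needed.
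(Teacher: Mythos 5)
Your proof is correct and follows essentially the same inductive argument as the paper: feasibility of $\Pc(k')$ comes from the induction hypothesis, prefixes are feasible by definition, and the key coverage step is exactly the calculation $c_P(v)\leq \dist_v+2^{k'}\leq D$ for $v\in S_i\setminus S_{k'}$, using $D-\dist_v\geq 2^{k'}$. Your base case is spelled out in a bit more detail than the paper's, but nothing differs in substance.
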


\begin{proof} 
The proof is by induction on $i$. The base case when $i=0$ clearly holds.
For $i>0$, $\Pc(i)=\Pc(k)\cup(\text{length-$D$ prefixes of the paths in $\Qc(i,k)$})$ for
some $k<i$. By the induction hypothesis, the paths in $\Pc(k)$ have length at most $D$ and
cover $S_k$. So all paths in $\Pc(i)$ have length at most $D$. Consider a node 
$v\in S_i\sm S_k$, and suppose that $v$ lies on path $P\in\Qc(i,k)$. Then,
$c_P(v)=c^{\reg}(P)+\dist_v\leq 2^k+\dist_v\leq D$, where the last inequality follows
since $v\notin S_k$ implies that $D-\dist_v\geq 2^k$. Thus, $v$ is covered by $P(D)$.
\end{proof}

\begin{claim} \label{helper}
$F(z)\leq 2^z\gm n(z)$ for all $z$.
\end{claim}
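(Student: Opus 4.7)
The plan is to prove the claim by induction on $z$, in each step invoking the DP recursion with the specific choice $k=z-1$. The critical ingredient, established just before the claim, is that the $n(z)$ paths of the preprocessed near-optimal solution $\Oc$ with $c^{\reg}$-value strictly less than $2^z$ together cover $S_z$. Splitting these paths, via the construction in Lemma~\ref{avg2max}, into rooted paths of regret at most $2^{z-1}$ and then applying the approximation guarantee of our \rvrp-algorithm on the instance $\{r\}\cup S_z$ with regret bound $2^{z-1}$ will yield the needed bound on $|\Qc(z,z-1)|$. Combining this with the inductive bound on $F(z-1)$ then closes the recursion.

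For the base case $z=0$, we exploit that edge costs (and hence all $c^{\reg}$-values) are integer-valued, so $c^{\reg}(P)<1$ is equivalent to $c^{\reg}(P)=0$. Hence the $n(0)$ paths of $\Oc$ counted by $n(0)$ are precisely regret-$0$ paths of $\Oc$ covering $S_0=\{v:\dist_v=D\}$. Since $\Pc(0)$ is defined as a minimum-cardinality family of regret-$0$ paths covering $\{v:\dist_v=D\}$, we obtain $F(0)\leq n(0)\leq\gm_{\mrvrp}\, n(0)$.

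For the inductive step with $z\geq 2$, set $k=z-1$. By (the proof of) Lemma~\ref{avg2max}, any path $P\in\Oc$ with $c^{\reg}(P)<2^z$ decomposes into $\max\{\lceil c^{\reg}(P)/2^{z-1}\rceil,1\}\leq 2$ rooted paths of regret at most $2^{z-1}$. Shortcutting the resulting paths onto $\{r\}\cup S_z$ (which cannot increase regret) yields a feasible \rvrp solution on the node-set $\{r\}\cup S_z$ with regret bound $2^{z-1}$ and cardinality at most $2n(z)$, so $|\Qc(z,z-1)|\leq 2\gm_{\mrvrp}\, n(z)$. The inductive hypothesis together with the monotonicity $n(z-1)\leq n(z)$ gives $F(z-1)\leq 2^{z-1}\gm_{\mrvrp}\, n(z)$, and therefore $F(z)\leq(2+2^{z-1})\gm_{\mrvrp}\, n(z)\leq 2^z\gm_{\mrvrp}\, n(z)$.

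The main obstacle is the case $z=1$, since with $z-k=1$ and integer regrets the splitting step saves nothing: a path of regret $<2$ already has regret $\leq 1=2^{0}$, so $|\Qc(1,0)|\leq\gm_{\mrvrp}\, n(1)$, and the naive bound $F(1)\leq\gm_{\mrvrp}\, n(1)+\gm_{\mrvrp}\, n(0)$ can exceed $2\gm_{\mrvrp}\, n(1)$. We resolve this by using the sharper estimate $F(0)\leq n(0)$ (obtained directly from the integer feasibility argument in the base case, not via any \rvrp approximation) and $n(0)\leq n(1)$, so that $F(1)\leq\gm_{\mrvrp}\, n(1)+n(1)\leq(\gm_{\mrvrp}+1)n(1)\leq 2\gm_{\mrvrp}\, n(1)$ since $\gm_{\mrvrp}\geq 1$.
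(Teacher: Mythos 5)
Your proof is correct, but it unrolls the recursion differently from the paper. The paper's argument is a one-shot bound: for every $z\geq 1$ it plugs $k=0$ into the DP, splitting the $n(z)-n(1)$ paths of $\Oc$ with regret in $[2,2^z)$ into at most $2^z$ regret-$1$ pieces each, so that $|\Qc(z,0)|\leq\gm\bigl(n(1)+2^z(n(z)-n(1))\bigr)$, and then adds $F(0)\leq n(0)$. You instead induct with the choice $k=z-1$, halving the regret bound at each level so each $\Oc$-path splits into only two pieces, and absorb the geometric sum $2+2^{z-1}\leq 2^z$. Both routes rest on the same ingredients (the preprocessing guarantee that $S_z$ is covered by the $n(z)$ paths of regret less than $2^z$, the splitting construction of Lemma~\ref{avg2max}, and the fact that $|\Qc(i,k)|$ is at most $\gm$ times the size of any exhibited feasible solution), and both implicitly use integrality of the regrets; your version surfaces this explicitly in the $z=1$ case, where you correctly notice that splitting buys nothing and instead use the sharper exact bound $F(0)\leq n(0)$ together with $\gm\geq 1$ --- the same slack the paper uses to absorb the $+n(0)$ term. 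Neither approach yields a better constant; the paper's is shorter since it avoids the induction and the edge case, while yours makes the per-step loss more transparent.
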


\begin{proof}
This is clearly true for $z=0$. For $z\geq 1$, $F(z)\leq|\Qc(z,0)|+F(0)$.
We have $|\Qc(z,0)|\leq\gm\bigl(n(1)+2^z(n(z)-n(1))\bigr)$ since the $n(z)-n(1)$ paths of
$\Oc$ of regret more than 1 can be broken up to yield at most $2^z$ (in fact $2^{z-1}$)
paths of regret at most 1. Thus, $F(z)\leq 2^z\gm n(z)$ (note that $\gm\geq 1$). 
\end{proof}

We will need the following technical lemma, whose proof we defer to the end of this
subsection. 

\begin{lemma} \label{sequence}
Let $\{a_i\}_{i=0}^k$ be a sequence of integers such that $a_0>a_1>\ldots>a_k>0$, and 
$2^{a_{i+1}-a_i}<\frac{\log_2 a_{i+1}}{a_{i+1}}$ for all $i=0,\ldots,k-1$. Then,
$k=O\bigl(\frac{a_0}{\log a_0}\bigr)$.
\end{lemma}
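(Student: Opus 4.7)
The plan is to take $\log_2$ of the hypothesis to obtain a usable per-step decrement, and then bound the number of indices in two regimes depending on the size of $a_{i+1}$.

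Taking $\log_2$ of the inequality $2^{a_{i+1}-a_i}<\log_2 a_{i+1}/a_{i+1}$ yields
\[a_i-a_{i+1}>\log_2 a_{i+1}-\log_2\log_2 a_{i+1},\]
so each step of the sequence decreases by a quantity that is roughly $\log_2 a_{i+1}$ whenever $a_{i+1}$ is not too small. The statement is trivial for $a_0$ below any fixed constant, so I will assume $a_0$ is sufficiently large throughout.

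I would set a threshold $T=\ceil{\sqrt{a_0}}$ and partition $\{0,1,\ldots,k-1\}$ into the ``large'' indices $L=\{i:a_{i+1}\geq T\}$ and the ``small'' indices $S=\{i:a_{i+1}<T\}$. For $i\in L$, we have $\log_2 a_{i+1}\geq\frac{1}{2}\log_2 a_0$, and for $a_0$ above a fixed constant $\log_2\log_2 a_{i+1}\leq\frac{1}{2}\log_2 a_{i+1}$ (equivalently, $a_{i+1}\geq(\log_2 a_{i+1})^2$, which holds once $a_{i+1}\geq\sqrt{a_0}$ is large enough), so the drop $a_i-a_{i+1}$ is at least $\frac{1}{4}\log_2 a_0$. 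Telescoping these contributions gives $a_0\geq\sum_{i\in L}(a_i-a_{i+1})\geq|L|\cdot\frac{1}{4}\log_2 a_0$, whence $|L|=O(a_0/\log a_0)$. For $i\in S$, the values $\{a_{i+1}:i\in S\}$ are distinct positive integers strictly less than $T$, so $|S|<T=O(\sqrt{a_0})$.

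Combining these bounds gives $k\leq|L|+|S|=O(a_0/\log a_0)+O(\sqrt{a_0})=O(a_0/\log a_0)$, as required. The only delicate point is that the per-step lower bound $\log_2 a_{i+1}-\log_2\log_2 a_{i+1}$ degenerates (and can even become nonpositive) when $a_{i+1}$ is close to $2$; the expected main obstacle is controlling these tail terms, but choosing the threshold at $\sqrt{a_0}$ confines every problematic index to $S$, where the $O(\sqrt{a_0})$ slack already absorbs them.
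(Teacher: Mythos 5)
Your proof is correct, and it takes a genuinely different and more elementary route than the paper's. You take logarithms once to get the additive per-step drop $a_i-a_{i+1}>\log_2 a_{i+1}-\log_2\log_2 a_{i+1}$, then split the indices at the threshold $\sqrt{a_0}$: above it the drop is $\Omega(\log a_0)$, so a single additive telescoping bounds the number of such steps by $O(a_0/\log a_0)$; below it you simply count distinct positive integers, getting $O(\sqrt{a_0})$, which is absorbed. The paper instead keeps the inequality in multiplicative form (after weakening $\log_2 x/x\le 1/\sqrt{x}$ for $x\ge 16$), partitions the sequence into maximal blocks within which the value stays within a factor of $2$ of the block's start, multiplies the hypotheses along each block to bound its length by roughly $f(a_{i_j})=a_{i_j}/\log_2(a_{i_j}/2)$, and then sums over the $O(\log a_0)$ blocks using concavity of $f$. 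The paper's argument buys an explicit constant ($k\le 9a_0/\log_2 a_0$ once the tail with $a_k<16$ is trimmed), but since the lemma is only invoked asymptotically (to bound $b(i)=O(i/\log i)$ in Theorem~\ref{dvrpthm}), your shorter argument would serve equally well; your handling of the degenerate small-value regime by confining it to the $O(\sqrt{a_0})$ bucket is exactly the right fix for the only delicate point.
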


\begin{proofof}{Theorem~\ref{dvrpthm}}
Let $k(0)=0$.
For $i>0$, set $k(i)=0$ if $\bigl(n(i)-n(0)\bigr)\cdot 2^i<n(i)$; otherwise, 
let $k(i)$ be the unique value of $k\in\{0,\ldots,i-1\}$ such that 
$\bigl(n(i)-n(k)\bigr)\cdot 2^{i-k}\geq n(i)>\bigl(n(i)-n(k+1)\bigr)\cdot 2^{i-k-1}$
(which must exist). 
This choice of index in the expression for $F(i)$ (roughly speaking) is tailored to ensure 
that $|\Qc(i,k(i))|\leq 3\gm n(i)$ {\em and} 
$n\bigl(k(i)\bigr)\leq n(i)\bigl(1-2^{k(i)-i}\bigr)$ when $k(i)>0$.
To see the bound on $|\Qc(i,k(i))|$, consider an arbitrary index $0\leq k<i$.
There are $n(i)-n(k+1)$ paths in $\Oc$ with regret in the range $[2^{k+1},2^i)$, and
$n(k+1)-n(k)$ paths in $\Oc$ with regret in the range $[2^k,2^{k+1})$. 
Breaking up these paths into paths of regret $2^k$ as in Lemma~\ref{avg2max}, and
combining with the $n(k)$ paths of $\Oc$ of regret less than $2^k$  
yields at most $n(k)+2\bigl(n(k+1)-n(k)\bigr)+\bigl(n(i)-n(k+1)\bigr)(1+2^{i-k-1})
\leq 2n(i)+\bigl(n(i)-n(k+1)\bigr)2^{i-k-1}$
paths of regret at most $2^k$ covering $S_i$. 
So by Lemma~\ref{covering}, we have 
$$
|\Qc(i,k(i))|\leq\gm\bigl[2n(i)+\bigl(n(i)-n(k(i)+1)\bigr)2^{i-k(i)-1}\bigr]
< 3\gm n(i).
$$

We now define a function $G:\{0,\ldots,M\}\mapsto\Z_+$ based on the recurrence for
$F$ by plugging in index $k=k(i)$ in the definition of $F(i)$.
More precisely, set $G(i)=F(i)$ for $i\leq 3$.
Set $G(i)=|\Qc(i,k(i))|+G\bigl(k(i)\bigr)$ for $i>3$. It is easy to see by induction that
$F(i)\leq G(i)$ for all $i$. 

Let $M'=1+\floor{\log_2(R_{\max})}$. So $2^{M'}>R_{\max}$, and $n(M')=N$.
If $M'<M$, then $F(M)\leq|\Qc(M,M')|+F(M')$, and $|\Qc(M,M')|\leq\gm\iopt$ since there
$\iopt$ paths of regret at most $2^{M'}$ that cover $V$.
So we have $F(M)\leq\gm N+F(M')\leq\gm N+G(M')$. To finish the proof, we prove
that $G(M')$ satisfies the bound in the theorem statement.

Let $k^{(0)}(i)=i$ and $k^{(j+1)}(i)=k\bigl(k^{(j)}(i)\bigr)$. 
Define 
$B(i)=\bigl\{j>0: k^{(j)}(i)>0,\ 2^{k^{(j)}(i)-k^{(j-1)}(i)}<\frac{\log_2 k^{(j)}(i)}{k^{(j)}(i)}\bigr\}$ 
and $b(i)=|B(i)|$.
Recall that $n\bigl(k(i)\bigr)\leq n(i)\bigl(1-2^{k(i)-i}\bigr)$ when $k(i)>0$.
Thus, when $k(i)>0$, if $b(i)=b\bigl(k(i)\bigr)$ then we decrease the number of paths of
$\Oc$ required to cover $S_{k(i)}$ by an appropriate factor, and otherwise, we decrease the
maximum regret of these paths considerably. 
Thus, $b(i)$ is a measure of the number of times we make progress starting from $i$ by
decreasing the maximum regret significantly. 
Let $g(x)=2$ for $x<4$ and $\frac{x}{\log_2 x}$ otherwise. Note that $g$ is an
increasing function.

We prove by induction on $i$ that $G(i)\leq 3\gm n(i)\cdot\bigl(g(i)+b(i)\bigr)+8\gm n(3)$. 
The base case is $i=0$, which holds since $G(0)=F(0)\leq n(0)$, since we can compute the
minimum number of regret-0 paths covering $S_0$.
Consider $i>0$. We have $G(i)\leq 3\gm n(i)+G\bigl(k(i)\bigr)$. If $k(i)\leq 3$, 
then $G(i)\leq 3\gm n(i)+F\bigl(k(i)\bigr)\leq 
3\gm n(i)+2^{k(i)}\gm n\bigl(k(i)\bigr)\leq 3\gm n(i)+8\gm n(3)$, where the second
inequality follows from Claim~\ref{helper}. 
So suppose $k(i)\geq 4$, and so $g\bigl(k(i)\bigr)=\frac{k(i)}{\log_2 k(i)}$.
Then
\begin{equation*}
\begin{split}
G(i) & \leq 3\gm n(i)+G\bigl(k(i)\bigr)
\leq 3\gm\biggl[n(i)
+n\bigl(k(i)\bigr)\Bigl(\tfrac{k(i)}{\log_2 k(i)}+b\bigl(k(i)\bigr)\Bigr)\biggr]+8\gm n(3) \\
& \leq 3\gm n(i)
\biggl[1+\bigl(1-2^{k(i)-i}\bigr)\cdot\tfrac{k(i)}{\log_2 k(i)}+b\bigl(k(i)\bigr)\biggr]
+8\gm n(3).
\end{split}
\end{equation*}
If $2^{k(i)-i}\geq\frac{\log_2 k(i)}{k(i)}$, then 
the above term is at most $3\gm n(i)\bigl[\frac{k(i)}{\log_2 k(i)}+b(i)\bigr]+8\gm n(3)$. 
Otherwise, $b(i)=b\bigl(k(i)\bigr)+1$, and the above term is bounded by
$3\gm n(i)\bigl[\frac{k(i)}{\log_2 k(i)}+b(i)\bigr]+8\gm n(3)$. 
Since $\frac{k(i)}{\log_2 k(i)}=g\bigl(k(i)\bigr)\leq g(i)$, we have 
$G(i)\leq 3\gm n(i)\cdot\bigl(g(i)+b(i)\bigr)+8\gm n(3)$, which completes the
induction step.
So by induction, $G(M')\leq 3\gm N\cdot\bigl(g(M')+b(M')\bigr)+8\gm N$.

For any $i>0$, consider the sequence $i,\{k^j(i)\}_{j\in B(i)}$. This sequence has
$b(i)+1$ terms and satisfies the conditions of Lemma~\ref{sequence}, so by
that lemma, $b(i)=O\bigl(\frac{i}{\log i}\bigr)$. 
Thus, 
$G(M')\leq 3\gm N\cdot O\bigl(\frac{M'}{\log M'}\bigr)
=\gm N\cdot O\bigl(\frac{\log R_{\max}}{\log\log R_{\max}}\bigr)$, which completes the
proof. 
\end{proofof}

\begin{proofof}{Lemma~\ref{sequence}}
We may assume that $a_k\geq 16$, since otherwise, we may truncate the sequence at $a_j$
where $a_{j+1}<16$; we have $k\leq j+a_{j+1}=O(j)$, so we can proceed to bound $j$.
Since $\log_2 x/x\leq 1/\sqrt{x}$ for $x\geq 16$, we have
$2^{a_{i+1}-a_i}<\frac{1}{\sqrt{a_{i+1}}}$ for all $i=0,\ldots,k-1$.

Let $i_0=0$. 
If $i_{j-1}\leq k$, let $i_j\leq k$ be
the smallest index such that $a_{i_j}<a_{i_{j-1}}/2$ if this exists; otherwise, let
$i_j=k+1$. 
Let $i_0,i_{j_1},\ldots,i_{j_\ell}\leq k<i_{j_{\ell+1}}$ be the indices generated this way. 
Note that $\ell\leq\log_2(a_0/a_k)\leq\log_2 a_0$.

Consider a subsequence $a_{i_j},\ldots,a_{i_{j+1}-1}$. We have
$2^{a_{i+1}-a_i}<\frac{1}{\sqrt{a_{i+1}}}\leq\sqrt{\frac{2}{a_{i_j}}}$ for all
$i=i_j,\ldots,i_{j+1}-2$. It follows that 
$2^{-a_{i_j}/2}\leq 2^{a_{(i_{j+1}-1)}-a_{i_j}}<\bigl(\frac{2}{a_{i_j}}\bigr)^{(i_{j+1}-i_j-1)/2}$
and hence, $i_{j+1}-i_j<1+f(a_{i_j})$, where $f(x)=\frac{x}{\log_2(x/2)}$. Note that $f$
is an increasing function. 

Adding the above inequality for all the $\ell+1$ subsequences
$\{a_i\}_{i=i_j}^{i_{j+1}-1}$ where $j=0,\ldots,\ell$, we obtain that 
$k+1\leq\ell+1+\sum_{j=0}^{\ell} f(a_{i_j})$, so 
$k\leq\ell+\sum_{j=0}^\ell f(a_0/2^j)$. Note that $f(x)$ is a concave function for
$x\in[16,\infty)$ and that $16\leq a_{i_\ell}<a_0/2^\ell$.
(We have $f'(x)=\frac{1}{\log_2 x-1}-\frac{1}{\ln 2(\log_2 x-1)^2}$ and
$f''(x)=-\frac{1}{x\ln 2(\log_2 x-1)^2}+\frac{2}{x\ln^2 2(\log_2 x-1)^3}
=\frac{1}{x\ln 2(\log_2 x-1)^2}\bigl(\frac{2}{\ln 2(\log_2 x-1)}-1\bigr)$.)
So 
$$
\sum_{j=0}^\ell f(a_0/2^j)
\leq (\ell+1)f\Bigl(\tfrac{\sum_{j=0}^{\ell}a_0/2^j}{\ell+1}\Bigr)
\leq (\ell+1)f\Bigl(\tfrac{2a_0}{\ell+1}\Bigr)
\leq \frac{2a_0}{\log_2\bigl(\frac{a_0}{\ell+1}\bigr)}
\leq \frac{8a_0}{\log_2 a_0}.
$$
The last inequality follows since
$\log_2\bigl(\frac{a_0}{\ell+1}\bigr)
\geq\log_2\bigl(\frac{a_0}{\log_2 a_0+1}\bigr)\geq\frac{\log_2 a_0}{4}$ since 
$a_0\geq 16$. 
Hence, $k\leq\log_2 a_0+\frac{8a_0}{\log_2 a_0}\leq\frac{9a_0}{\log_2 a_0}$ provided
$a_k\geq 16$. In general, we have
$k\leq 16+\frac{9a_0}{\log_2 a_0}=O\bigl(\frac{a_0}{\log a_0}\bigr)$.
\end{proofof}

\subsection{LP-based approximation guarantees for {\normalsize \ndvrp}} \label{intgapbnds} 
Consider the following configuration LP for \dvrp, which is along the same lines as 
\eqref{minklp}. Let $\Pc_D$ denote the collection of rooted paths of length at most $D$,
which is indexed below by $P$.
\begin{equation}
\min \quad \sum_{P} x_P \qquad 
\text{s.t.} \qquad \sum_{P: v \in P} x_P \geq 1 \quad \frall v\in V, \qquad 
x_P \geq 0 \quad \frall P. \tag{DV-P} \label{dvrplp}
\end{equation}

The preprocessing described in Claim~\ref{pmodify} 
can also be applied to a fractional solution to \eqref{dvrplp} losing a factor of
2. The only change is that when we create two paths $P_1, P_2$ from a path $P$ with
positive weight $x$, we increase the weights of $P_1$ and $P_2$ by $x$ and set the
(new) weight of $P$ to 0.
We break ties while preprocessing using an arbitrary, but fixed ordering over nodes; that
is, if $u, v$ are such that $\dist_u=\dist_v$ and $u$ comes before $v$ in the ordering,
\nolinebreak
\mbox{then we ensure that no fractional path ending at $v$ contains $u$.}

Thus, notice that the DP-based algorithm described and analyzed in
Section~\ref{dvrpapprox} 
also proves an integrality gap of $O\bigl(\frac{\log R_{\max}}{\log\log R_{\max}}\bigr)$,
where $R_{\max}\leq D$ is now the maximum regret of a path in the support of an optimal
LP-solution. The only change to the {\em analysis} is that the path-collection $\Oc$ is
(of course) replaced with the preprocessed LP-optimal solution, and that $n(i)$ is now
defined to be the total LP-weight of the fractional paths having regret less than $2^i$.
Nodes in $S_i$ are now covered to an extent of at least 1 by (preprocessed)
fractional paths having regret less than $2^i$. Also, since our algorithm for \rvrp has an
LP-relative guarantee, to bound $|\Qc(i,k)|$ it suffices to exhibit a collection of
fractional paths of regret at most $2^k$ that cover $S_i$, and this is done {\em exactly}
as before.  

\begin{corollary} \label{dvrpcor}
The integrality gap of \eqref{dvrplp} is 
$O\bigl(\frac{\log R_{\max}}{\log\log R_{\max}}\bigr)$,  
where $R_{\max}\leq D$ is the maximum regret of a path in the support of an optimal
solution to \eqref{dvrplp}.
\end{corollary}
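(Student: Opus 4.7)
The plan is to essentially replay the DP-based argument from Section~\ref{dvrpapprox}, but replacing the integer optimal solution $\Oc$ with a fractional optimal solution to \eqref{dvrplp} that has been preprocessed via Claim~\ref{pmodify}. Let $x^*$ be an optimal solution to \eqref{dvrplp} and let $\OPT_{\lp}$ denote its value. First I would verify the preprocessing step: given a path $P$ with $x^*_P > 0$ ending at some node, Claim~\ref{pmodify} produces two paths $P_1, P_2$ both ending at $v = \arg\max_{u \in P} \dist_u$ that together cover $P$ and whose $c$- and $c^{\reg}$-costs are no larger than those of $P$. We can then increase $x^*_{P_1}$ and $x^*_{P_2}$ by $x^*_P$ and set $x^*_P \leftarrow 0$. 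Using the arbitrary tie-breaking rule mentioned in the paragraph above, this produces a fractional solution $\bar x$ of value at most $2\,\OPT_{\lp}$ with the key property that if a fractional path ends at $v$ and contains a node $u \ne v$, then $\dist_u < \dist_v$ (strictly, under the ordering).

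Next I would redefine $n(i) := \sum_{P: c^{\reg}(P) < 2^i} \bar x_P$ for this preprocessed fractional solution. The crucial structural observation is the analog of the statement ``$S_i$ is covered by the $n(i)$ paths of $\Oc$ of regret less than $2^i$'': for any node $v \in S_i$ (so $D - \dist_v < 2^i$), and any fractional path $P$ with $\bar x_P > 0$ containing $v$, the preprocessing forces $P$ to end at some node $w$ with $\dist_w \ge \dist_v$, so $c^{\reg}(P) \le c(P) - \dist_v \le D - \dist_v < 2^i$. Hence the total $\bar x$-weight of fractional paths of regret less than $2^i$ covering $v$ is at least $1$, i.e., $S_i$ is fractionally covered to extent $\ge 1$ by paths contributing to $n(i)$.

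With these definitions in hand, the entire argument of Theorem~\ref{dvrpthm} goes through verbatim, because every place the integer analysis used ``$\Oc$ contains $n(\cdot)$ paths of regret less than $2^{\cdot}$ that cover $S_i$'' only used this to exhibit a feasible fractional witness for the \rvrp LP after breaking paths up as in Lemma~\ref{avg2max}. Concretely, to bound $|\Qc(i,k(i))|$ I would break up the contribution of $\bar x$ coming from paths of regret in $[2^k, 2^i)$ into fractional paths of regret at most $2^k$ using the same splitting trick; this yields a fractional \rvrp solution of value at most $2n(i) + (n(i) - n(k(i)+1))\cdot 2^{i - k(i) - 1}$ covering $S_i$ with regret bound $2^{k(i)}$. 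Since Lemma~\ref{klplem} and Theorem~\ref{minkapx} together give an LP-relative guarantee for \rvrp, we obtain $|\Qc(i,k(i))| \le 3 \gm_{\mrvrp} n(i)$ exactly as before. The same choice of $k(i)$ and the same inductive analysis using Lemma~\ref{sequence} then give $F(M) = O\bigl(\tfrac{\log R_{\max}}{\log\log R_{\max}}\bigr) \cdot \gm_{\mrvrp} \cdot 2\,\OPT_{\lp}$, which is the claimed integrality-gap bound.

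The main potential obstacle is the preprocessing step for a fractional solution: one must check that it is well-defined (it is, because of the fixed tie-breaking rule, which avoids creating circular dependencies where $u$'s preprocessing wants to remove $v$ and vice versa) and that after preprocessing every fractional path ending at $v$ indeed has $c^{\reg} \le D - \dist_v$. Beyond this, no step requires new ideas: the bound on $|\Qc(i,k(i))|$ now appeals to the LP-relative guarantee of the \rvrp algorithm rather than to an integral counting argument, but the fractional witness is constructed by the identical path-splitting procedure.
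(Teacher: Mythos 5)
Your proposal is correct and follows essentially the same route as the paper: preprocess the fractional LP solution via Claim~\ref{pmodify} (losing a factor of 2 and using the fixed tie-breaking order), redefine $n(i)$ as the LP-weight of fractional paths of regret less than $2^i$, observe that $S_i$ is fractionally covered to extent 1 by such paths, and rerun the DP analysis of Theorem~\ref{dvrpthm} using the LP-relative guarantee of the \rvrp algorithm to bound $|\Qc(i,k)|$ via a fractional witness. Your explicit verification that every preprocessed path containing $v\in S_i$ has regret at most $D-\dist_v<2^i$ is exactly the key structural fact the paper relies on.
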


Let $\optdvlp$ denote the optimal value of \eqref{dvrplp}. 
We now present an LP-rounding algorithm showing an integrality gap of $O(\optdvlp)$.
As with \eqref{minklp}, the separation problem for the dual of \eqref{dvrplp} is an
orienteering problem, so one can obtain a $\gm_{\morient}$-optimal solution to
\eqref{dvrplp} given a $\gm_{\morient}$-approximation algorithm for orienteering. 
Let $x^*$ denote the fractional solution obtained after preprocessing the
$\gm_{\morient}$-optimal solution to \eqref{dvrplp} as described above. Let 
$k^*=\sum_P x^*_P\leq 2\gm_{\morient}\optdvlp$. Let $\Pc'=\{P: x^*_P>0\}$.

\begin{theorem} \label{optintgap}
We can efficiently round $x^*$ to a feasible \dvrp-solution that uses $O({k^*}^2)$ paths.
\end{theorem}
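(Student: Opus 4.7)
The plan is to round $x^*$ by partitioning $V$ into parts $V_1\sqcup\cdots\sqcup V_m$, each containing its own anchor $t_i\in V_i$, where the $t_i$'s are pairwise distinct; we then cover each part independently using our \rvrp-algorithm. Let $\Pi_i=\{P\in\supp(x^*):t(P)=t_i\}$ and $e^*_{t_i}=\sum_{P\in\Pi_i}x^*_P$. The partition will be constructed to satisfy the key property: for every $v\in V_i$, the paths in $\Pi_i$ visiting $v$ have total $x^*$-weight at least $\Omega(1/k^*)$. Note that every $P\in\Pi_i$ has length at most $D$ and ends at $t_i$, so its regret is $c^\reg(P)=c(P)-\dist_{t_i}\leq D-\dist_{t_i}$.

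Given such a partition, I scale the restriction of $x^*$ to $\Pi_i$ by a factor $O(k^*)$, producing a fractional \rvrp-solution on $V_i$ with regret bound $D-\dist_{t_i}$ and LP-objective $O(k^*\cdot e^*_{t_i})$. Invoking the LP-relative \rvrp-algorithm of Theorem~\ref{minkapx} yields $O(k^*\cdot e^*_{t_i})$ integer rooted paths of regret at most $D-\dist_{t_i}$ covering $V_i$. The greedy construction of the partition (see below) ensures $\dist_v\leq\dist_{t_i}$ for every $v\in V_i$, so any returned path $P$ ending at some $v^*\in V_i$ has length $c(P)=\dist_{v^*}+c^\reg(P)\leq\dist_{t_i}+(D-\dist_{t_i})=D$ and is therefore \dvrp-feasible. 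Summing over the parts and exploiting that the $t_i$'s are distinct (so $\sum_i e^*_{t_i}\leq\sum_t e^*_t=k^*$), the total count is $\sum_i O(k^*\cdot e^*_{t_i})=O(k^*)\cdot k^*=O({k^*}^2)$, matching the statement.

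The main technical obstacle is constructing the partition with the required coverage property. The plan is a greedy sweep processing nodes in decreasing order of $\dist_v$. By the preprocessing of $x^*$ (which places each path endpoint at the path's maximum-distance node under a fixed tiebreaking), every path in $\supp(x^*)$ containing $v$ ends at some $t$ with $\dist_t\geq\dist_v$, so $t$ has already been processed when $v$ is visited. When processing $v$: if some already-opened anchor $t_S$ absorbs $\Omega(1/k^*)$ of $v$'s coverage through paths in $\Pi_S$, I add $v$ to $V_S$; otherwise I declare $v$ the anchor of a brand-new part. The subtle point is showing that the ``otherwise'' branch satisfies the coverage property for $v$ itself, i.e., the paths in $\supp(x^*)$ ending at $v$ must cover $v$ to extent at least $\Omega(1/k^*)$. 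I expect this to follow from a charging scheme that attributes each $x^*$-path to a single anchor (e.g., the first already-opened anchor that appears as an endpoint candidate for that path in the sweep), coupled with an averaging argument over the distinct endpoints of paths through $v$, showing that whenever no prior anchor suffices for $v$ the uncharged mass of paths ending at $v$ is at least $\Omega(1/k^*)$.
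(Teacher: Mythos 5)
Your overall plan coincides with the paper's proof: process nodes in decreasing order of $\dist_v$ (using the same tie-breaking as in the preprocessing), open a new anchor whenever no previously opened anchor $t$ satisfies $\sum_{P\in\Pc_t:\,v\in P}x^*_P\geq\Omega(1/k^*)$, cover each part $V_i$ by scaling the paths ending at its anchor $t_i$ by $O(k^*)$ and invoking the LP-relative \rvrp rounding with regret bound $D-\dist_{t_i}$, verify \dvrp-feasibility via $\dist_v\leq\dist_{t_i}$ for $v\in V_i$, and total the count as $\sum_i O(k^*e^*_{t_i})=O({k^*}^2)$ using disjointness of the path classes with distinct endpoints. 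All of this matches the paper.

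However, the step you label ``subtle'' and defer to an unspecified ``charging scheme \ldots coupled with an averaging argument over the distinct endpoints'' is precisely the crux, and your sketch does not close it. Averaging over the distinct endpoints of paths through $v$ only yields a bound of $\Omega(1/m)$ where $m$ is the number of such endpoints, and $m$ is not $O(k^*)$ a priori. The paper instead runs a bootstrap with two interlocking pieces: (1) every path through the $i$-th anchor $v_i$ ends either at $v_i$ or at a node processed earlier, and each previously opened anchor $v_j$ accounts for weight less than $\frac{1}{3k^*}$ (since $v_i\notin B(v_j)$), so the paths ending at $v_i$ itself carry weight more than $1-\frac{i-1}{3k^*}$; and (2) a counting argument showing the process opens at most $\floor{3k^*}$ anchors --- the classes $\Pc_{v_i}$ are pairwise disjoint, so if there were more anchors, their weights, each exceeding $1-\frac{i-1}{3k^*}$, would sum to more than $k^*=\sum_P x^*_P$. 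Only with (2) in hand does (1) give $\sum_{P\in\Pc_{v_i}}x^*_P\geq\frac{1}{3k^*}$, i.e.\ that the fresh anchor lies in its own part and the scaled LP solution covers it. Nothing in your write-up bounds the number of opened anchors, so nothing prevents the loss to prior anchors from accumulating to $1$ and leaving the paths ending at $v$ with negligible weight; this termination/counting argument is the missing idea and is the real content of the theorem beyond the routine per-part rounding.
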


\begin{proof}
For each $v \in V$ let $\mathcal P_v = \{P \in \mathcal P': \text{$P$ ends at $v$}\}$
and $B(v) = \{u \in V : \sum_{P \in \mathcal P_v: u \in P} x^*_P\geq\frac{1}{3k^*}\}$. 
We partition $V$ into $V_1,V_2,\ldots$ 
as follows. 
Take $v_1$ to be the node furthest from the root $r$, where we break ties using the same
ordering that was used in the preprocessing step, and set $V_1=B(v_1)$. 
In general, suppose we have formed $V_1,\ldots,V_{i-1}$, 
We pick $v_{i+1}$ to be node in $V \setminus \bigcup_{j=1}^{i-1} V_j$ that is furthest
from $r$ (breaking ties as before), and set $V_{i} = B(v_{i}) \setminus \cup_{j=1}^{i-1}
V_j$. It is clear that the $V_i$s are disjoint; we prove that $v_i \in V_i$ for every $i$,
and hence the $V_i$s form a partition of $V$.

For each $i$ we have $\sum_{P: v_i \in P} x^*_P\geq 1$. 
Since we choose $v_i$ to be the furthest node in $V \setminus \bigcup_{j=1}^{i-1} V_j$,
any path in $\Pc'$ containing $v_i$ can only end at a node in $\bigcup_{j=1}^{i-1}V_j$.   
Since $v_i \not\in B(v_j)$ for all $j<i$, we have 
$\sum_{P \in \mathcal P_{v_i}} x^*_P\geq 1-\sum_{j=1}^{i-1}\sum_{P\in\mathcal P_{v_j}: v_i \in P}x^*_P 
>1-\frac{i-1}{3k^*}$. 
We argue that $i<3k^*$. It suffices to show that the above partitioning process terminates
after at most $N=\floor{3k^*}$ steps.
Suppose otherwise. 
Since $\sum_{P \in \Pc_{v_i}} x^*_P>1-\frac{i-1}{3k^*}$ for all $i$, and
$\Pc_u\cap\Pc_v=\es$ for distinct nodes $u$ and $v$, we have
\[ \sum_P x^*_P\geq\sum_{i=1}^N\sum_{P\in\Pc_{v_i}}x^*_P>N-\sum_{i=1}^{N} \frac{i-1}{3k^*} 
> 3k^* - 1 - \frac{3k^*(3k^*-1)}{6k^*}\geq k^*\] 
where the last inequality holds since $k^*\geq 1$.

Now consider a part $V_i$. Each $P \in \Pc_{v_{i}}$ has length at
most $D$, so $c^\reg(P)\leq D-\dist_{v_i}$. 
Hence, $x^i=\bigl\{3k^*x^*_P\}_{P\in\Pc_{v_i}}$ is a feasible solution to the \rvrp-LP
\eqref{minklp} for the instance with root $r$, node-set $V_i$, and regret bound 
$D-\dist_{v_i}$. 
By Theorem \ref{minkapx}, we can efficiently find at most
$O(k^*\sum_{P\in\Pc_{v_i}}x^*_P)$ paths with regret at most $D-\dist_{v_i}$ covering $V_i$;
each such path has length at most $D$ since $v_i$ is the furthest node from $r$ in $V_i$.  
Doing this for every $V_i$-set, we obtain $O(k^*\sum_Px^*_P)$ feasible paths covering all
nodes. 
\end{proof}

\section{Extensions} \label{extn}

\paragraph{Additive-\kvrp.}
Recall that in additive-\kvrp, we fix the number $k\geq 1$ of rooted paths that may 
be used to cover all the nodes and seek to minimize the maximum regret of a node.
We approach \kvrp by considering a related problem, {\em min-sum (additive) \kvrp}, where
the goal is to 
minimize the sum of the regrets of the $k$ paths. 
Our techniques are versatile and yield an $O(k)$-approximation for min-sum \kvrp, which 
directly yields an $O(k^2)$-approximation for \kvrp. 
These are the {\em first} approximation guarantees for these problems, even for $k=2$.
The only previous approximation results for \kvrp were for the  
special cases of tree metrics~\cite{BockGKS11}, and when $k=1$~\cite{BlumCKLMM07}.
Partially complementing this, we prove in Section~\ref{lbounds} that a natural
LP-relaxation for \kvrp along the same lines as \eqref{minklp} and
\eqref{minreglp} has an integrality gap of $\Omega(k)$. 

As in Section \ref{sec:symmetric}, our algorithm for min-sum \kvrp is based on LP
rounding. 
Let $\mathcal C$ denote the collection of all rooted paths. 
We now consider the following LP-relaxation for the problem, where we have a variable
$x_P$ for every rooted path. We use $\OPTR$ to denote the optimal value of
\eqref{minreglp}.  

{\centering
\hspace*{-0.14in}
\begin{minipage}[t]{0.65\textwidth}
\begin{equation}
\min\quad\sum_{P\in\C} c^{\reg}(P)x_P \qquad
\text{s.t.} \qquad \sum_{P\in\C:v\in P} x_P \geq 1 \quad \forall v\in V 
\label{vcover}
\end{equation}
\end{minipage}
\begin{minipage}[t]{0.37\textwidth}
\begin{equation}
\!,\qquad \sum_{P\in\C} x_P \leq \al k, \qquad x \geq 0. \tag{P2} \label{minreglp}
\end{equation}
\end{minipage}
}

\begin{lemma} \label{reglplem}
We can use a $\gamma_{\mmep}$-approximation algorithm for unweighted \mep to compute, for
any $\e>0$, a solution $x^*$ satisfying
\eqref{vcover}, $\sum_{P\in\C}x^*_P\leq\frac{k}{1-\e}$, and 
$\sum_{P\in\C} c^{\reg}(P)x^*_P\leq\frac{\gamma_{\mmep}}{1-\e}\cdot\OPTR$, in
time $\poly\bigl(\text{input size},\frac{1}{\e}\bigr)$. 
\end{lemma}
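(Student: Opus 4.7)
The plan is to approximately solve the exponential-size LP \eqref{minreglp} by running the ellipsoid method on its dual, using a $\gm_{\mmep}$-approximation for unweighted \mep as the approximate separation oracle. This parallels the proof of Lemma~\ref{klplem}, where orienteering plays the analogous role for \eqref{minklp}, and fits into the standard framework of LP-solving via approximate separation.

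First, I would write down the dual of \eqref{minreglp} (treating the budget as $\sum_P x_P\leq k$): with dual variables $y_v\geq 0$ for the covering constraints and $z\geq 0$ for the budget, the dual is $\max\sum_v y_v - kz$ subject to $\sum_{v\in P}y_v - z\leq c^{\reg}(P)$ for every $P\in\C$ and $y,z\geq 0$. A violated dual constraint at $(y,z)$ is precisely a rooted path $P$ with $\sum_{v\in P}y_v - c^{\reg}(P)>z$, so separation amounts to maximizing $\sum_{v\in P}y_v - c^{\reg}(P)$ over $P\in\C$. For a fixed reward target $\Pi$ and endpoint $t$, the problem of finding an $r$-$t$ path $P$ with $\sum_{v\in P}y_v\geq\Pi$ of minimum $c^{\reg}(P)$ is exactly weighted \mep; enumerating $t$ and $\Pi$ reduces separation to a polynomial family of weighted \mep instances.

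Second, I would convert weighted to unweighted \mep by the standard data-rounding reduction: scale the $y_v$'s so that all rewards are integers in $\{0,1,\ldots,\poly(n,1/\e)\}$ and make co-located copies of nodes, enlarging the input by a $\poly(n,1/\e)$ factor and losing a $(1+\e)$-factor in reward coverage, exactly as discussed in the orienteering/\mep preliminaries. Combined with the $\gm_{\mmep}$-approximation for unweighted \mep, this yields a bicriteria oracle: for each $(\Pi,t)$ it returns a path with reward at least $\Pi/(1+\e)$ and regret at most $\gm_{\mmep}$ times the minimum $c^{\reg}$ over $r$-$t$ paths with reward $\geq\Pi$. A direct calculation then shows that if none of the returned paths witnesses a violated dual constraint at $(y,z)$, an appropriately scaled-down point $(\tilde y,\tilde z)$ obtained from $(y,z)$ is truly feasible in the original dual.

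Third, running the ellipsoid method on the dual with this approximate oracle generates, in polynomial time, a polynomial-size collection $\Pc'\subseteq\C$ of paths whose associated constraints suffice to certify approximate dual optimality; solving the finite primal restricted to variables $\{x_P:P\in\Pc'\}$ then yields, by LP duality on the restricted system, a solution $x^*$ satisfying \eqref{vcover} exactly, $\sum_P x^*_P\leq k/(1-\e)$, and $\sum_P c^{\reg}(P)x^*_P\leq\tfrac{\gm_{\mmep}}{1-\e}\OPTR$. The main bookkeeping challenge is to choose the scaling in the oracle so that the $(1+\e)$-reward slack from data rounding absorbs \emph{simultaneously} into the $1/(1-\e)$-blowup of the budget and the $\gm_{\mmep}/(1-\e)$-blowup of the objective rather than creating a trade-off between them; this is arranged by scaling the dual polytope in the Gr\"otschel--Lov\'asz--Schrijver framework so that the ``approximately feasible'' region the ellipsoid optimizes over corresponds exactly to the dual of the relaxed primal with budget $k/(1-\e)$ and edge costs scaled by $\gm_{\mmep}$.
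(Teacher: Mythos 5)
Your proposal is correct and follows essentially the same route as the paper's proof: dualize \eqref{minreglp}, run the ellipsoid method with an approximate separation oracle built from the unweighted-\mep algorithm via reward rounding and enumeration of endpoints/reward targets, and recover the primal solution by LP duality on the polynomial-size restricted path collection. Your observation that the $(1+\e)$ and $\gm_{\mmep}$ losses must be folded into a relaxed dual polytope (equivalently, the dual of the primal with budget $k/(1-\e)$ and scaled costs) is exactly the paper's $\Qc(\nu;\al,\beta)$ device for coping with the negative term $-kz$ in the dual objective.
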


Lemma~\ref{reglplem} is proved in Section~\ref{lpsolve}. 
Let $k^*=\sum_{P\in\C}x^*_P$ and $\nu^*=\sum_{P\in\C}c^{\reg}(P)x^*_P$. 
The rounding procedure in Section~\ref{sec:symmetric} yields a bicriteria
approximation. Choosing threshold $\dt=1-\e$ to define the cut-requirement function in
step A1 (see Section~\ref{improve}) yields $\ceil{\frac{k^*}{\dt}}=\ceil{(1+O(\e))k}$
paths with total regret at most 
$\bigl(\frac{1}{\dt}+\frac{6}{1-\dt}\bigr)\nu^*=O\bigl(\frac{1}{\e}\bigr)\OPTR$. 

To obtain a true approximation, we choose $\e$ in Lemma~\ref{reglplem} so that
$k^*\leq k+\frac{1}{3}$ 
and set the threshold $\dt$ to be $1-\frac{1}{3k+2}$. Steps A1 and A2 of
Algorithm~\ref{mainalg} then yield
a forest $F$ such that $c(F)\leq\frac{3}{1-\dt}\cdot \nu^*=3(3k+2)\nu^*$, a set $W$ of witness
nodes, and an acyclic flow $z$ such that $z^{\into}(w)\geq\dt$ for all $w\in W$. 
The flow $\hz=z/\dt$ is a flow of value $k'\leq k^*/\dt\leq k+\frac{2}{3}$. 
But instead of using this to obtain an integral flow of value at most $\ceil{k'}$, we use
the integrality property of flows in a more subtle manner. 
We may decompose $\hz$ into a convex combination
of integral flows $\tz_1,\ldots,\tz_\ell$ such that 
each $\tz_i$ is a flow of value at least $\floor{k'}$ satisfying $\tz_i^{\into}(w)\geq 1$ for all
$w\in W$. Therefore the convex combination must place a weight of at least $\frac{1}{3}$
on the $\tz_i$ flows that have value at most $k$. 
Choose the flow of value at most $k$ with
smallest $c^{\reg}$-cost, and decompose this into $k''\leq k$ rooted paths $\hP_1,\ldots,\hP_{k''}$
so that (maybe after some shortcutting) every node of $W$ lies on exactly one $\hP_i$
path. 
It follows that the total $c^{\reg}$-cost of $\hP_1,\ldots,\hP_{k''}$ is at most
$3\cdot\sum_{a\in H}c^{\reg}_a\hz_a
\leq 3\cdot\frac{3k+2}{3k+1}\cdot\sum_{a\in H}c^{\reg}_az_a
\leq 4\sum_{P\in\C}c^{\reg}(P)x^*_P$. Now we apply step A4 to obtain the final set of
paths $\tP_1,\ldots,\tP_{k''}$.

\begin{theorem} \label{thm:minmax} \label{symregapx}
The above algorithm returns at most $k$ rooted paths having total regret 
$O(k)\cdot\nu^*=O(k\cdot\gm_{\mmep})\cdot\OPTR$. Thus, we obtain an $O(k)$-approximation
algorithm for min-sum \kvrp. 
This leads to an $O(k^2)$-approximation for \kvrp. %in undirected graphs. 
\end{theorem}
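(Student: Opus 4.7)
The plan is to track the cost of the output of the algorithm described in the paragraphs above the theorem, proving the min-sum \kvrp bound first, and then derive the \kvrp result by a simple reduction.

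I would begin by invoking Lemma~\ref{reglplem} with $\e$ chosen small enough (say $\e=\frac{1}{3k+1}$) so that $k^*=\sum_P x^*_P\leq\frac{k}{1-\e}\leq k+\frac{1}{3}$ while retaining $\nu^*=\sum_P c^{\reg}(P)x^*_P=O(\gm_{\mmep})\cdot\OPTR$. Now set $\dt=1-\frac{1}{3k+2}=\frac{3k+1}{3k+2}$. Running step A1 with cut-requirement threshold $\dt$, the same cut-LP feasibility argument as in Lemma~\ref{lem:forest} (using the modified $z_e$ from Section~\ref{improve}) gives a forest $F$ with $c(F)\leq\frac{3}{1-\dt}\nu^*=3(3k+2)\nu^*$, and hence $\sum_{Z\in\comp(F)}c(h(Z))\leq 2c(F)=6(3k+2)\nu^*$. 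Step A2 then produces, via Lemma~\ref{witpaths} (with the threshold $\frac{1}{2}$ replaced by $\dt$), the acyclic digraph $H$ and a flow $z$ satisfying $z^{\into}(w)\geq\dt$ for all $w\in W$ and having $c^{\reg}$-cost at most $\nu^*$, since shortcutting does not increase the $c^{\reg}$-length of a path.

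The critical step is to bound the number of output paths by $k$. The scaled flow $\hz=z/\dt$ has value $k'=k^*/\dt$; a direct computation gives $k'\leq\bigl(k+\tfrac{1}{3}\bigr)\cdot\tfrac{3k+2}{3k+1}=\tfrac{3k+2}{3}=k+\tfrac{2}{3}$, and $\hz^{\into}(w)\geq 1$ for every $w\in W$. By the integrality of the flow polytope on the DAG $H$ with lower-bounded indegrees on $W$, one can decompose $\hz=\sum_i\lambda_i\tz_i$ as a convex combination of integer flows with $\tz_i^{\into}(w)\geq 1$ and $\text{value}(\tz_i)\geq\lfloor k'\rfloor=k$. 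Since the expected value over this distribution equals $k'\leq k+\tfrac{2}{3}$, a simple averaging argument yields $\sum_{i:\text{value}(\tz_i)\geq k+1}\lambda_i\leq k'-k\leq\tfrac{2}{3}$, so at least $\tfrac{1}{3}$ of the total weight lies on integer flows of value exactly $k$. Picking the $c^{\reg}$-cheapest such flow, its cost is at most $3\cdot c^{\reg}(\hz)=3\cdot c^{\reg}(z)/\dt\leq 3\nu^*/\dt\leq 4\nu^*$ because $\tfrac{1}{\dt}=\tfrac{3k+2}{3k+1}\leq\tfrac{4}{3}$ for $k\geq 1$. Since $H$ is acyclic, this flow decomposes (after shortcutting) into $k''\leq k$ rooted paths $\hP_1,\ldots,\hP_{k''}$ that cover each witness node exactly once.

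Finally, I would apply step A4 to graft in the nodes of $V\sm W$: as in Lemma~\ref{lem:patching}, this increases the total regret by at most $\sum_{Z\in\comp(F)}c(h(Z))\leq 6(3k+2)\nu^*=O(k)\nu^*$. Hence the final paths $\tP_1,\ldots,\tP_{k''}$ with $k''\leq k$ have total regret at most $4\nu^*+O(k)\nu^*=O(k)\nu^*=O(k\cdot\gm_{\mmep})\OPTR$, proving the min-sum \kvrp guarantee. For additive-\kvrp with optimum max-regret $R^*$, note that any optimal solution is feasible for min-sum with objective at most $kR^*$, so $\OPTR\leq kR^*$, and our min-sum output has maximum regret bounded by its total regret, which is $O(k^2)R^*$; this yields the $O(k^2)$-approximation. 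The main obstacle is the convex-decomposition averaging step, which drives the delicate choice $\dt=1-\frac{1}{3k+2}$: it must be just small enough that $k'\leq k+\tfrac{2}{3}$ so that a constant fraction of the decomposition avoids value $k+1$, yet close enough to $1$ that both $c(F)$ and the $c^{\reg}$-cost of $\hz$ remain $O(k)\nu^*$ rather than growing super-linearly in $k$.
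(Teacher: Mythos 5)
Your proposal is correct and follows essentially the same route as the paper: the same choice of $\e$ and $\dt=1-\frac{1}{3k+2}$, the same convex decomposition of $\hz=z/\dt$ into integral flows of value at least $\floor{k'}$ with the $\frac{1}{3}$-weight averaging argument to extract a flow of value at most $k$ and $c^{\reg}$-cost at most $4\nu^*$, and the same grafting bound of $6(3k+2)\nu^*$, yielding $\bigl(4+6(3k+2)\bigr)\nu^*=O(k)\nu^*$. The concluding reduction $\OPTR\leq kR^*$ for the $O(k^2)$ bound is also the intended one.
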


\begin{proof} 
The total regret of $\tP_1,\ldots,\tP_{k''}$ is at most 
$\sum_{i=1}^{k''} c^{\reg}(\hP_i)+\sum_{Z\in\comp(F)}c\bigl(h(Z)\bigr)\leq\bigl(4+6(3k+2)\bigr)\nu^*$. 
\end{proof}

%\vspace{-1ex}
\paragraph{Multiplicative-\kvrp.}
This is the version of \kvrp with multiplicative regret. 
We can obtain a constant-factor approximation for multiplicative-\kvrp as follows.
Recall that $G=(V\cup\{r\},E)$ is the underlying graph.
Let $\ioptr$ be the optimal value of the multiplicative-\kvrp problem, which we may assume
we know within a $(1+\e)$-factor. 
Given an integer ``guess'' $R$ of this optimum value, 
we consider the following feasibility problem for multiplicative-\kvrp,
which is an adaptation of the LP-formulation in~\cite{ChakrabartyS11} for the $k$-route
minimum-latency problem. Let $\Time=R\cdot\max_v\dist_v$. 
We use $t$ to index the times in $\{1,\ldots,\Time\}$. 
(Recall that all $c_{uv}$s are positive integers.)
We have variables $x_{v,t}$ for every node $v$ and time $t\in[\dist_v,\ioptr\cdot\dist_v]$
to denote that $v$ is visited at time $t$. We also have variables $z_{e,t}$ for every edge
$e=(u,v)$ and time $t$ to denote that $e$ lies on some path and both $u$ and $v$ are
visited {\em by} time $t$.
\begin{gather}
\begin{split}
\sum_t x_{v,t} & = 1, \quad 
x_{v,t}=0\ \ \text{if $t\notin[\dist_v,R\cdot\dist_v]$} \qquad \forall v; 
\qquad \qquad \sum_{e} d_ez_{e,t} \le kt \quad \forall t \\ 
\sum_{e\in \delta(S)} z_{e,t} & \ge \sum_{t'\le t} x_{v,t'} 
\qquad \forall t, S\subseteq V, v\in S; 
\qquad \qquad \qquad \qquad x, z\geq 0. 
\end{split} \tag{P4} \label{multklp}
\end{gather}

\newcommand{\multklp}[1]{\ensuremath{\text{\eqref{multklp}}_{#1}}}

The constraints encode that: (i) every node has multiplicative regret at most $R$; 
(ii) the total cost of the portion of the $k$ paths up to time $t$ does not exceed $kt$;
and (iii) if a node $v$ is visited by time $t$ then there must be a path of edges
traversed by time $t$ connecting $r$ to $v$. 
We use $\multklp{R}$ to denote the above feasibility program with regret-bound $R$.

Chakrabarty and Swamy~\cite{ChakrabartyS11} show that given a feasible solution $(x,z)$, 
to $\multklp{R}$ one can obtain $k$ rooted paths covering all nodes such that the visiting
time of each client is $O(1)\cdot\sum_t tx_{v,t}\leq O(1)\cdot R\cdot\dist_v$. Thus, we
obtain an $O(1)$-approximation provided we can solve $\multklp{R}$. 
We will not quite be able to do this, but as in~\cite{ChakrabartyS11}, we
argue that despite the pseudopolynomial size of $\multklp{R}$, if it is feasible then 
one can efficiently compute a feasible solution 
to $\multklp{(1+\e)R}$, for any $\e>0$. 

Define $\Time_i=\ceil{(1+\e)^i}$, and let $\TS:=\{\Time_0,\Time_1,\ldots,\Time_\ell\}$ where
$\ell$ is the smallest integer such that $\Time_\ell\geq R\cdot\max_v\dist_v$. 
Let $\multklp{R}^{\TS}$ denote $\multklp{R}$ when we only consider times $t\in\TS$, and
now enforce that $x_{v,t}=0$ if $t<\dist_v$ or $t$ is larger than the smallest value in
$\TS$ that exceeds $R\cdot\dist_v$.

Given a feasible solution $(x,z)$ to $\multklp{R}$, we can obtain a feasible solution
$(x',z')$ to $\multklp{R}^{\TS}$ as follows. For $t=\Time_i\in\TS$, we set
$z'_{e,t}=z_{e,\min(t,\Time)}$ for every $e$, and
$x'_{v,t}=\sum_{t'=\Time_{i-1}+1}^{\min(t,\Time)}x_{v,t'}$. 
It is easy to see that $(x',z')$ can be extended to a feasible solution $(x'',z'')$ to 
$\multklp{(1+\e)R}$ by ``padding'' it suitably: set $x''_{v,t}=x'_{v,t}$ if $t\in\TS$ and 
0 otherwise, and $z''_{e,t}=z'_{e,\Time_i}$ for $t\in[\Time_i,\Time_{i+1})$.
Finally, observe that $\multklp{R}^{\TS}$ is a polynomial-size feasibility program, so one
can efficiently solve it.

To summarize, if $\multklp{R}$ is feasible, then so is $\multklp{R}^{\TS}$ and one can
compute a feasible solution to $\multklp{R}^{\TS}$ efficiently, which can be rounded to
obtain a solution with multiplicative regret $O(1)\cdot R(1+\e)$.

The same approach yields an $O(1)$-approximation for the problem of minimizing
a weighted sum of multiplicative regrets (with nonnegative weights). The only change is
that instead of the feasibility program \eqref{multklp}, we now have an LP whose
constraints are given by \eqref{multklp} and whose objective function is to 
minimize $\sum_{v,t}\frac{w_v}{\dist_v}\cdot tx_{v,t}$, where $w_v$ is the weight
associated with $v$'s regret. The compact LP with times in $\TS$ follows analogously, and
the rounding algorithm is unchanged. 

\begin{theorem} \label{multregapx}
There is an $O(1)$-approximation algorithm for multiplicative-\kvrp. This guarantee
extends to the setting where we want to minimize a weighted sum of the multiplicative
client-regrets (with nonnegative weights).
\end{theorem}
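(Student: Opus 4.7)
The plan is to guess $\ioptr$ within a $(1+\e)$ factor by trying values $R$ along the geometric sequence $\{(1+\e)^i\}$, and for each guess solve the compact LP $\multklp{R}^{\TS}$ on the geometrically-spaced times $\Time_i = \lceil(1+\e)^i \rceil$. The smallest feasible $R$ satisfies $R \leq (1+\e)\ioptr$, and I will pad its LP solution up to a feasible solution of $\multklp{(1+\e)R}$ and then round using the Chakrabarty--Swamy framework~\cite{ChakrabartyS11}.

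The main technical step is to verify (a) if $\multklp{R}$ is feasible, so is $\multklp{R}^{\TS}$, and (b) any solution of $\multklp{R}^{\TS}$ can be padded to a solution of $\multklp{(1+\e)R}$. For (a), given a feasible $(x,z)$ for $\multklp{R}$, define $x'_{v,\Time_i} = \sum_{t = \Time_{i-1}+1}^{\min(\Time_i,\Time)} x_{v,t}$ and $z'_{e,\Time_i} = z_{e,\min(\Time_i,\Time)}$; each constraint aggregates cleanly---the cost constraint uses $\sum_e d_e z'_{e,\Time_i} \leq k\min(\Time_i,\Time) \leq k\Time_i$, while the cut constraint uses $\sum_{e \in \delta(S)} z'_{e,\Time_i} \geq \sum_{t' \leq \min(\Time_i,\Time)} x_{v,t'} = \sum_{\Time_j \leq \Time_i} x'_{v,\Time_j}$. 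For (b), set $x''_{v,t} = x'_{v,t}$ if $t \in \TS$ and $0$ otherwise, and $z''_{e,t} = z'_{e,\Time_i}$ for $t \in [\Time_i,\Time_{i+1})$; the cost constraint at time $t \in [\Time_i,\Time_{i+1})$ reduces to the one at $\Time_i$ since $t \geq \Time_i$, the cut constraint is preserved since the cumulative $x''$-mass up to $t$ equals that up to the latest $\Time_j \leq t$, and the relaxed regret window $[\dist_v,(1+\e)R\dist_v]$ absorbs the snap-up of $R\dist_v$ to the next $\TS$-element.

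Applying the rounding of~\cite{ChakrabartyS11} to $(x'',z'')$ yields $k$ rooted paths covering $V$ in which every node $v$ is visited by time $O(1)\cdot \sum_t t\cdot x''_{v,t} \leq O(1)\cdot(1+\e)R\cdot\dist_v = O(1)\cdot\ioptr\cdot\dist_v$, proving the $O(1)$-approximation for multiplicative-\kvrp. For the weighted-sum extension, I would replace feasibility by the linear objective $\sum_{v,t} (w_v/\dist_v)\cdot t\cdot x_{v,t}$ (which upper-bounds the true weighted sum of multiplicative regrets in any integral solution); the compact LP and padding steps go through unchanged, and the pointwise visit-time bound $O(1)\cdot\sum_t t\cdot x''_{v,t}$ yields an $O(1)$-approximation for the weighted objective on summing against $w_v/\dist_v$. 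The main obstacle I anticipate is verifying that the Chakrabarty--Swamy rounding, designed for the pseudo-polynomial LP, still delivers its pointwise visit-time guarantee when fed the padded solution on the sparsified grid; this should hold because the rounding only invokes the structural constraints (cost budget, cut covering, unit $x$-mass per client) and the pointwise quantity $\sum_t t\cdot x_{v,t}$, all of which survive padding up to the $(1+\e)$ factor already accounted for in $R$.
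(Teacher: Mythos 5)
Your proposal is correct and follows essentially the same route as the paper: geometric discretization of the time horizon via $\TS$, transferring feasibility from $\multklp{R}$ to $\multklp{R}^{\TS}$ by aggregating $x$ over each time interval and sampling $z$ at the grid points, padding back to a feasible solution of $\multklp{(1+\e)R}$, and invoking the Chakrabarty--Swamy rounding for the pointwise $O(1)\cdot\sum_t t\,x_{v,t}$ visiting-time guarantee, with the weighted variant handled by switching from feasibility to the linear objective $\sum_{v,t}\frac{w_v}{\dist_v}\,t\,x_{v,t}$. No gaps.
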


%\vspace{-1ex}
\paragraph{Asymmetric metrics.}
We can also consider \rvrp and \kvrp in directed graphs, that is, the  
distances $\{c_{uv}\}$ now form an asymmetric metric. 
The regret of a node $v$ with respect to a directed path $P$ rooted at $r$ is defined as
before, and we seek rooted (directed) paths that cover all the nodes.
We crucially exploit that, as noted in Fact~\ref{obs:regret}, the regret
distances $\{c^{\reg}_{uv}\}$ continue to form an asymmetric metric. Thus, we
readily obtain guarantees for asymmetric \rvrp and asymmetric min-sum \kvrp by leveraging
known results for $k$-person $s$-$t$ asymmetric TSP-path (\katspp),    
which is defined as follows: given two nodes $s$, $t$ in an asymmetric
metric and an integer $k$, find $k$ $s$-$t$ paths of minimum total cost
that cover all the nodes. Friggstad et al.~\cite{FriggstadSS10} showed how to obtain
$O(k\log n)$ $s$-$t$ paths of cost at most $O(\log n)\cdot\OPT_k$, where $\OPT_k$ is the
minimum-cost \katspp solution that uses $k$ paths; this was improved by~\cite{Friggstad11}
to the following. 

\begin{theorem}[\cite{Friggstad11}]\label{thm:katspp}
\mbox{For any $b \geq 1$, we can efficiently find at most $k+\frac{k}{b}$ paths of total cost
$O(b \cdot \log n)\cdot\OPT_k$.}  
\end{theorem}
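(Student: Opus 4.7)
The plan is to round a natural LP-relaxation of \katspp: introduce flow variables $x_e \geq 0$ on each directed edge, and impose that $x$ is an $s$-$t$ flow of value $k$ with $\sum_{e \in \delta^{\into}(v)} x_e \geq 1$ for every vertex $v$, minimizing $\sum_e c_e x_e$. Let $L^*$ be the optimal LP value; then $L^* \leq \OPT_k$, and $x^*$ can be computed in polynomial time (the LP is compact). The role of $L^*$ throughout is as the single global cost against which we will charge every step of the rounding.

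The main rounding would adapt the Frieze--Galbiati--Maffioli cycle-cover iteration from \atsp to the $s$-$t$-path setting. Initialize $k$ trivial $s\to t$ walks and let $U$ be the set of still-uncovered vertices. Repeatedly perform two steps: (i) use $x^*$ restricted to $U$ as a feasible fractional cycle cover of $U$ of cost at most $L^*$, and extract an integral min-cost cycle cover of $U$ by minimum-cost bipartite assignment of cost at most $L^*$; (ii) splice each cycle into one of the existing $k$ walks at a shared vertex, which preserves the $s$-$t$ endpoints and, by the asymmetric triangle inequality, can only decrease cost after shortcutting. A standard halving argument (each cycle has length at least $2$, so at most half the vertices of $U$ can be fixed points of their own cycle) shows $|U|$ shrinks by a constant factor per round, so $O(\log n)$ rounds suffice, yielding $k$ $s$-$t$ walks covering all vertices of total cost $O(\log n)\cdot L^*$.

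To obtain the bicriteria trade-off, I would partition the $k$ walks into $k/b$ bundles of $b$ walks each, add $b-1$ auxiliary zero-cost $t \to s$ shortcut edges inside each bundle to concatenate its walks into one Eulerian multigraph whose only odd-degree vertices are $s$ and $t$ repeated $b+1$ times, and then decompose this multigraph into exactly $b+1$ edge-disjoint $s$-$t$ trails via Hierholzer's algorithm. Shortcutting each trail to a simple path gives $(k/b)(b+1) = k + k/b$ paths, and because each bundle consolidates $b$ walks, the worst-case per-edge cost charge inflates by a factor $O(b)$, giving total cost $O(b\log n)\cdot L^* \leq O(b\log n)\cdot\OPT_k$ as claimed.

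The main obstacle will be step (i): showing the cycle-cover cost per round is $O(L^*)$ globally rather than $O(L^*)$ per walk. This requires exhibiting $x^*|_U$ as a fractional assignment on $U$ satisfying the doubly-stochastic constraints in expectation, and then invoking LP-duality for min-cost assignment (which is integral). A related subtlety is that when $|U|$ is small relative to $k$, the splicing step in (ii) must distribute cycles across walks in an unbalanced way to preserve the amortization; I expect this is handled by attaching each cycle to whichever walk already visits a vertex of that cycle, using the coverage LP-constraint to guarantee such an overlap exists with constant probability per round.
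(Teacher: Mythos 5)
This theorem is imported from~\cite{Friggstad11}; the paper under review offers no proof of it, so your proposal has to stand on its own, and it does not. The central problem is that your ``main rounding'' claims to produce $k$ $s$-$t$ walks of total cost $O(\log n)\cdot L^*\leq O(\log n)\cdot\OPT_k$. If that step were correct, you would already have an $O(\log n)$-approximation for \katspp using \emph{exactly} $k$ paths --- strictly stronger than the theorem, and stronger than what is known; the whole point of the $(k+\frac{k}{b})$-paths-versus-$O(b\log n)$-cost trade-off is that nobody knows how to get $O(\log n)$ cost with only $k$ paths. The step fails for a concrete reason: you compute the cycle cover on the set $U$ of \emph{uncovered} vertices, so by definition no cycle shares a vertex with any existing walk, and step (ii) has nothing to splice at. Your closing suggestion that the coverage constraint guarantees an overlap ``with constant probability'' cannot repair this, since $U$ is disjoint from the covered vertices by construction. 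The Frieze--Galbiati--Maffioli iteration avoids this by contracting cycles --- each new cycle cover is computed on one representative per current component, so cycles merge components --- rather than by splicing into a fixed family of walks; and the real difficulty in the path version, which you never address, is the role of $s$ and $t$: the restriction of the value-$k$ flow to the surviving representatives is not a circulation, so each round yields $s$-$t$ path segments in addition to cycles, and managing the accumulation of these segments over $O(\log n)$ rounds is exactly what forces one either to exceed $k$ paths or to pay extra cost merging them. (Your halving argument is also internally inconsistent: if cycles really could be spliced into the walks, all of $U$ would be covered in a single round, so there would be nothing to halve.)

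The bicriteria step then runs in the wrong direction. Starting from $k$ walks of cost $C$ and producing $k+\frac{k}{b}$ paths of cost $O(b)\cdot C$ makes the solution worse in \emph{both} parameters simultaneously; the sentence ``because each bundle consolidates $b$ walks, the worst-case per-edge cost charge inflates by a factor $O(b)$'' is reverse-engineered from the theorem statement rather than derived from the construction (concatenating $b$ walks and re-splitting them does not inflate any cost, and there is no reason to split a bundle into $b+1$ trails). In the correct statement the extra $\frac{k}{b}$ paths are a resource you \emph{spend} to keep the cost down to $O(b\log n)\cdot\OPT_k$ instead of $O(k\log n)\cdot\OPT_k$: the parameter $b$ governs how many of the leftover segments produced by the iteration are retained as additional paths versus absorbed into existing ones at a cost that can only be charged to $\OPT_k$ a bounded number of times. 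That accounting is the actual content of the theorem, and the proposal does not engage with it.
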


Theorem~\ref{thm:katspp} immediately yields results for 
asymmetric min-sum \kvrp---since this is simply \katspp in the regret metric!---and hence,
for asymmetric \kvrp.  

\begin{theorem} \label{asymregapx}
There is an $O(k\log n)$-approximation algorithm for asymmetric min-sum \kvrp. This
implies an $O(k^2\log n)$-approximation for asymmetric \kvrp. 
\end{theorem}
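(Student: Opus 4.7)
The plan is to reduce asymmetric min-sum \kvrp directly to the $k$-person asymmetric $s$-$t$ TSP-path problem (\katspp) in the regret metric $c^{\reg}$, which is asymmetric by Fact~\ref{obs:regret}, and then invoke Theorem~\ref{thm:katspp}. To handle the fact that the rooted paths in min-sum \kvrp end at arbitrary nodes rather than at a fixed destination, I would introduce a dummy sink $t$ and extend $c^{\reg}$ by setting $c^{\reg}_{vt}=0$ for every $v\in V\cup\{r\}$ and $c^{\reg}_{tv}=M$ for every $v$, where $M$ exceeds the sum of all finite $c^{\reg}$-values in the original graph. A short case-check, splitting on whether any of $u,v,w$ equals $t$, confirms that the augmented graph continues to obey the asymmetric triangle inequality. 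Any collection of rooted paths covering $V$ corresponds bijectively to a collection of $r$-$t$ paths in the augmented metric covering $V\cup\{t\}$ via appending or removing the zero-cost final edge to $t$, and under this correspondence the $c^{\reg}$-cost of an $r$-$t$ path equals the regret of the associated rooted path. Since no near-optimal \katspp solution uses the expensive outgoing edges of $t$, the \katspp optimum $\OPT_k$ in the augmented metric equals the min-sum \kvrp optimum $\OPTR$.

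Next, I would apply Theorem~\ref{thm:katspp} with source $r$, sink $t$, and parameter $b=k+1$. This produces at most $k+\frac{k}{k+1}<k+1$ paths, hence at most $k$ since the path count is integral, of total $c^{\reg}$-cost $O((k+1)\log n)\cdot\OPT_k=O(k\log n)\cdot\OPTR$. Stripping the final edge into $t$ from each path yields at most $k$ rooted paths covering $V$ with total regret $O(k\log n)\cdot\OPTR$; if fewer than $k$ paths result, pad with trivial paths $\{r\}$ of zero regret to obtain exactly $k$. This establishes the $O(k\log n)$-approximation for asymmetric min-sum \kvrp.

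For asymmetric \kvrp, let $O^{\max}$ denote its optimum value. Any feasible $k$-path solution for \kvrp has total regret at most $k$ times its maximum regret, so $\OPTR\leq k\cdot O^{\max}$. Running the min-sum algorithm above therefore returns $k$ rooted paths whose total regret is at most $O(k\log n)\cdot\OPTR\leq O(k^2\log n)\cdot O^{\max}$; since the maximum regret of any single path is trivially bounded by the total regret of the collection, we obtain the claimed $O(k^2\log n)$-approximation for asymmetric \kvrp. The only delicate points---choosing $M$ large enough to preserve triangle inequality and force \katspp to avoid the heavy $t$-outgoing edges, and noting that Theorem~\ref{thm:katspp} with $b=k+1$ yields at most $k$ paths due to integrality---are routine; indeed, this is arguably the easiest of the extensions precisely because the regret-metric viewpoint of Fact~\ref{obs:regret} makes the reduction to \katspp essentially immediate once the dummy sink is in place.
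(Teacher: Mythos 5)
Your proposal is correct and matches the paper's (essentially one-line) argument: asymmetric min-sum $k$RVRP is $k$ATSPP in the regret metric with a dummy sink, and Theorem~\ref{thm:katspp} with $b=\Theta(k)$ gives at most $k$ paths (by integrality) of total cost $O(k\log n)\cdot\OPTR$, after which the min-max bound follows from $\OPTR\le k\cdot O^{\max}$ and max-regret $\le$ total regret. The only cosmetic difference is that you use a large finite $M$ on the sink's outgoing arcs where the paper uses $\infty$; either way the needed direction $\OPT_k\le\OPTR$ is immediate and intermediate visits to $t$ can be shortcut without cost increase.
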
 

We now focus on {\em asymmetric \rvrp}. 
We may no longer assume that $c_{uv}>0$, but we may assume that $c_{uv}+c_{vu}>0$ as 
otherwise we can again merge nodes $u$ and $v$. Consequently, at most one of $(u,v)$ or 
$(v,u)$ may lie on a shortest rooted path, and so if $R=0$, we can again efficiently
solve the problem by finding a minimum-cardinality path cover in a DAG.
Let $\iopt$ denote the optimal value of the given asymmetric \rvrp instance.
Observe that Lemma~\ref{avg2max} (as also Lemmas~\ref{lem:redcost} and \ref{dinc})
continues to hold when $c$ is asymmetric. 
Thus, we again seek to find $\al\cdot\iopt$
paths of average regret $\beta\cdot R$, for suitable values of $\al$ and $\beta$. 
We show that this can be achieved by utilizing (even) a  bicriteria approximation
algorithm for \katspp.

\begin{theorem} \label{thm:asymmetric} \label{asymkapx}
Suppose we have an algorithm for \katspp that returns at most $\al k$ $s$-$t$ paths
covering all the nodes with total cost at most $\beta\cdot\OPT_k$. 
Then, one can achieve an $O(\al+\beta)$-approximation for asymmetric \rvrp.
Thus, the results in~\cite{FriggstadSS10,Friggstad11} yield an $O(\log n)$-approximation
for asymmetric \rvrp. 
\end{theorem}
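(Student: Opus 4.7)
\medskip

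\noindent\textbf{Proof proposal.} The plan is to reduce asymmetric \rvrp to \katspp in the regret metric and then use Lemma~\ref{avg2max}. By Fact~\ref{obs:regret}, $\{c^{\reg}_{uv}\}$ is an asymmetric metric, and for a rooted path $P$ ending at some node $v$, the regret of $P$ equals $c^{\reg}(P) = \dist_r + c(P) - \dist_v = c(P) - \dist_v$. So if $\Oc^* = \{P^*_1,\ldots,P^*_{\iopt}\}$ is an optimal asymmetric-\rvrp solution, then $\sum_i c^{\reg}(P^*_i) \leq \iopt\cdot R$. The issue is that \katspp requires a fixed destination, while rooted paths in \rvrp may end at any node.

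First, I would build an augmented digraph $G^+$ by adding a dummy sink $t$ with $c^{\reg}_{v,t} = 0$ for every $v \in \{r\}\cup V$ and $c^{\reg}_{t,v} = M$ for a sufficiently large constant $M$ (e.g.\ $M = \max_{w,v}c^{\reg}_{w,v}$). This preserves the asymmetric triangle inequality on $G^+$: the only nontrivial case is $c^{\reg}_{w,v} \leq c^{\reg}_{w,t} + c^{\reg}_{t,v} = M$, which holds by choice of $M$. Every rooted path $P^*_i$ in $\Oc^*$ extends, by appending the zero-cost arc to $t$, to an $r$-to-$t$ path in $G^+$ of the same $c^{\reg}$-cost. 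Hence $\OPT_{\iopt}^{\katspp}(G^+) \leq \iopt\cdot R$, where $\OPT_k^{\katspp}(G^+)$ denotes the optimal $c^{\reg}$-cost of $k$ $r$-$t$ paths covering all the nodes (including $t$).

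Next, since $\iopt$ is unknown, the algorithm iterates over guesses $k \in \{1,2,\ldots,n\}$. For each guess, apply the given bicriteria \katspp algorithm on $G^+$ with $s=r$, sink $t$, and parameter $k$; when $k = \iopt$, this returns at most $\al\iopt$ $r$-to-$t$ paths of total $c^{\reg}$-cost at most $\beta\cdot\OPT_{\iopt}^{\katspp}(G^+) \leq \beta\iopt R$. By choice of $M$, none of these paths uses an arc out of $t$, so each is of the form ``$r\leadsto v\to t$'' for some $v$; stripping the final arc to $t$ yields $\al\iopt$ rooted paths in the original instance that cover $V$ and have total regret at most $\beta\iopt R$ (since $c^{\reg}_{v,t} = 0$). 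Since Lemma~\ref{avg2max} holds in the asymmetric setting, we can convert this collection (with average regret $\tfrac{\beta}{\al} R$ per path) into at most $(\al+\beta)\iopt$ rooted paths, each of regret at most $R$. Taking the best solution over all $k$ and invoking the \cite{Friggstad11} algorithm with $b=1$ (so $\al=2$, $\beta=O(\log n)$) gives the claimed $O(\log n)$-approximation.

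The main subtlety is the dummy-sink construction: without care, adding $t$ could destroy the asymmetric triangle inequality that \katspp approximations typically rely on, and could also cause the \katspp algorithm to route ``through'' $t$ in a way that doesn't correspond to a rooted path. Choosing $M$ large enough resolves both issues simultaneously. Aside from this, the argument is a clean black-box composition and no new structural insight beyond Fact~\ref{obs:regret} and Lemma~\ref{avg2max} is needed.
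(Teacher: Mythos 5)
Your proposal is correct and follows essentially the same route as the paper: construct the auxiliary \katspp instance in the regret metric with a dummy sink $t$ (the paper uses cost $\infty$ on arcs out of $t$ where you use a large finite $M$; both work), enumerate $k$, strip the sink, and finish with Lemma~\ref{avg2max}. The only cosmetic difference is that the paper selects the smallest $k$ for which at most $\al k$ paths of total cost at most $\beta kR$ are returned rather than taking the best over all guesses, which is equivalent for the stated bound.
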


\begin{proof}
Create an auxiliary complete digraph $H=(V_H,A_H)$, where $V_H = \{r\} \cup V \cup \{t\}$.
The cost of each arc $(u,v)$ where $u, v\in \{r\} \cup V$ is its regret distance
$c^{\reg}_{uv}$; for every $v\in\{r\}\cup V$, the cost of $(v,t)$ is 0 and the cost of
$(t,v)$ is $\infty$. One can verify that these arc costs form an asymmetric metric.

We consider all values $k$ in $1,\ldots,n$ and consider the \katspp instance specified by
$H$, start node $r$, and end node $t$. When $k=\iopt$, we know that there is a solution of
cost at most $\iopt\cdot R$, so using the given algorithm
for \katspp, we obtain at most $\al\iopt$\ \! $r\leadsto t$ paths in $H$ of total cost at
most $\beta\cdot\iopt\cdot R$. So the smallest $k$ for which we obtain at most $\al k$
paths of total cost at most $\beta\cdot k\cdot R$ satisfies $k\leq\iopt$. Removing $t$
from these (at most) $\al k\leq\al\cdot\iopt$ paths yields a solution in the original
metric having total $c^{\reg}$-cost at most $\beta\cdot\iopt\cdot R$. by
Lemma~\ref{avg2max}, this can be converted to a feasible solution using
$O\bigl((\al+\beta)\cdot\iopt\bigr)$ rooted paths. 

We can obtain an $O(\log n)$-approximation to $\OPT_k$ using (at most) $k\log n$
paths~\cite{FriggstadSS10}, or $2k$ paths (taking $b = 1$ in Theorem~\ref{thm:katspp}); 
plugging this in yields an $O(\log n)$-approximation for asymmetric \rvrp. 
\end{proof}

In Section~\ref{lbounds}, we prove that an $\al$-approximation for asymmetric \rvrp yields
a $2\al$-approximation for \atsp (Theorem~\ref{asymkhard}); 
thus an $\w(\log\log n)$-factor improvement to the approximation ratio
obtained in Theorem~\ref{asymkapx} would improve the state of the art for \atsp.     

%\vspace{-1ex}
\paragraph{Non-uniform \rvrp.}
In this broad generalization of \rvrp---which captures both multiplicative-\rvrp
and \dvrp---we have {\em non-uniform integer regret bounds} $\{R_v\}_{v\in V}$ 
and we seek the fewest number of rooted paths covering all the nodes where each node $v$
has regret at most $R_v$. 
Let $R_{\max}=\max_vR_v$ and $R_{\min}=\min_{v:R_v>0}R_v$. We apply Lemma~\ref{covering}
to the sets $V_0=\{v: R_v=0\}$, and $V_i=\{v: 2^{i-1}\leq R_v<2^i\}$ 
for $i=1,\ldots,O(\log R_{\max})$. There are at most
$O\bigl(\log_2(\frac{R_{\max}}{R_{\min}})\bigr)$ non-empty $V_i$s. Let $\iopt$ be the optimal
value. We cover $V_0$ using at most $\iopt$ shortest paths, and cover every other
$V_i$-set using $O(\iopt)$ paths of regret at most $2^{i-1}$. This yields a feasible
solution using $O\bigl(\log(\frac{R_{\max}}{R_{\min}})\bigr)\cdot\iopt$ paths.

Note that applying the set-cover greedy algorithm only yields an 
$O(\log^2 n)$-approximation, since finding a minimum-density set is now a  
{\em deadline \tsp} problem 
for which we only have an $O(\log n)$-approximation~\cite{BansalBCM04}.
 
%\vspace{-1ex}
\paragraph{Capacitated variants.}
Vehicle-routing problems are often considered in capacitated settings, where we are given
a capacity bound $C$, and a path/route is considered feasible if it contains at most $C$
nodes (and is feasible for the uncapacitated problem).
Capacitated additive-\kvrp does not admit any multiplicative approximation in
polytime, since it is \npcomplete to decide if there is a solution with zero
regret~\cite{Sanita13}.  
However, when we do not fix the number of paths, a standard
reduction~\cite{NagarajanR08,BockGKS11} 
shows that an $\al$-approximation to the uncapacitated problem yields an
$(\al+1)$-approximation to the capacitated version. 
This reduction also holds in asymmetric metrics. 
Thus, we obtain approximation ratios of $31.86$ and $O(\log n)$ for capacitated \rvrp in 
symmetric and asymmetric metrics, and an 
$O\bigl(\frac{\log D}{\log\log D}\bigr)$-approximation for capacitated \dvrp.

\paragraph{Multiple depots and/or fixed destinations.} 
A natural generalization of the rooted setting is where we have a set
$S=\{r_1,\ldots,r_p\}$ of depots/sources, and a set $T=\{t_1,\ldots,t_q\}$ of
destinations/sinks, and every path must begin at an $S$-node and end at a $T$-node (and
may contain nodes of $S\cup T$ as intermediate nodes). We call such a path an $S$-$T$
path. 
We define the regret of a node $v$ with respect to an $S$-$T$ path $P$ to be
$c_P(v)-\min_{r_i\in S}c_{r_iv}$, that is, the waiting time $v$ spends in excess of the 
{\em minimum time it takes to serve $v$ from any depot}. 
We define the regret of nodes in $S\cup T$, which may lie on multiple paths, as follows. 
The regret of a source $r_i\in S$ is the minimum regret it faces along any path
containing it (which is 0 if some path originates at $r_i$). 
The regret of a sink $t_j$ is the minimum regret it faces along any path {\em ending}
at $t_j$; this effectively means that we may assume (by shortcutting) that $t_j$ is not an
intermediate node on any $S$-$T$ path.   
We obtain two variants of \rvrp: (1) in {\em $S$-$T$ \rvrp}, the goal is to find the
minimum number of $S$-$T$ paths of regret at most $R$ that cover all nodes;
(2) in {\em \multi\ \rvrp}, the goal is the same, but we have $|S|=|T|$
and require that an $S$-$T$ path starting at $r_i$ must end at $t_i$.
 
We can reduce $S$-$T$ \rvrp to \rvrp as follows. Let $\bigl(G=(V,E),\{c_{uv}\})$ be the
underlying metric. 
We create a new root node $r$ and add edges $(r,r_i)$ with $c_{rr_i}=R$ for all 
$r_i\in S$. We also create nodes $t'_1,\ldots,t'_q$, and have an edge $(t_i,t'_i)$ with  
$c_{t_it'_i}=R$ for all $i=1,\ldots,q$. Let $H$ be the resulting (non-complete) graph.
Let $c^H$ denote the shortest-path metric of $H$. 
Observe that $\dist^H_v:=c^H_{rv}=\min_{r_i\in S}c_{r_iv}+M$. 
It is easy to see that any solution to $S$-$T$ \rvrp in $G$ translates to a \rvrp solution
in $H$. Conversely, given a \rvrp solution in $H$, we take every rooted walk $P'$ in $H$
and do the following. Note that neither $r$, nor any $t'_j$ node can be intermediate nodes
of $P'$. We remove the root $r$ and possibly the end-node of $P'$ (if this is some $t'_j$) to
obtain a path $P$ in $G$ starting at some depot $r_i$. For every $v\in P'\cap V$, we have
$c^H_{P'}(v)=c_{P}(v)+M$, so the regret of $v$ does not increase. Also, if $P'$ covers
$t'_j$ then $P$ must end at $t_j$, and moreover 
$c^{\reg}_P(t_j)=\text{regret of $t'_j$ along $P'$}$. 
Finally, shortcut $P$ past the intermediate nodes in $P\cap T$ and
extend the resulting path to end at an arbitrary sink (if it does not already do so). 
The resulting collection of paths is a feasible solution to $S$-$T$ \rvrp in $G$. 

Clearly, this reduction also works in asymmetric metrics. So Theorems~\ref{minkapx}  
and~\ref{asymkapx} yield approximation ratios of $O(1)$ and $O(\log n)$ for 
$S$-$T$ \rvrp in symmetric and asymmetric metrics respectively.
We can also consider the $S$-$T$ and \multi versions of \kvrp and min-sum \kvrp, where we
seek to cover all nodes using $k$ $S$-$T$ paths, or $k$ $S$-$T$ paths such that
paths starting at $r_i\in S$ end at $t_i\in T$, so as to minimize the maximum/total regret
of a path. 
Note that $k\geq |T|$.  
The above reduction 
preserves the number of paths that are used. Hence, this reduction can also be used for
\kvrp, and we obtain the same guarantees for the $S$-$T$ \kvrp and $S$-$T$ min-sum \kvrp
in symmetric and asymmetric metrics as those listed in Theorems~\ref{symregapx}
and~\ref{asymregapx} respectively.   

\smallskip
For \multi\ \rvrp, we can leverage our techniques to achieve an $O(q)$-approximation, where
$q=|S|=|T|$. In contrast, Theorem~\ref{multihard} shows that the \multi versions of \kvrp
cannot be approximated to any multiplicative factor in polytime; the status of \multi
asymmetric \rvrp is open. 
We formulate a configuration LP with a variable for every
$r_i$-$t_i$ path of regret at most $R$, for $i=1,\ldots,q$, and approximately solve this
LP to obtain $x^*$. We assign each node $v$ to an $(r_i,t_i)$ pair satisfying 
$\sum_{\text{$r_i$-$t_i$ paths $P$}}x^*_P\geq\frac{1}{q}$, ensuring that $r_i, t_i$ are
assigned to $(r_i,t_i)$. Let $V_i$ be the nodes assigned to $(r_i,t_i)$. Shortcut each
$r_i$-$t_i$ path to contain only nodes in $V_i$, and multiply the resulting fractional 
solution by $q$. For every $i=1,\ldots,q$, this yields a collection $x^\br i$ of
fractional regret-$R$-bounded $r_i$-$t_i$ paths covering $V_i$ (to extent of 1). We now
merge all the $r_i$s to create a supernode $r$ (modifying each fractional path
accordingly), 
and round each $x^\br i$ separately using Algorithm~\ref{mainalg}. 
Thus, for each $i=1,\ldots,q$, we obtain $O(\sum_P x^\br i_P)$ $r_i$-$t_i$ paths of regret
at most $R$ covering $V_i$, and hence $O(q\sum_P x^*_P)$ paths in all.

\section{Approximation and integrality-gap lower bounds} \label{lbounds}
We now present lower bounds on the inapproximability of \rvrp and
\kvrp, and the integrality gap of the configuration LPs considered. 
We obtain both absolute inapproximability results (assuming {\em P}$\neq$\np), and
results relating the approximability of our problems to that of other 
well-known problems. 

\begin{theorem} \label{symkhard}
It is \nphard to achieve an approximation factor better than 2 for additive- and
multiplicative- \rvrp. 
\end{theorem}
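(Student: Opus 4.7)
The plan is to reduce from the NP-complete Hamiltonian path problem (equivalently, the decision version of \tsp-path with $\{1,2\}$-distances) and construct an \rvrp instance in which $\OPT=1$ precisely when the source is a ``yes'' instance, and $\OPT\geq 2$ otherwise. Any polynomial-time $\al$-approximation with $\al<2$ would then have to output a single path exactly when $\OPT=1$ (since the number of paths is a positive integer), and so would decide Hamiltonian path.

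Given an unweighted graph $G=(V,E)$ with $|V|=n$, I take the root $r$ to be a new node, the node set to be $V\cup\{r\}$, and define edge costs $c_{rv}=1$ for all $v\in V$, while for $u,v\in V$ I set $c_{uv}=1$ if $(u,v)\in E$ and $c_{uv}=2$ otherwise. All pairwise distances lie in $\{1,2\}$, so the triangle inequality is trivially satisfied, and $\dist_v=1$ for every $v\in V$.

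For additive-\rvrp I set $R=n-1$. If $G$ has a Hamiltonian path $v_1,\ldots,v_n$, then the rooted path $r,v_1,\ldots,v_n$ has cumulative cost $i$ at $v_i$, so its additive regret there is $i-1\leq n-1=R$, giving $\OPT=1$. Conversely, if a single feasible rooted path $P$ covers $V$ and ends at some $v$, then $c(P)=c_P(v)\leq\dist_v+R=n$; since $P$ has exactly $n$ edges, each of cost in $\{1,2\}$, every edge of $P$ must have cost $1$, so the portion of $P$ beyond $r$ is a Hamiltonian path in $G$. For multiplicative-\rvrp the identical construction with $R=n$ works: a Hamiltonian path gives multiplicative regret $c_P(v_i)/\dist_{v_i}=i\leq n$ at every node, while conversely feasibility forces $c(P)\leq R\cdot\dist_v=n$ at the terminal node and thus only cost-$1$ edges on $P$.

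The main obstacle is producing a \emph{strict} gap between $\OPT=1$ and $\OPT\geq 2$: the construction must simultaneously admit the rooted Hamiltonian path as a feasible single path while forbidding \emph{any} single path that uses a cost-$2$ edge. Taking $\dist_v=1$ uniformly and choosing $R=n-1$ (resp.\ $R=n$) is the tight setting that makes both conditions hold; a slightly larger $R$ would allow a single path to use one cost-$2$ edge, collapsing the gap. Finally, these values of $R$ fall within the nontrivial ranges ($R\geq 1$ additively, $R>1$ multiplicatively) identified in Sections~\ref{prelim} and~\ref{multreg}, so the polynomial-time degenerate cases are avoided, and we conclude that a $(2-\e)$-approximation for either version is \nphard.
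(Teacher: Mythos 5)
Your proof is correct, and it follows the same high-level strategy as the paper (showing that deciding whether $\OPT=1$ is \nphard, which rules out any factor below $2$), but the reductions themselves are genuinely different. The paper reduces additive-\rvrp from metric \tsp with a length bound $D$ by attaching an auxiliary node $r'$ at distance $D$ from the root, so that a single feasible path is forced to end at $r'$ and its prefix is a \tsp tour; for the multiplicative version it similarly appends a far-away node $r'$ adjacent to a designated sink $t$ and reduces from the fixed-endpoint \tsp-path decision problem with regret bound $2$. You instead reduce directly from Hamiltonian path in a $\{1,2\}$-metric with all nodes at distance $1$ from the root, and make the regret budget itself tight: a single feasible path has at most $n$ units of length to spend on $n$ edges each costing at least $1$, so every edge must be a graph edge. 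Your counting argument (at least $n$ edges to reach the last newly covered node, each of cost at least $1$, versus the budget $n$ forced by the regret constraint at that node) is complete and handles walks as well as simple paths. What each approach buys: your construction is more self-contained and shows hardness already for $\{1,2\}$-metrics with uniform depot distances, whereas the paper's gadget reduction is modular (it converts any \tsp/\tsp-path instance into an \rvrp instance, so any hardness-of-approximation for those problems transfers) and, for the multiplicative case, exhibits hardness at the fixed constant regret bound $R=2$ rather than at a bound growing with $n$.
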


\begin{proof} 
We give simple reductions from \tsp and \tsp-path.
First, consider additive-\rvrp.
Given an instance $\bigl(G=(V,E),\{c_{uv}\}\bigr)$, where $c$ is a metric, and 
a length bound $D$, we reduce the problem of determining if there is a \tsp solution of
length at most $D$ to determining if there is an additive-\rvrp solution of value 1. It
follows that it is \nphard to approximate additive-\rvrp to a factor better than 2.

We designate an arbitrary node of $G$ as the root $r$, create a new node $r'$ and add an
edge $(r,r')$ of cost $D$. The (additive) \rvrp instance is specified by the shortest-path
metric of this new graph $H$, root $r$, and regret-bound $D$. A \tsp solution of length at
most $D$ yields a \rvrp solution using one path, where we traverse the \tsp tour starting
from $r$ and then visit $r'$. Conversely, given a \rvrp solution that uses a single path
$P$, $P$ must end at $r'$, and so removing $r'$ from $P$ yields a \tsp tour in $G$ of
length at most $D$. 

\medskip
For multiplicative-\rvrp, we consider the following decision version of \tsp-path: given 
a ``sink'' $t$, we want to decide if there is a Hamiltonian path of length at most $D$
having $t$ as one of its end points. We show that this reduces to the problem of deciding
if there is a multiplicative-\rvrp solution of value 1; hence, it is \nphard to
approximate multiplicative-\rvrp within a factor better than 2.

We add two new nodes $r$ and $r'$ to $G$. We add edges $(r,v)$ of cost $D$ for all 
$v\in V$, and edges $(r,r')$ and $(t,r')$ of cost $2D$.
The multiplicative-\rvrp instance is specified by the shortest-path metric of this new
graph $H$, root $r$ and regret-bound $2$. Note that $c_{rv}=D$ for all $v\in V$, and
$c_{rr'}=2D$. 
A \tsp-path $P$ of length $D$ ending at $t$ yields a multiplicative-\rvrp solution that uses
one path, where we move from $r$ to the start node of $P$, then traverse $P$, and finally
visit $r'$. This is feasible since the visiting time of every $v\in V$ is at most $2D$,
and the visiting time of $r'$ is at most $4D\leq 2c_{rr'}$.
Conversely, suppose we have a multiplicative-\rvrp solution that uses a single path
$Q$. Then, $Q$ must end at $r'$, otherwise some node $v\in V$ has visiting time at least
$4D>2c_{rv}$. So $Q$ must move from $r$ to some $v\in V$, then cover all the nodes in $V$
ending at $t$, and finally move from $t$ to $r'$. Thus, $Q$ restricted to $V$ yields a
Hamiltonian path $P$ in $G$ ending at $t$. The visiting time of $r'$ is 
$D+c(P)+2D\leq 2\cdot 2D$, so $c(P)\leq D$.
\end{proof}

Next, we prove that the approximability of asymmetric \rvrp 
is closely related to that of \atsp. 
In particular, this connection implies that 
improving our results for asymmetric \rvrp (Theorem~\ref{asymkapx}) by an
$\w({\log\log n})$-factor would improve the state-of-the-art for \atsp.

\begin{theorem} \label{asymkhard}
Given an $\alpha$-approximation algorithm for \rvrp in asymmetric metrics, one can 
achieve a $2\alpha$-approximation for \atsp.
\end{theorem}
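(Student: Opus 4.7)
The plan is to use the ATSP instance itself as an asymmetric \rvrp instance: pick any node $r$ as the root, keep the metric $\{c_{uv}\}$ unchanged, and choose the regret bound to equal the optimal tour length $\OPT$. (We guess $\OPT$ by trying candidate values in polynomial time; losing a $(1+\e)$ factor in the guess costs only a $(1+\e)$ factor in the final approximation.)

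First I would show that under this setup the asymmetric \rvrp instance admits a single-path solution. If $T = r,v_1,\dots,v_{n-1},r$ is an optimal ATSP tour, then the rooted Hamiltonian path $P = r,v_1,\dots,v_{n-1}$ covers all nodes and satisfies
\[
c^{\reg}(P) \;=\; c(P) - c_{r v_{n-1}} \;\leq\; c(T) - c_{v_{n-1}r} - c_{r v_{n-1}} \;\leq\; c(T) \;=\; \OPT \;=\; R,
\]
using $c(P)=c(T)-c_{v_{n-1}r}$ and Fact~\ref{obs:regret}. Hence the \rvrp optimum is at most $1$, and the assumed $\al$-approximation returns a collection of at most $\al$ rooted paths $P_1,\dots,P_k$ ($k\leq\al$), each of regret at most $R$, covering every node.

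Next I would stitch these paths into an ATSP tour. Let $v_i$ denote the endpoint of $P_i$. Consider the closed walk that traverses $P_1$ from $r$ to $v_1$, takes the arc $(v_1,r)$, then traverses $P_2$, takes $(v_2,r)$, and so on until returning to $r$ after $P_k$. Since every node is visited, triangle inequality lets us shortcut this walk to a Hamiltonian cycle of no greater cost. The walk's cost is
\[
\sum_{i=1}^{k} c(P_i) \;+\; \sum_{i=1}^{k} c_{v_i r}
\;=\; \sum_{i=1}^{k} \bigl(c^{\reg}(P_i) + c_{r v_i} + c_{v_i r}\bigr),
\]
where I used $c(P_i)=c^{\reg}(P_i)+\dist_{v_i}$ for an $r$-rooted path. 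Each $c^{\reg}(P_i)\leq R=\OPT$, contributing at most $k\cdot\OPT\leq\al\cdot\OPT$. For the round-trip terms $c_{rv_i}+c_{v_ir}$, triangle inequality implies that shortcutting the optimal ATSP tour through just $r$ and $v_i$ yields a 2-cycle of cost $c_{rv_i}+c_{v_ir}\leq\OPT$, contributing another at most $k\cdot\OPT\leq\al\cdot\OPT$ in total. This gives a Hamiltonian tour of cost at most $2\al\cdot\OPT$.

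The main obstacle I anticipate is a clean handling of the guess for $R$: we need $R\geq\OPT$ to guarantee the single-path feasibility, but taking $R$ too large weakens the $2\al$ bound. I would resolve this by binary-searching over values of $R$ (either over all distinct edge-cost sums that could arise as $c^\reg(P)$, or over a $(1+\e)$-geometric grid up to $n\cdot c_{\max}$), stopping at the smallest $R$ for which the \rvrp algorithm returns at most $\al$ paths; this yields $R\leq(1+\e)\OPT$ and hence a $(2+O(\e))\al$-approximation, which after taking $\e\to 0$ (or by exploiting integrality of edge costs when available) gives the stated $2\al$ factor.
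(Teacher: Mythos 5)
Your proposal is correct and matches the paper's proof in essentially every respect: root at an arbitrary node, guess $R\approx\OPT_{\matsp}$ so that the optimal tour certifies a single-path \rvrp solution (hence the $\al$-approximation returns at most $\al$ paths), stitch the paths into a tour via the return arcs $(v_i,r)$, and bound the cost by $\sum_i\bigl(c^{\reg}(P_i)+c_{rv_i}+c_{v_ir}\bigr)\leq\al(R+\OPT_{\matsp})\leq 2\al\cdot\OPT_{\matsp}$. The only (immaterial) differences are that you compute the regret of the Hamiltonian path exactly rather than invoking $c^{\reg}(P)\leq c(P)$, and the precise stopping criterion used in the search over $R$.
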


\begin{proof}
Suppose we have an \atsp instance with distances $c_{uv}$ whose optimal value is
$\OPT_{\matsp}$.  For a given parameter $R$, the following algorithm will return a
solution of cost at most $2\alpha \cdot R$ provided $R \geq \OPT_{\matsp}$.  
We can then use binary search to find the smallest $R$ for which the algorithm returns a
solution of cost at most $2\alpha\cdot R$, and thus obtain an \atsp solution of cost at
most $2 \alpha\cdot\OPT_{\matsp}$. 

Fix any node to be the root $r$. The algorithm first runs the $\alpha$-approximation for
asymmetric \rvrp on the \rvrp instance specified by the metric $c$ and regret bound $R$ to
find some collection of rooted paths $P_1,\ldots,P_k$. Let $v_i$ be the end node of $P_i$. 
For each $P_i$, we add the $(v_i,r)$ arc to obtain an Eulerian graph. 
The cost of the resulting Eulerian tour is $\sum_{i=1}^k (c(P_i)+c_{v_ir})$.

We claim that if $R\geq\OPT_{\matsp}$ then this cost is at most 
$2\alpha\cdot R$. To see this, note that an optimal solution to the \atsp
instance also yields a Hamiltonian path starting at $r$ of cost at most $R$. Since the
regret cost of a rooted path is at most its cost, we infer that the optimum solution to
the asymmetric \rvrp instance with regret bound $R$ uses only 1 path. Thus, we obtain that  
$k\leq\al$. 
We know that $c(P_i)\leq\dist_{v_i}+R$, and $\dist_{v_i}+c_{v_ir}\leq\OPT_{\matsp}$ for
every $i=1,\ldots,k$.  
Thus, $\sum_{i=1}^k (c(P_i)+c_{v_ir})\leq\al(R+\OPT_{\matsp})\leq 2\al R$. 
\end{proof}

Friggstad~\cite{Friggstad11} proved a hardness result for the \multi version of \katspp.
We observe that this reduction creates a \katspp instance where the metric is essentially
a regret metric. Thus,  
we obtain the following hardness results for \multi \kvrp. 

\begin{theorem} \label{multihard}
It is \npcomplete to decide if an instance of \multi \kvrp has a solution with zero
regret. Hence, no multiplicative approximation is achievable in polytime for \multi \kvrp
and \multi min-sum \kvrp. 
Moreover, \multi\ \rvrp is NP-hard even when the regret-bound is zero.
\end{theorem}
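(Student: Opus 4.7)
The plan is to leverage the hardness reduction of Friggstad~\cite{Friggstad11} for the multi-pair version of \katspp, and to observe that the asymmetric metric produced by that reduction is, up to a routine modification, a regret metric. Concretely, I would first recall that Friggstad's reduction shows it is \npcomplete to distinguish multi-pair \katspp instances of cost $0$ from instances whose optimal cost is strictly positive (with $k$ equal to the number of source-sink pairs). The second step is to inspect the edge-costs $d_{uv}$ in that construction and check that they can be written as $\dist_u + c_{uv} - \dist_v$ for some symmetric metric $\{c_{uv}\}$ on $V\cup\{r\}$ and some depot distances $\{\dist_v = c_{rv}\}$. This is exactly the regret-metric condition from Fact~\ref{obs:regret}, so by part (ii) of that fact, zero-cost $r_i$-$t_i$ paths in the $d$-metric correspond exactly to zero-regret $r_i$-$t_i$ paths in the $c$-metric.

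Once this identification is made, the reduction transfers verbatim: a multi-pair \kvrp instance in the symmetric metric $c$ with depots $S$, sinks $T$, and $k$ pairs admits a solution of regret exactly $0$ if and only if the corresponding multi-pair \katspp instance in $c^{\reg}=d$ admits a zero-cost solution. This yields \npcompleteness of deciding whether an instance of \multi \kvrp has a zero-regret solution. The inapproximability consequences are then immediate: a polynomial-time $\rho$-approximation algorithm for \multi \kvrp (or for \multi min-sum \kvrp), applied to a zero-optimum instance, must output a solution of value at most $\rho \cdot 0 = 0$, thereby deciding the hard problem; hence no multiplicative approximation is achievable unless \emph{P}$=$\np. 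For the final clause, note that the decision problem ``does this instance of \multi \kvrp admit a solution of regret $0$?'' is precisely the \multi \rvrp feasibility problem with regret bound $R=0$ and the same number of paths, so \multi \rvrp is \nphard even in this extreme regime.

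The main technical obstacle is the second step: verifying (or, if necessary, arranging) that the asymmetric distances in Friggstad's construction genuinely form a regret metric rather than an arbitrary asymmetric metric. If the original gadgets do not satisfy the identity $d_{uv} = \dist_u + c_{uv} - \dist_v$ with $c$ symmetric, I would introduce a fresh root $r$ together with suitably chosen shortest-path distances $\dist_v$ from $r$, and define $c_{uv} := d_{uv} - \dist_u + \dist_v$. The requirements to check are (i) $c_{uv} = c_{vu} \geq 0$, (ii) $c$ satisfies the triangle inequality, and (iii) the $\dist_v$'s are realized by $r$-$v$ shortest paths in $c$; conditions (i)--(iii) together force $d = c^{\reg}$. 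If Friggstad's gadgets do not immediately satisfy all three, I expect a small modification (padding edges and adding a central depot at large distance) to enforce them while preserving the $0$-versus-positive cost gap, which is all that is needed for the hardness transfer.
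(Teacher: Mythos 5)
Your plan is the same one the paper follows in spirit: both routes go through Friggstad's hardness construction for \multi \katspp (which is itself a reduction from tripartite triangle cover) and the observation that the metric there is ``essentially a regret metric,'' and your derivation of the inapproximability consequences from zero-optimum hardness and of the last clause (feasibility with regret bound $0$) is exactly right. The difference is in how the crucial step is discharged. You leave it as a conditional to-do (``verify, or if necessary arrange, that $d_{uv}=\dist_u+c_{uv}-\dist_v$ for some symmetric $c$''), whereas the paper does not try to certify Friggstad's asymmetric instance post hoc; it simply rebuilds the gadget natively in the symmetric world: four layers $V_1,V_2,V_3,V_4$ (two copies of $U$, one each of $V$ and $W$), an edge of cost $R>0$ between copies of adjacent vertices in consecutive layers, source--sink pairs $(u\in V_1,\,u\in V_4)$, and the shortest-path metric of this graph. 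In that instance the multi-depot distance of a layer-$j$ node is $(j-1)R$, so a zero-regret $r_i$--$t_i$ path is forced to step through one node per layer along edges of the gadget, i.e.\ it is exactly a triangle through $u$, and $n$ vertex-disjoint covering triangles correspond to $n$ zero-regret paths.

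Two cautions about your fallback, should you try to execute it literally. First, in \multi \kvrp the regret of a node is measured against $\min_{r_i\in S}c_{r_iv}$, not against a single root, so the identity you need is $d_{uv}=\dist_u+c_{uv}-\dist_v$ with $\dist$ the \emph{multi-source} distance; a zero-cost $r_i$--$t_i$ path in a single-root regret metric need not have zero regret in the multi-depot sense if some intermediate node is closer to a different depot $r_j$. (Adding a fresh root equidistant from all sources, as in the paper's $S$-$T$ reduction, shifts all $\dist_v$ by the same constant and fixes this, but it has to be said.) Second, ``verify that Friggstad's gadgets satisfy (i)--(iii)'' is the entire content of the proof, and the paper's choice to restate the construction symmetrically is precisely the way to avoid having to reverse-engineer a symmetric $c$ from an asymmetric $d$ (which in general need not exist). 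So your outline is correct and matches the paper's, but as written it stops just short of the one verification that makes it a proof.
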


\begin{proof} 
We dovetail the reduction in~\cite{Friggstad11}. 
Given a tripartite graph $G = (U \cup V \cup W, E)$ with $|U| = |V| = |W| = n$, the
tripartite triangle cover problem is to determine if there are $n$ vertex-disjoint cliques
(which must be copies of $K_3$) in $G$. This problem is \npcomplete~\cite{GareyJ79}.%
\footnote{Strictly speaking, \cite{GareyJ79} shows that triangle cover in general graphs
is \nphard via a reduction from {\em exact cover by 3-sets}.
However, one can easily verify that if we use the \npcomplete{} {\em 3D-matching}
problem (which is a special case of exact-cover-by-3-sets) in their reduction, then the
resulting triangle-cover instances are tripartite.} 

Create four layers of nodes $V_1, V_2, V_3, V_4$ where $V_1$ and $V_4$ are disjoint copies
of $U$, $V_2$ is a copy of $V$, and $V_3$ is a copy of $W$.  For each edge $e = (u,v) \in E$,
there is exactly one index $i$ for which the copies of $u$ and $v$ are in consecutive
layers. Without loss of generality, say $u \in V_i$ and $v \in V_{i+1}$.  Then we add an
edge between the copy of $u$ in $V_i$ and $v$ in $V_{i+1}$ with cost $R > 0$.

Finally, for each $u \in U$ we create a source-sink pair that starts at the copy of $u$ in
$V_1$ and ends at the copy of $u$ in $U_4$. Denote the shortest path metric of this graph
by $H$. As in \cite{Friggstad11}, $G$ can be covered with $n$ vertex-disjoint cliques if and
only if $H$ has a solution with maximum (or total) regret 0.

The same reduction shows that \multi\ \rvrp is \nphard even with regret bound 0. Fixing the
regret-bound to zero, $H$ has a \multi\ \rvrp solution using $k$ paths iff $G$ can be covered
with $k$ vertex-disjoint cliques.
\end{proof}

%\vspace{-1ex}
\paragraph{Integrality-gap lower bounds.}
We prove that a natural configuration-style LP-relaxation for \kvrp has an
$\Omega(k)$ integrality gap. A common technique used for min-max (or bottleneck) problems
is to ``guess'' the optimal value $B$, 
which can often be used to 
devise stronger relaxations for the problem as well as strengthen the analysis, since $B$
now serves as a lower bound on the optimal value;   
examples include the algorithms of~\cite{LenstraST90,ShmoysT93,Svensson12} 
for unrelated-machine scheduling, and~\cite{AnKS10} for bottleneck \atsp.
We show that this approach does not work for \kvrp. 
Given a guess $R$ on the maximum regret, similar to \eqref{minklp}
and \eqref{minreglp}, one can consider the following feasibility problem to determine if 
there are $k$ rooted paths in $\C_R$ that cover all nodes. (Recall that $\C_R$ is the
collection of all rooted paths with regret at most $R$.) 
\begin{equation}
\sum_{P\in\C_R:v\in P} x_P \geq 1 \quad \forall v\in V, \qquad
\sum_{P\in\C_R} x_P \leq k, \qquad x \geq 0. \tag{P3} \label{minmaxlp}
\end{equation}
Let $\optrmax$ be the smallest $R$ for which \eqref{minmaxlp} is feasible, and $\ioptr$ be
the optimal value of the \kvrp instance. We describe instances
where $\ioptr\geq k\cdot\optrmax$; 
thus, the ``integrality gap'' of \eqref{minmaxlp} is at least $k$.

\begin{theorem} \label{intgap}
For any integers $h, c\geq 1$, there is a \kvrp instance with $k=c(2h-1)$ such that
$\optrmax\leq 1$ but any integer solution with maximum regret less than $2h-1$ must use at
least $k+c$ rooted  paths. Thus, (i) $c=1$ yields $\ioptr\geq k\cdot\optrmax$; 
(ii) taking $c=h$ shows that one needs $k+c=k+\frac{k}{2h-1}$ paths to achieve maximum
regret less than $(2h-1)\optrmax$.  
\end{theorem}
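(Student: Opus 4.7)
The plan is to prove Theorem \ref{intgap} by constructing, for each pair $(h,c)$, an instance obtained by taking $c$ disjoint copies of a single ``gadget'' on $2h$ non-root nodes, glued only at the common root $r$. Inter-gadget distances will be tuned so that any two-node hop between different gadgets contributes at least $2h-1$ to the regret; this forces every regret-$<2h-1$ integer cover to partition its paths across gadgets, and likewise allows the LP solution to be taken as a direct sum of per-gadget fractional covers. Under this reduction, it suffices to prove two statements for a single gadget: (a) the LP \eqref{minmaxlp} is feasible with regret bound $1$ and total path weight exactly $2h-1$, and (b) any integer cover of the gadget by paths of regret strictly less than $2h-1$ requires at least $2h$ paths.

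For the per-gadget design I propose a hub-and-leaves layout with $c_{rw}=1$, $c_{rv_i}=h+1$, $c_{wv_i}=h+1$, and $c_{v_iv_j}=2h-1$ for $i\neq j$. The triangle inequality is easy to check; the tight bound $c_{v_iv_j}\leq c_{v_iw}+c_{wv_j}=2h+2$ is slack. A case analysis shows that the only paths of regret strictly less than $2h-1$ are $r$, $r\to w$, $r\to v_i$, and $r\to w\to v_i$ (this last has regret exactly $1$, while $r\to v_i\to v_j$ has regret $2h-1$, $r\to v_i\to w$ has regret $2h+1$, and longer paths only get worse). For part (a) I would place weight $\tfrac{1}{2h-1}$ on each $r\to w\to v_i$ path (giving $w$ total coverage $1$ and each $v_i$ coverage $\tfrac{1}{2h-1}$) and weight $\tfrac{2h-2}{2h-1}$ on each singleton $r\to v_i$, for a per-gadget path-weight of $1+(2h-2)=2h-1$.

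The delicate part is (b). In the basic gadget above, an integer solution can still match the LP by using one $r\to w\to v_1$ path plus $2h-2$ singletons $r\to v_i$, totaling $2h-1$ paths at regret $1$. To create a genuine per-gadget $+1$ integrality gap, I would augment the gadget with auxiliary ``token'' nodes attached so that (i) the LP can still cover the extra nodes by blending the existing fractional structure without increasing the weight bound, but (ii) any integer cover must commit the hub to one leaf and, simultaneously, dedicate an extra path to the token, so the pigeonhole count on the remaining leaves forces a $2h$-th path. The metric choices for the token nodes must preserve the triangle inequality and must not introduce any new path of regret in $[2,2h-2]$; the natural way to achieve this is to place the tokens at distances that only connect to the leaves via ``matching'' edges tuned to regret $0$ (for LP use) but at mutual distances (to other leaves or the hub) that are $\geq 2h-1$.

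Finally, the composition step is routine once (a) and (b) are in place: summing (a) across $c$ gadgets yields an LP value of $c(2h-1)=k$ at regret $1$, so $\optrmax\leq 1$; and (b) combined with the cross-gadget separation shows any integer solution with max regret $<2h-1$ uses at least $2h$ paths per gadget, i.e., $\geq c\cdot 2h=k+c$ paths total. Specializing to $c=1$ yields the $\ioptr\geq k\cdot\optrmax$ bound, while $c=h$ yields the $k+\frac{k}{2h-1}$ path-count gap. The main obstacle, as above, is engineering the augmented gadget: one must simultaneously (i) permit only the ``hub'' regret-$1$ multi-node paths, (ii) force the LP fractional solution to just barely fit within weight $2h-1$, and (iii) ensure integer covers face a strict $+1$ obstruction beyond elementary pigeonhole --- a balance that lives on the knife-edge of the triangle inequality.
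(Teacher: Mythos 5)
Your high-level composition strategy (disjoint gadgets sharing the root, with cross-gadget hops forced to incur regret at least $2h-1$) matches the paper's, and your arithmetic for assembling per-gadget bounds into the claimed $k+c$ lower bound is sound. But the proof has a genuine gap at its core: the gadget that actually exhibits the integrality gap is never constructed. You correctly observe that your hub-and-leaves gadget has \emph{no} gap --- the integer solution $r\to w\to v_1$ plus $2h-2$ singletons matches the LP exactly --- and then defer the essential difficulty to an unspecified ``token node'' augmentation, which you yourself describe as the main obstacle. This is not a routine engineering detail: the hub design is structurally incapable of producing a gap, because the only multi-node paths of regret less than $2h-1$ are the $2h-1$ paths $r\to w\to v_i$, which pairwise intersect only in $\{w\}$, so each leaf is covered only by paths containing no other leaf; any fractional cover therefore spends a full unit of path-weight per leaf, exactly as an integer cover does.

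The paper's construction works for a structurally different reason. Its gadget is a ``ladder'' $L_h$ on $2(2h-1)$ non-root nodes ($u_i,v_i$ joined by cost-$h$ rails and unit-cost rungs), which admits $2h-1$ \emph{distinct} regret-$1$ rooted paths $P_1,\dots,P_{2h-1}$, each covering roughly half the nodes, arranged so that every node lies on exactly $h$ of them. Putting weight $\frac{1}{h}$ on each gives a fractional cover of total weight $2-\frac{1}{h}$, whereas any single rooted path covering all of $L_h$ has regret at least $2h-1$, so an integral cover with maximum regret below $2h-1$ needs at least $2$ paths per ladder. Taking $ch$ ladders then gives LP weight $ch\bigl(2-\frac{1}{h}\bigr)=c(2h-1)=k$ versus integer cost $2ch=k+c$. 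The lesson is that the gap must come from a family of many cheap paths that heavily overlap in coverage (each node covered $h$ times over) while no single cheap path covers everything --- a property your star-shaped gadget cannot have. To complete your proof you would need to exhibit such a gadget explicitly; as written, the central construction is missing.
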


\begin{proof}
Our instance will consist of copies of the following ``ladder graph'' 
$L_h=(\{r\}\cup V,E)$. We have $V=\{u_1, v_1, u_2, v_2, \ldots, u_{2h-1}, v_{2h-1}\}$. 
Define $u_0=r=v_0$. $E$ consists of the edges 
$\{(u_i,u_{i+1}),(v_i,v_{i+1}): 0\leq i<2h-1\}$, which have cost $h$, along with edges 
$\{(u_i,v_i): 1 \leq i \leq 2h-1\}$, which have unit cost (see Figure~\ref{ladder}). 
Consider the shortest path metric of $L_h$. Any rooted path that covers all nodes of $L_h$
must have regret at least $2h-1$ (and this is achieved by the path 
$r, u_1, v_1, v_2, u_2,\ldots, u_{2h-1}, v_{2h-1}$).  

\begin{figure}[ht!]
\begin{center}
\includegraphics[scale=1.0]{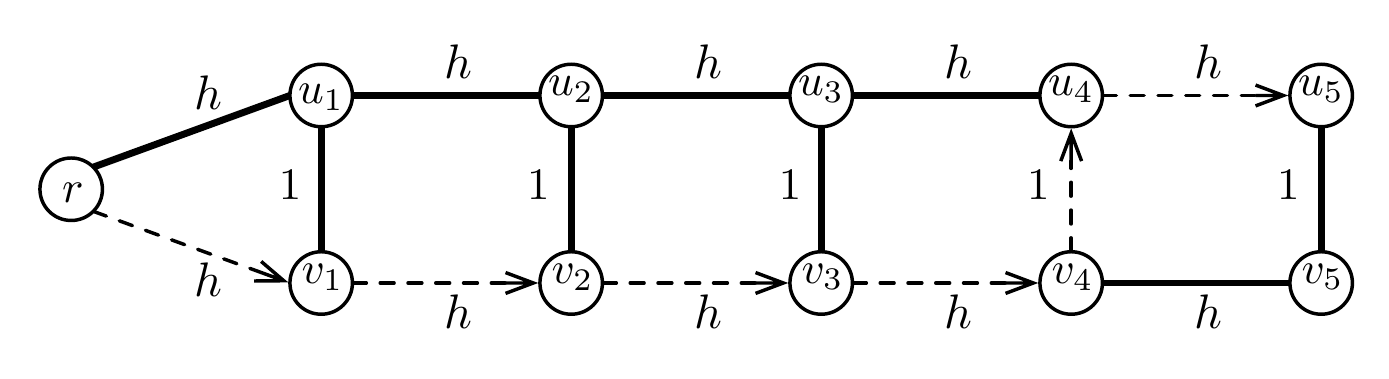}
\caption{The ladder graph $L_3$. Path $P_4$ is highlighted with dashed edges.}\label{ladder}
\end{center}
\end{figure}

Consider the paths $P_1,\ldots,P_{2h-1}$ given by
\[
P_i = \begin{cases}
r, u_1, u_2, \ldots, u_i, v_i, v_{i+1}, v_{i+2}, \ldots, v_{2h-1} & \text{if $i$ is odd} \\ 
r, v_1, v_2, \ldots, v_i, u_i, u_{i+1}, u_{i+2}, \ldots, u_{2h-1} & \text{if $i$ is even}
\end{cases}
\]
Each $P_i$ has regret exactly 1 and each node $w \neq r$ lies on precisely $h$ of
these paths. So setting $x_{P_i} = \frac{1}{h}$ for all $i=1,\ldots,2h-1$, and 
$x_P = 0$ for all other paths in $\C_1$ yields a solution that covers all the nodes in $V$
to an extent of 1 using $2-\frac{1}{h}$ paths.

The final instance consists of $ch$ copies of $L_h$ that share the root $r$ but are
otherwise disjoint. We set $k=c(2h-1)$. Taking the above fractional solution for each copy
of $L_h$, yields a feasible solution to \eqref{minmaxlp} when $R=1$. Now consider any
integer solution with maximum regret less than $2h-1$. Note that any rooted path with
regret less than $2h$ can only traverse nodes from a single ladder $L_h$. 
Also, as noted above, if a single path covers all the nodes of some copy of $L_h$, then
this path has regret at least $2h-1$. Therefore, the solution must use at least $2ch=k+c$
paths. 
\end{proof}

\section{Solving the configuration-style LPs \eqref{minklp} and \eqref{minreglp}} \label{lpsolve}  

\begin{proofof}{Lemma~\ref{klplem}}
We obtain an approximate solution to \eqref{minklp} (reproduced below), by considering the
dual problem \eqref{klpdual}, which has an exponential number of constraints.
Recall that $P$ indexes paths in $\mc C_R$.

\vspace{-5pt}
\noindent \hspace*{-3ex}
\begin{minipage}[t]{.49\textwidth}
\begin{alignat}{3}
\min & \quad & \sum_P x_P & \tag{P} \\
\text{s.t.} && \sum_{P:v\in P} x_P & \geq 1 \qquad && \forall v\in V \label{cover} \\  
&& x & \geq 0. \notag 
\end{alignat}
\end{minipage}
\quad \rule[-19ex]{1pt}{16ex}
\begin{minipage}[t]{.4\textwidth}
\begin{alignat}{3}
\max & \quad & \sum_v\pi_v & \tag{D} \label{klpdual} \\
\text{s.t.} && \sum_{v\in P}\pi_v & \leq 1 \qquad && \forall P \label{prewd} \\
&& \pi & \geq 0. \label{noneg}
\end{alignat}
\end{minipage}

\medskip \noindent
The $\pi_v$ dual variables correspond to the primal constraints \eqref{cover}. We show
that \eqref{klpdual} can be solved approximately, and hence \eqref{minklp} can be solved
approximately. Separating over the dual constraints \eqref{prewd} amounts to determining,
given rewards $\{\pi_v\}$ on the nodes, if there is a rooted path of regret at most $R$
that gathers reward more than 1. We try all possible locations $t$ for the end-node of
this path; for a given destination $t$, the above problem is an instance of \orient. 

Define $\Pc(\nu;a):=\{\pi: \eqref{prewd},\ \eqref{noneg},\ \sum_v\pi_v\geq a\nu\}$. 
Note that $\OPT$ is the largest $\nu$ such that $\Pc(\nu;1)\neq\es$. 
We use the $\gm_{\morient}$-approximation algorithm to give an {\em approximate}
separation oracle in the following sense. Given $\nu, \pi$, we either show that
$\pi/\gm_{\morient}\in\Pc(\nu;1)$, or we exhibit a hyperplane separating $\pi$ from
$\Pc(\nu;\gm_{\morient})$. Thus, for a fixed $\nu$, in polynomial time, the ellipsoid
method either certifies that $\Pc(\nu;\gm_{\morient})=\es$, or returns a point $\pi$ with
$\pi/\gm_{\morient}\in\Pc(\nu;1)$. The approximate separation oracle proceeds as
follows. We first check if $\sum_v\pi_v\geq\gm_{\morient}\nu$, \eqref{noneg} hold, and if
not, use the appropriate inequality as the separating hyperplane. 
Next, for each location $t\in V$, we run the $\gm_{\morient}$-approximation algorithm for
\orient specified by rewards $\{\pi_v\}$, $r$, $t$, and length bound $\dist_t+R$. If in
this process we ever obtain a rooted path $P$ with reward greater than 1, then $P\in\C_R$
and we return $\sum_{v\in P}\pi_v\leq 1$ as the separating hyperplane. Otherwise, for all
paths $P\in\C_R$, we have $\sum_{v\in P}\pi_v\leq\gm_{\morient}$ and so 
$\pi/\gm_{\morient}\in\Pc(\nu;1)$. 

We find the largest value $\nu^*$ (via binary search) such that the ellipsoid method run
for $\nu^*$ (with the above separation oracle) returns a solution $\pi^*$ with
$\pi^*/\gm_{\morient}\in\Pc(\nu^*;1)$; hence, $\nu^*\leq\OPT$. For any $\e>0$, running the
ellipsoid method for $\nu^*+\e$ yields a polynomial-size certificate for the emptiness of
$\Pc(\nu^*+\e;\gm_{\morient})$. This consists of the polynomially many violated
inequalities returned by the separation oracle during the execution of the ellipsoid
method and the inequality $\sum_v\pi_v\geq\gm_{\morient}(\nu^*+\e)$. By duality (or
Farkas' lemma), this means that here is a polynomial-size solution $x$ to \eqref{minklp}
whose value is at most $\gm_{\morient}(\nu^*+\e)$. Taking $\e$ to be 
$1/\exp(\text{input size})$ (so $\ln\bigl(\frac{1}{\e}\bigr)$ is polynomially bounded),
this also implies that $x$ has value at most
$\gm_{\morient}\nu^*\leq\gm_{\morient}\cdot\OPT$. 
\end{proofof}

\begin{proofof}{Lemma~\ref{reglplem}}
The proof is similar to the proof of Lemma~\ref{klplem}, but requires a more refined
argument. We again argue that an approximate solution to the dual LP \eqref{reglpdual}
can be computed efficiently. However, since the dual objective function contains negative
terms, even if our approximate separation oracle deems a point $(\pi,z)$ to be feasible, 
implying that some point in the neighborhood of $(\pi,z)$ is feasible for
\eqref{reglpdual}, we cannot necessarily claim any guarantee on the value of this dual
feasible solution relative to the value of $(\pi,z)$ (in fact its value may even be
negative!). Consequently, we will need a more refined notion of approximation for the dual
LP. This in turn will translate to a approximate solution for the primal, where the
approximation is in both the objective value and the satisfaction of the primal constraints.

\vspace{-5pt}
\noindent \hspace*{-5ex}
\begin{minipage}[t]{.49\textwidth}
\begin{alignat*}{2}
\min & \quad & \sum_{P\in\C} c^{\reg}(P)& x_P \tag{P2} \\
\text{s.t.} && \sum_{P\in\C:v\in P} x_P & \geq 1 \qquad \forall v\in V \\  
&& \sum_{P\in\C} x_P & \leq k \\
&& x & \geq 0.
\end{alignat*}
\end{minipage}
\quad \rule[-24ex]{1pt}{21ex}
\begin{minipage}[t]{.49\textwidth}
\begin{alignat}{2}
\max & \quad & \sum_v\pi_v & - kz \tag{D2} \label{reglpdual} \\
\text{s.t.} && \sum_{v\in P}\pi_v & \leq z+c^{\reg}(P) \qquad \forall P\in\C 
\label{newprewd} \\  
&& \pi, z & \geq 0. \label{newnoneg}
\end{alignat}
\end{minipage}

\medskip \noindent
Define 
$$
\Qc(\nu;a,b)\ \ :=\ \ 
\Bigl\{(\pi,z): \quad \eqref{newnoneg},
\quad \sum_{v\in P}\pi_v\leq z+\frac{c^{\reg}(P)}{b} \quad\forall P\in\C,
\quad \sum_v\pi_v-a\cdot kz\geq\nu\Bigr\}.
$$ 
So $\OPTR$ is the largest $\nu$ such that $\Qc(\nu;1,1)\neq\es$. 
For a set $S$ of nodes, let $\pi(S)$ denote $\sum_{v\in S}\pi_v$.
The separation problem for $\Qc(\nu;1,1)$ amounts to finding a rooted path that maximizes 
$\bigl(\pi(P)-c^{\reg}(P)\bigr)$.
Given a $\gm_{\mmep}$-approximation algorithm for unweighted \mep, one can obtain a path
$P$ such that 
$\pi(P)-\frac{c^{\reg}(P)}{\beta}\geq\max_{Q\in\C}\bigl(\pi(Q)-c^{\reg}(Q)\bigr)/\al$,
where $\al=(1-\e)^{-1}$ and $\beta=\gm_{\mmep}/(1-\e)$. 
Let $\pi_{\max}=\max_v\pi_v$.
We scale and round the rewards to $\pi'_v=\floor{\frac{\pi_v}{\e\pi_{\max}/n}}$.
We try every destination $t$ and run the $\gm_{\mmep}$-approximation algorithm on the
instance with rewards $\{\pi'_v\}$ (which involves making at most $\frac{n}{\e}$ copies of
a node), $r$, $t$. If an optimal solution $P^*$ ends at $t$ and achieves reward
$\Pi^*$ (note that $\Pi^*\geq\pi_{\max}$), then we obtain an $r$-$t$ path $P$ with reward at 
least $\Pi^*-\e\pi_{\max}\geq(1-\e)\Pi^*$ and $c^{\reg}(P)\leq\gm_{\mmep}\cdot c^{\reg}(P^*)$.

We use the above bicriteria approximation algorithm as follows. Given
$\nu,\pi,z$, we first check if $\sum_v\pi_v-\al kz\geq\nu$, \eqref{newnoneg} hold; if not,
we use the appropriate inequality as a separating hyperplane between $(\pi,z)$ and
$\Qc(\nu;\al,\beta)$. Next, we use the above algorithm to obtain a rooted path
$P$. If $\pi(P)-\frac{c^{\reg}(P)}{\beta}>z$, then we use
$\pi(P)\leq z+\frac{c^{\reg}(P)}{\beta}$ as a separating hyperplane between
$(\pi,z)$ and $\Qc(\nu;\al,\beta)$. Otherwise, we know that
$\pi(Q)-c^{\reg}(Q)\leq\al z$ for all $Q\in\C$, and so $(\pi,\al z)\in\Qc(\nu;1,1)$.

Thus, for a fixed $\nu$, the ellipsoid method either determines that
$\Qc(\nu;\al,\beta)=\es$ or returns a point $(\pi,\al z)\in\Qc(\nu;1,1)$. We find
the largest value $\nu^*$ for which the latter outcome is obtained. So
$\nu^*\leq\OPTR$. For any $\e>0$, running the ellipsoid method with $\nu=\nu^*+\e$ returns
a polynomial-size certificate for the emptiness of $\Qc(\nu;\al,\beta)$. 
Taking $\e=1/\exp(\text{input size})$, this shows that $\nu^*$ is an upper bound on the
maximum value of $\sum_v\pi_v-\al k z$ subject to \eqref{newnoneg} and the constraints
$\pi(P)-c^{\reg}(P)/\beta\leq z$ for every $P\in\C'$, where $\C'\sse\C$ is 
the polynomial-size collection of paths corresponding to the path-inequalities returned by
the approximate separation oracle. Taking the dual, we obtain the following LP:
$$
\min\quad\sum_{P\in\C'}\tfrac{c^{\reg}(P)}{\beta}\cdot x_P \qquad
\text{s.t.} \qquad \sum_{P\in\C':v\in P} x_P \geq 1 \quad \forall v\in V, 
\qquad \sum_{P\in\C'} x_P \leq \al k, \qquad x \geq 0.
$$
whose optimal value is at most $\nu^*$. The optimal solution $x^*$ to the above LP
satisfies the desired properties.
\end{proofof}

\end{document}